    \algrenewcommand\algorithmicrequire{\textbf{Input:}}
    \algrenewcommand\algorithmicensure{\textbf{Output:}}
\newcommand{\pf}[1]{\text{Pref}(\mathcal #1)}
\newcommand{\mt}[1]{\mathcal #1} 
\newcommand{\ndashv}{\;\cancel\dashv\;}
\newcommand{\la}[1]{\mathcal L(\mathcal #1)}
\definecolor{cgreen}{rgb}{0.0, .45, 0.0}
\renewenvironment{proof}[1][\proofname]{{\noindent\bfseries #1. }}{\qed\vskip2mm}
\begin{document}
\title{Ordering Regular Languages and Automata: Complexity}
%\titlerunning{Abbreviated paper title}
% If the paper title is too long for the running head, you can set
% an abbreviated paper title here
%
\author{Giovanna D'Agostino \and
Davide Martincigh \and
Alberto Policriti}
\authorrunning{G. D'Agostino et al.}
% First names are abbreviated in the running head.
% If there are more than two authors, 'et al.' is used.
%
\institute{University of Udine, Italy}
\maketitle              % typeset the header of the contribution

\begin{abstract}
Given an order of the  underlying alphabet we can lift it to the states of a finite deterministic 
automaton: to compare states we use the order of the strings reaching them. When the order on 
strings is the co-lexicographic one \emph{and} this order turns out to be total, the DFA is called 
Wheeler. This recently introduced class of automata---the \emph{Wheeler automata}---constitute an 
important data-structure for languages, since it allows the design and implementation of a very 
efficient tool-set of storage mechanisms for the transition function, supporting a large variety of 
substring queries.

In this context it is natural to consider the class of regular languages accepted by Wheeler automata, i.e. the 
Wheeler languages. An inspiring result in this area is the following: it has been shown that, as opposed to the general case, the classic determinization  by powerset construction is \emph{polynomial} on Wheeler automata. As a consequence, most  classical 
problems, when considered on this class of automata,  turn out to be ``easy''---that is, solvable in polynomial time.

In this paper we consider  computational problems related to Wheelerness, but starting from 
non-deterministic automata. We also consider the case of \emph{reduced} non-deterministic ones---a class of NFA where recognizing Wheelerness is still polynomial, as for DFA's. Our collection of results shows that moving towards non-determinism is, in most cases, a dangerous path leading quickly to 
intractability.

Moreover, we start a study of  ``state complexity'' related to Wheeler DFA and languages, proving that
the classic construction for the intersection of languages turns out to be computationally simpler on Wheeler DFA than in the general case. We also provide a construction for the minimum Wheeler DFA recognizing a given Wheeler language. 

\keywords{Regular languages \and Finite Automata \and Wheeler Automata \and Ordering Languages.}
\end{abstract}

\section{Introduction}

A simple and natural way of efficiently storing and composing regular languages presented by their accepting automata is by exploiting some kind of order imposed on their collection of states.  After all, ordering a collection of objects is very often a way to shed light on their internal structure and ease their manipulation. 

One  way of  ordering the states of a finite automaton is to consider their incoming languages---that is, the set of strings reaching the given states---and proposing a way to compare them.  If we fix an order on the underlying alphabet $\Sigma$  and  consider  states as ending points of strings, we are naturally invited to start from their \emph{last} character (the final one on the path reaching the state) and proceed backwards. This results in using the so-called \emph{co-lexicographic} order over $\Sigma^*$. 
Since incoming languages of  different states  of a deterministic automaton $\mt D$ do  not intersect, the co-lexicographic order can easily be lifted to the states of  $\mt D$:  $q\leq_{\mt D} q'$ if all strings of the incoming language of $q$ are co-lexicographically smaller than any string of the incoming language of $q'$.
This order turns out to be very useful, allowing  to store $\mathcal D$ using a \emph{succint index}, that is, a space-saving data structure that supports fast matching queries \cite{NN}. It turns out that the complexity of constructing such an index      depends 
on the {\em width} of the order $\leq_{\mt D}$ (see \cite{NN}), the best possible case being the one where  $\leq_{\mt D}$ is a total order.
In the latter case $\mt D$ is called a {\em Wheeler}  automaton, and in \cite{ADPP} it has been proved that recognizing Wheelerness is an easy task over DFA's.

When moving from DFA's to NFA's things become  more complicate and two possible approaches were considered: 
\begin{itemize}
    \item The first one  consists in identifying some  {\em local} properties of $\leq_\mt D$, used to define a  general notion of a    {\em co-lex} (possibly partial) order over the states of an NFA (see \cite{ADPP2}). Turning back to  DFA's, one can  easily prove that  $\leq_{\mt D}$ is the maximum co-lex (partial)  order over  $\mt D$. In general, co-lex orders over NFA's can  still   be used  for  indexing, with index-construction complexity parametric on the \emph{width} (i.e. the maximum length of an anti-chain in $\leq_\mt D$)  of the co-lex order. Unfortunately, co-lex orders are not as well behaving on NFA's as they are on  DFA's: over an NFA we cannot guarantee  the existence of a  maximum co-lex order and also finding a maximal one turns out to be an  NP-complete problem \cite{gibney2020simple}. To overcome such difficulties, in \cite{ADPP2}  a new class of automata was introduced: the {\em reduced NFA's}. On reduced NFA's  distinguished states have different incoming languages.  While allowing non-determinism, the reduced NFA's  share    with DFA's the good behaviour of co-lex orders: any reduced NFA possesses a polynomial time computable, maximum co-lex order, so that recognizing Wheelerness is no longer an NP-complete problem over them. 
   \item The second approach  consists in  generalizing the definition of $\leq_{\mt D}$ over NFA's states, by defining an order depending directly on the incoming languages. Such generalization must now take care of the fact that incoming languages may intersect. 
Actually, since in an  NFA $\mt A$ there could be    different states with the same incoming language, when lifting the order to the state of $\mt A$ we must be careful not to identify states with the same incoming languages.
%when defining the order anti-symmetry is no longer  guaranteed and only   a \emph{quasi}-order $\leq_\mt A$ can be defined. 
\end{itemize}

As far as the first approach is concerned, in this paper we prove that deciding whether a language is Wheeler, i.e. whether it is recognized by an NFA with a total co-lex order, is PSPACE-complete. This remains the case even if we restrict to reduced NFA's. Note that   the same problem using a recognizing DFA was proved to be easy (polynomially computable) in \cite{ADPP}.

Regarding the second approach, even though the proposed partial order was shown to be useful for indexing \cite{JACM},  we first need to compute it. In this paper we prove that the
  task of computing $\leq_{\mt A}$ is  difficult over NFA's, even on the class of reduced NFA's. Actually, as a corollary of this fact we also see that recognizing reduced-ness is a difficult task.  The proof relies on the fact that the universality problem is PSPACE-complete over reduced NFA (as for the whole class of non-deterministic automata). 
  
 In the last part of the paper we go back to DFA's and tackle the problem of establishing the state complexity of the intersection of two Wheeler automata. We prove that equipping the input automata with an order on their collection of states allows us to do much better than in the general case: the standard procedure now turns out of a complexity proportional to the sum the sizes of the input automata. 
 Our final result regards the (difficult) problem of computing  the minimum-size Wheeler automaton, starting from the minimum automaton accepting a given Wheeler language.

\subsection{Preliminaries}

First of all, we fix some notation. 
Given a total order $(Z,<)$ we say that a subset $I\subseteq Z$ is an \emph{interval} if, for any $x,y,z\in Z$ with $x<y<z$,  if $x,z\in I$ then $y\in I$.
Let $\Sigma$ denote a finite alphabet endowed with a total order $(\Sigma,\prec)$. We denote by $\Sigma^*$ the set of finite strings over $\Sigma$, with $\varepsilon$ being the empty string. We extend the order $\prec$ over $\Sigma$ to the \emph{co-lexicographic} order $(\Sigma^*, \prec)$, where $\alpha \prec \beta$ if and only the reverse of $\alpha$, i.e. $\alpha$ read from the right to the left, precedes lexicographically the reverse of $\beta$. Given two strings $\alpha, \beta \in \Sigma^*$, we denote by $\alpha \dashv \beta$ the property that $\alpha$ is a suffix of $\beta$. 
For a language $\mathcal L \subseteq \Sigma^*$, we denote by $\pf L$ the set of prefixes of strings in $\mathcal L$.
We denote by $\mathcal A = (Q, q_0, \delta, F, \Sigma)$ a finite automaton (NFA), with $Q$ as set of states, $q_0$ initial state, $\delta: Q \times \Sigma \rightarrow 2^Q$ transition function, and $F \subseteq Q$ final states. 
The size of $\mathcal A$, denoted by $|\mathcal A|$, is defined to be $|Q|$.
An automaton is deterministic (DFA) if $|\delta(q, a)| \le 1$, for all $q\in Q$ and $a\in \Sigma$. As customary, we extend $\delta$ to operate on strings as follows: for all $q\in Q$, $a\in \Sigma$ and $\alpha \in \Sigma^*$
\[
\delta(q,\varepsilon) = \{q\}, \qquad \delta(q,\alpha a)=\bigcup_{v\in \delta(q,\alpha)} \delta(v,a). 
\]
We denote by $\la A = \{\alpha \in \Sigma^*:\, \delta(q_0,\alpha) \cap F \ne \emptyset\}$ the language accepted by the automaton $\mathcal A$.
We assume that every automaton is \emph{trimmed}, that is, every state is reachable from the initial state and every state can reach at least one final state. Note that this assumption is not restrictive, since removing every state not reachable from $q_0$ and every state from which is impossible to reach a final state from an NFA, can be done in linear time and does not change the accepted language.
It immediately follows that: 
\begin{itemize}
    \item there might be only one state without incoming edges, namely $q_0$; 
    \item every string that can be read starting from $q_0$ belongs to $\pf L$.
\end{itemize}
We will often make use of the notion of the \emph{incoming language} of a state of an NFA, defined as follows.

\begin{definition}[Incoming language]
 Let $\mathcal A=(Q,q_0,\delta, F,\Sigma)$ be an NFA and let $q\in Q$. The \emph{incoming language} of $q$, denoted by $I_q$, is the set of strings that can be read on $\mathcal A$ starting from $q_0$ and ending in $q$. In other words, $I_q$ is the language recognized by the automaton $\mathcal A_q=(Q,q_0,\delta, \{q\},\Sigma)$.
\end{definition}

%Given a finite set $Q$ totally ordered by $<$, we say that $I \subseteq Q$ is an \emph{interval} if and only if for all $q,q',q'' \in Q$ with $q\le q'\le q''$, if $q,q''\in I$ then $q' \in I$. We denote by $[q_{\min}, q_{\max}]$ a generic interval, where $q_{\min}$ ($q_{\max}$) is the minimum (maximum) element of $I$ with respect to <.

The class of Wheeler automata has been recently introduced in \cite{Gagie}. An automaton in this class has the property that there exists a total order on its states that is propagated along equally labeled transition. Moreover, the order must be compatible with the underlying order of the alphabet:

\begin{definition}[Wheeler Automaton]
\label{WheelerAutomaton}
A Wheeler NFA (WNFA) $\mathcal{A}$ is an NFA $(Q,q_0,\delta,F,\Sigma)$  endowed with a binary relation
<, such that: $(Q,<)$ is a linear order having the initial state $q_0$ as minimum, $q_0$ has no in-going edges, and
the following two (Wheeler) properties are satisfied. Let $v_1 \in \delta(u_1, a_1)$ and $v_2 \in \delta(u_2, a_2)$:
\begin{enumerate}[label = (\roman*)]
    \item $a_1 \prec a_2 \,\rightarrow \, v_1 < v_2$ 
    \item $(a_1 = a_2 \wedge u_1 < u_2) \,\rightarrow \, v_1 \le v_2$.
\end{enumerate}
A Wheeler DFA (WDFA) is a  deterministic WNFA.
%  A Wheeler DFA (WDFA) is a WNFA in which the cardinality of $\delta(u,a)$ is always less than or equal to
%one.
\end{definition}

\begin{remark}
A consequence of Wheeler property (i) is that $\mt A$ is \emph{input-consistent}, that is all transitions
entering a given state $u \in Q$ have the same label: if $u \in \delta(v,a)$ and $u \in \delta(w,b)$, then $a=b$. Therefore the function $\lambda: Q\setminus \{q_0\} \rightarrow \Sigma$ that associate to each state the unique label of its incoming edges is well defined. For the state $q_0$, the only one without incoming edges, we set $\lambda(q_0) := \#$. 
\end{remark}

In Figure \ref{w} is depicted an example of a WDFA.

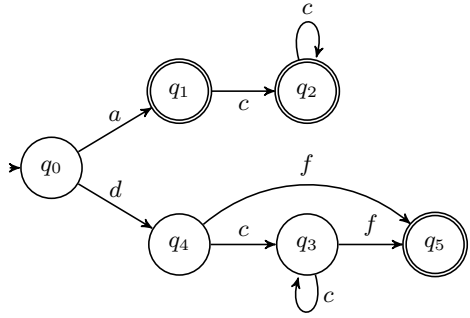
\begin{figure}[ht]%
\begin{center}
\begin{tikzpicture}[->,>=stealth', semithick, initial text={}, auto, scale=.34]
 \node[state, label=above:{}, initial] (0) at (0,0) {$q_0$};
 \node[state, label=above:{}, accepting] (1) at (5,3) {$q_1$};
 \node[state, label=above:{}, accepting] (2) at (10,3) {$q_2$};
 \node[state, label=above:{}] (4) at (5,-3) {$q_4$};
 \node[state, label=above:{}] (3) at (10,-3) {$q_3$};
 \node[state, label=above:{}, accepting] (5) at (15,-3) {$q_5$};

\draw (0) edge [above] node [above] {$a$} (1);
\draw (1) edge [below] node [below] {$c$} (2);
\draw (2) edge [loop above] node [above] {$c$} (N1);
\draw (0) edge [bend left=0, above] node [bend right, above] {$d$} (4);
\draw (4) edge [above] node [above] {$c$} (3);
\draw (3) edge [loop below] node [above, xshift=8] {$c$} (3);
\draw (3) edge[above] node [above] {$f$} (5);
\draw (4) edge[bend left=40, above] node {$f$} (5);
\end{tikzpicture}
\end{center}
    \caption{A WDFA's $\mathcal A$ recognizing the language $\mathcal L_d = ac^*+dc^*f$. Condition (i) of Definition \ref{WheelerAutomaton} implies input consistency and induces the partial order $q_1 < q_2, q_3 < q_4 < q_5$. From condition (ii) it follows that $\delta(q_1, c) \le \delta(q_4, c)$, thus $q_2 < q_3$. Therefore, the only order that could make $\mathcal A$ Wheeler is $q_0 < q_1 < q_2 < q_3 < q_4 < q_5$. The reader can verify that condition (ii) holds for each pair of equally labeled edges.}%
    \label{w}%
\end{figure}

%\begin{rem}
%A consequence of Wheeler property (i) is that A is \emph{input-consistent}, that is all transitions
%entering a given state $u \in Q$ have the same label: if $u \in \delta(v,a)$ and $u \in \delta(w,b)$, then $a=b$. Therefore the function $\lambda: Q\rightarrow \Sigma$ that associate to each state the unique label of its incoming edges is well defined. For the state $q_0$, the only one without incoming edges, we set $\lambda(q_0) := \#$. 
%\end{rem}

\begin{remark}
Note that, for a fixed (i.e. constant in size) alphabet, requiring an automaton to be \emph{input-consistent} is not computationally demanding. 
In fact, given an NFA $\mathcal A=(Q,q_0,\delta,F,\Sigma)$ we can build an equivalent, input-consistent one just by creating, for each state $q\in Q$, at most $|\Sigma|$ copies of $q$, that is, one for each different incoming label of $q$.
This operation can be performed in $O\big(|Q|\cdot |\Sigma|\big)$ time.
\end{remark}

In \cite{Gagie} it is shown that WDFA's have a property called \emph{path coherence}: let $\mathcal A = (Q,q_0,\delta,F,\Sigma)$ be a WDFA according to the order $(Q,<)$. 
Then for every interval of states $I=[q_i, q_j]$ and for all $\alpha \in \Sigma^*$, the set $J$ of states reachable starting from any state of $I$ by reading $\alpha$ is also an interval.
\emph{Path coherence} allows us to transfer the order < over the states of $Q$ to the co-lexicographic order $\prec$ over the strings entering the states: two states $q$ and $p$  satisfy $q < p$  if and only if  $ \forall \alpha \in I_q \; \forall \beta \in I_p (\alpha\prec \beta)$  holds (again proved in \cite{ADPP}).  

%**With abuse of notation, we extend $\prec$ to subsets of $\Sigma^*$ as follows: $U \prec V$, with $U,V \subseteq \Sigma^*$, if and only if $U\ne V$ and $\alpha\prec \beta$ for each $\alpha\in U,\ \beta\in V$ such that $\{\alpha,\beta\} \notin U \cap V$.
%Then, two states $q$ and $p$ with $I_q \ne I_p$ satisfy $q < p$ if and only $I_q \prec I_p$ (again proved in \cite{ADPP}).**

A  consequence of this fact is that a WDFA admits an unique order of its states that makes it Wheeler and this order is univocally determined by the co-lexicographic order of any string entering its states (the order $\leq_\mt D$ mentioned in the introduction).
This result is important for two different reasons. First of all, it makes possible to decide in polynomial time whether a DFA is Wheeler: for each state $q$, pick a string $\alpha_q$ entering it and order the states reflecting the co-lexicographic order of the strings \{$\alpha_q:\, q \in Q\}$; then check if the order satisfies the Wheeler conditions. 
Secondly, it is the key to adapt Myhill-Nerode Theorem to Wheeler automata. We recall the following definition.

\begin{definition}[Myhill-Nerode equivalence]
\label{equivL}
Let $\mathcal L \subseteq \Sigma^*$ be a language. Given a string $\alpha \in \Sigma^*$, we define the \emph{right context} of $\alpha$ as  
\[
\alpha^{-1}\mathcal L := \{\gamma \in \Sigma^*:\, \alpha\gamma \in \mathcal L\},
\]
and we denote by $\equiv_\mathcal L$ the Myhill-Nerode equivalence on $\pf L$ defined as
\[
\alpha \equiv_\mathcal L \beta \iff \alpha^{-1}\mathcal L = \beta^{-1}\mathcal L.
\]
\end{definition}

The (classic) Myhill-Nerode Theorem, among many other things, establishes a bijection between equivalence classes of $\equiv_\mathcal L$ and the states of the minimum DFA recognizing $\mathcal L$. This minimum automaton is also unique up to isomorphism and a similar result, fully proved in \cite{ADPP2}, holds for Wheeler languages as well. 
In order to state such an analogous of Myhill-Nerode Theorem for  Wheeler languages, the equivalence $\equiv_\mathcal L$ is replaced by the equivalence $\equiv_\mathcal L^c$ defined below.

\begin{definition}
The input consistent, convex refinement $\equiv_\mathcal L^c$ of $\equiv_\mathcal L$ is defined as follows. $\alpha \equiv_\mathcal L^c \beta$ if and only if
\begin{itemize}
    \item $\alpha \equiv_\mathcal L \beta$,
    \item $\alpha$ and $\beta$ end with the same character,
    \item for all $\gamma \in \pf L$, if $\min(\alpha, \beta) \preceq \gamma \preceq \max(\alpha,\beta)$, then $\alpha \equiv_\mathcal L\gamma \equiv_\mathcal L \beta$.
\end{itemize}
\end{definition}

The  Myhill-Nerode Theorem for Wheeler languages proves that there exists a minimum (in the number of states) WDFA recognizing $\mathcal L$. As in the classic case,  states of the minimum automaton are, in fact, $\equiv_\mathcal L^c$-equivalence classes, this time consisting of \emph{intervals} of strings. Also, such WDFA is unique up to isomorphism.   

 \begin{theorem}\label{wdeterminization}  (see \cite{ADPP2}) 
If $ \mathcal A=(Q, s, \delta, <,F) $ is a WNFA with $|Q|=n$  and $\mathcal L = \mathcal L(\mathcal A)$,   then there exists a unique minimum-size  WDFA $ \mathcal B $ with  $2n-1-|\Sigma|$ states such that $\mathcal L= \mathcal L(\mathcal B)$.
\end{theorem}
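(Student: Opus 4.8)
The plan is to establish the size bound via a careful analysis of the classical powerset (subset) construction applied to $\mathcal A$; the uniqueness of the \emph{minimum-size} WDFA is not something to reprove, since it is precisely the Myhill--Nerode theorem for Wheeler languages recalled above (see \cite{ADPP2}), whose states are the $\equiv_{\mathcal L}^c$-classes. So it suffices, writing $\mathcal L=\la A$ and $\mathcal A=(Q,q_0,\delta,F,\Sigma)$ with Wheeler order $(Q,<)$ and $|Q|=n$, to build \emph{some} WDFA $\mathcal B$ with $\la B=\mathcal L$ and at most $2n-1-|\Sigma|$ states: the minimum one is then no larger. Throughout I assume, as is standard and harmless for a state-complexity statement, that every letter of $\Sigma$ labels at least one transition of $\mathcal A$.

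First I would run the subset construction, retaining only reachable (and, after trimming, co-reachable) subsets, and show that every reachable subset is an \emph{interval} of $(Q,<)$. This is a short induction on the length of the reading string: $\{q_0\}$ is a singleton interval, and if $S$ is an interval and $a\in\Sigma$ then $\delta(S,a)=\bigcup_{u\in S}\delta(u,a)$ is again an interval --- by input-consistency and Wheeler property (i) all its states lie in the block $Q_a:=\{q\in Q:\lambda(q)=a\}$, and Wheeler property (ii) forces any state of $Q_a$ lying strictly between two states of $\delta(S,a)$ to be the $a$-image of some state between two states of $S$, hence, by convexity of $S$, of a state of $S$ itself. This is essentially the path-coherence property, now exploited on an NFA rather than a DFA.

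The core of the argument is the following monotonicity statement, which I would prove by induction on $|\beta|$: for all $\alpha\prec\beta$ in $\pf L$, one has $\min\delta(q_0,\alpha)\le\min\delta(q_0,\beta)$ and $\max\delta(q_0,\alpha)\le\max\delta(q_0,\beta)$ in $(Q,<)$. When $\alpha$ and $\beta$ end with different letters this is immediate from Wheeler property (i), since the blocks $Q_a$ are consecutive in $<$; when they end with the same letter $a$, write $\alpha=\alpha' a$ and $\beta=\beta' a$, note that $\alpha\prec\beta$ is equivalent to $\alpha'\prec\beta'$, apply the inductive hypothesis to the strictly shorter pair $\alpha'\prec\beta'$, and ``push'' the two inequalities through the $a$-transition using Wheeler property (ii). Two consequences follow. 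First, for each reachable interval $S$ the set $\{\alpha\in\pf L:\delta(q_0,\alpha)=S\}$ is a co-lexicographic interval of $\pf L$; since distinct reachable subsets have disjoint --- hence $\prec$-comparable --- incoming languages, ordering the states of $\mathcal B$ by the co-lex position of their incoming strings yields a linear order under which $\mathcal B$ satisfies Definition~\ref{WheelerAutomaton} (properties (i) and (ii) are then routine, again using input-consistency and that $\mathcal B$ is deterministic), so $\mathcal B$ is a WDFA, and $\la B=\mathcal L$ by correctness of the subset construction. Second, within each block $Q_a$ of size $n_a$, as $\alpha$ ranges co-lexicographically over the strings of $\pf L$ ending in $a$, the interval $\delta(q_0,\alpha)\subseteq Q_a$ only ``slides to the right'': at every change at least one of its two endpoints strictly increases inside $Q_a$, and each endpoint can strictly increase at most $n_a-1$ times, so $Q_a$ contains at most $2n_a-1$ reachable intervals. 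Adding the initial state $\{q_0\}$ and using $\sum_a n_a=n-1$ gives $|\mathcal B|\le 1+\sum_a(2n_a-1)=2n-1-|\Sigma|$; since any WDFA for $\mathcal L$ is at least as large as the minimum one, that minimum WDFA has at most $2n-1-|\Sigma|$ states and, by the Myhill--Nerode theorem for Wheeler languages, is unique up to isomorphism.

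I expect the main obstacle to be exactly the monotonicity statement: the ``only intervals are reachable'' half is easy, but the subtle point is that the invariant must be carried for \emph{all} prefixes simultaneously, not only for those ending with a fixed letter, so that the induction step can strip the last character and invoke the hypothesis on strictly shorter strings of possibly different last letter --- this is where the co-lexicographic definition of $<$ genuinely enters and where Wheelerness is shown to be preserved by determinization. Everything downstream (the verification that $\mathcal B$ is Wheeler, the endpoint-counting yielding $2n-1-|\Sigma|$, and the appeal to the Myhill--Nerode characterisation for minimality and uniqueness) is then bookkeeping.
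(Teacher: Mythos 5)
The paper does not prove this theorem at all: it is imported verbatim from \cite{ADPP2}, so there is no in-paper argument to compare against. Your reconstruction is essentially the standard proof from that reference --- powerset states are intervals of $(Q,<)$, the interval $\delta(q_0,\alpha)$ restricted to a block $Q_a$ slides monotonically as $\alpha$ grows co-lexicographically, and counting endpoint increments gives $1+\sum_a(2n_a-1)=2n-1-|\Sigma|$, with uniqueness of the minimum WDFA delegated to the Wheeler Myhill--Nerode theorem --- and the counting is correct. Two small points of care: the step where you ``push'' $\min/\max\,\delta(q_0,\alpha')\le\min/\max\,\delta(q_0,\beta')$ through the $a$-transition does not follow from Wheeler property (ii) and the endpoint inequalities alone (when the two minima coincide, (ii) gives no constraint); you additionally need that the source sets are intervals, so that a state of $\delta(q_0,\beta')$ lying below $\max\delta(q_0,\alpha')$ already belongs to $\delta(q_0,\alpha')$ --- you have that fact available, but it should be invoked explicitly there. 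Also, your reading of ``with $2n-1-|\Sigma|$ states'' as ``at most'', and the assumption that every letter of $\Sigma$ actually occurs on some transition, are both needed for the statement to be true as written and are the intended interpretation.
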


Starting from the (possibly non Wheeler) minimum DFA of  a Wheeler language $\mathcal L$, we will give an algorithm   constructing   the minimum Wheeler  automaton for the language. This automaton   can be described   as follows (see \cite{ADPP2}): $\mathcal B =(Q', \delta', q_0', F')$ where \\ \ \\ - $Q' = \{[\alpha]_{\equiv_\mathcal L^c}  : \alpha\in \pf L\}$;\\ 
- $q_0 =  [\epsilon]_{\equiv_\mathcal L^c}$;\\
- $\delta'([\alpha]_{\equiv_\mathcal L^c}, a)=[\alpha a]_{\equiv_\mathcal L^c}$,~ for all $\alpha \in \pf L$, $a\in \Sigma$,\\
- $F'=\{[\alpha]_{\equiv_\mathcal L^c}
  :  \alpha \in \mathcal L\}$.

\section{Reduced NFA's meets Wheelerness}

\subsection{Automata}
Among the two possible ways of presenting regular languages by automata, that is DFA's or  NFA's, in general, computational problems tend to be significantly harder when referred to the non-deterministic class. Typical examples are: checking emptiness, computing the intersection, checking universality and much more. 
In the realm of Wheeler automata and languages a new class emerges: the class of reduced automata, formally defined below.

%In this section we focus our attention on \emph{reduced} NFA's, i.e. a subset of NFA's which has the same expressive power.

\begin{definition}
An NFA $\mathcal A=(Q,S,\delta, F,\Sigma)$ is called \emph{reduced} if $q\ne p$ implies $I_q \ne I_p$.
\end{definition}
Clearly, the class of reduced NFA's contains  properly the class of DFA's.
When Wheelerness is concerned, the class of reduced NFA's is interesting  because it has been proved that deciding whether an NFA is Wheeler is an NP-complete problem \cite{NP}, whereas deciding whether a \emph{reduced} NFA is Wheeler turns out to be in P \cite{ADPP2} as it is for DFA's \cite{ADPP}. 
Clearly, any NFA can be turned into a reduced one simply by merging all the states that recognize the same incoming language.
Finding states to be merged is complex: the language-equivalence problem for NFA's  can easily be proved  as complex as deciding whether two states of an NFA recognize the same incoming language and, therefore, the latter is PSPACE-complete. 
%In fact, given two NFA  $\mathcal A_1, \mathcal A_2$ (which we may suppose to have only one final state, $q_1,q_2$, respectively),  we can construct a new NFA $\mathcal A$  using a new initial state $q_0$ connected to the initial states in   $\mathcal A_1, \mathcal A_2$ by $\epsilon$-moves, and having $q_1,q_2$ as final states. It follows that $\la {A_1}=\la {A_2}$ if and only if $I_{q_1}=I_{q_2}$ in $\mathcal A$.  Since the equality of two regular language is  PSPACE-complete, deciding whether two states of an NFA recognize the same incoming language is  PSPACE-hard. In a similar way, we can check that the problem is actually in PSPACE.
 
%The complexity of switching from An NFA to a reduced one, as it happens when transitioning from NFA's to DFA's, causes the simplification of some NFA-problems when restricted to reduced NFA's. 

%This is due to the fact that for reduced NFA's there is no redundancy in the set of the languages recognized by their states and this pairs well with the fact that in a WNFA the set of languages generated by the states form a prefix-suffix family with respect to the co-lexicographic order. \notaG{Usiamo le famiglie prefisso-suffisso per cui la definizione verrà data nelle premesse?}

A natural question is now whether switching from NFA's to  reduced NFA's simplifies some otherwise difficult problem. 
In this section we prove that this is not always the case: some problems remain hard even when restricted to the class of reduced NFA's.

\begin{lemma}
\label{reduced universality}
The universality problem for reduced NFA's is PSPACE-complete.
\end{lemma}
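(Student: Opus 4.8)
The plan is to prove membership in PSPACE and PSPACE-hardness separately. Membership is immediate: reduced NFA's are in particular NFA's, universality of an NFA is decidable in polynomial space by the standard on-the-fly subset construction (guess a rejecting string letter by letter while keeping only the current set of reachable states), and checking that a given NFA is reduced is itself in PSPACE, since it amounts to finitely many tests of the form $I_q = I_p$, each an NFA language-equivalence test. So the whole content of the statement is the lower bound.

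For hardness I would reduce, in polynomial time, from universality of an arbitrary NFA, which is a classical PSPACE-complete problem. Given a trimmed NFA $\mathcal A=(Q,q_0,\delta,F,\Sigma)$ with $Q=\{q_0,\dots,q_{n-1}\}$, I build $\mathcal A'$ over the enlarged alphabet $\Sigma'=\Sigma\cup\{\$_1,\dots,\$_{n-1}\}$ ($n-1$ fresh symbols) with states $Q\cup\{U\}$ and $F'=F\cup\{U\}$, keeping all transitions of $\mathcal A$ and adding: a \emph{naming} edge $q_0\xrightarrow{\,\$_i\,}q_i$ for each $i\in\{1,\dots,n-1\}$; an edge $q\xrightarrow{\,\$_i\,}U$ for every original state $q\in Q$ and every fresh symbol; and self-loops on $U$ labelled by every letter of $\Sigma'$. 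The state $U$ is a universal sink, reached as soon as a fresh letter is read after a $\Sigma$-prefix on which $\mathcal A$ does not get stuck, while the naming edges ensure that $\$_i$ lies in the incoming language of $q_i$ and of no other original state.

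Two things then have to be checked. First, $\mathcal A'$ is reduced: since an original state can be entered along a word containing a fresh letter only through a naming edge, one computes $\delta_{\mathcal A'}(q_0,\$_i)=\{q_i,U\}$, hence $\$_i\in I_{q_i}$ but $\$_i\notin I_{q_j}$ for $j\neq i$; moreover $\$_i\$_i\in I_U\setminus I_{q_i}$ separates $q_i$ from $U$, and $\varepsilon$ separates $q_0$ from every other state (and $\mathcal A'$ is trimmed, as $U$ is final and every original state reaches $U$). Second, $\mathcal L(\mathcal A')=(\Sigma')^*$ iff $\mathcal L(\mathcal A)=\Sigma^*$: on words over $\Sigma$ the two automata behave identically (neither $U$ nor any naming target is reachable without a fresh letter), so a word rejected by $\mathcal A$ is rejected by $\mathcal A'$; conversely, if $\mathcal A$ is universal and $w\in(\Sigma')^*$, then either $w\in\Sigma^*$ and $w$ is accepted as before, or $w=\alpha\,\$_i\,\gamma$ with $\alpha\in\Sigma^*$ its longest fresh-free prefix, and universality of $\mathcal A$ makes $\delta_{\mathcal A}(q_0,\alpha)$ a non-empty set of original states, from any of which reading $\$_i$ reaches the accepting sink $U$, so $w$ is accepted. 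The reduction adds one state, $n-1$ letters and $O(n^2)$ transitions, hence is polynomial.

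I expect the only genuinely delicate step to be the reducedness argument — one must be sure that no two incoming languages among $q_0,\dots,q_{n-1},U$ accidentally coincide — and this is precisely why each state gets its own private symbol $\$_i$ rather than, say, a logarithmic-length binary address: the latter would force intermediate ``gadget'' states lying on a common code prefix to share a (singleton) incoming language, breaking reducedness. Everything else is a direct unwinding of the definitions.
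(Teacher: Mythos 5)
Your proof is correct, and the hardness reduction is a genuinely different construction from the paper's. Both arguments reduce from NFA universality and both enforce reducedness by endowing each original state $q_i$ with a ``private'' witness string in its incoming language, but the trade-off is inverted: the paper adds a single fresh letter $d$ together with a chain of $n-1$ fresh states $p_1,\dots,p_{n-1}$ and $d$-self-loops on all original states, so that $d^i\in I_{q_i}\setminus I_{q_j}$ for $i<j$ and $I_{p_i}=d^id^*$; you instead add a single fresh state (the accepting sink $U$) together with $n-1$ fresh letters $\$_1,\dots,\$_{n-1}$, so that $\$_i$ is the private witness for $q_i$ and $\$_i\$_i$ separates $q_i$ from $U$. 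What the paper's version buys is that the alphabet grows by only one symbol, so hardness is established even under a near-fixed alphabet; what your version buys is that only one state is added, the correctness of the reduction ($U$ is an all-accepting sink reachable exactly after a live $\Sigma$-prefix) is arguably more transparent, and you do not need the paper's preliminary normalization ``w.l.o.g.\ $q_0$ has no incoming edges'' (your separation of $q_0$ uses only $\varepsilon\in I_{q_0}$, which holds for any single-initial-state NFA). Since the lemma as stated places no bound on the alphabet, and its later uses in the paper (Theorem~\ref{prec not eq} and the reducedness corollary) accept reduced NFAs over arbitrary alphabets, your reduction establishes exactly what is needed; the only cosmetic caveat is the degenerate case $n=1$, where no fresh letters exist and $U$ is unreachable, which you should dispatch separately (universality of a one-state NFA is checkable directly).
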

\begin{proof}
This problem belongs to PSPACE, since it is a restriction of the universality problem over generic NFA's. 
To prove the completeness, we show a reduction from the universality problem.

%Let $\mathcal A=(Q,S,\delta, F,\Sigma)$ be An NFA. 
Given an NFA, we can assume w.l.o.g. that there is only one initial state without incoming edges, hence %\notaG{questo \`e ben noto, non occorre spiegarlo}
%In fact, if it is not the case we can add a new state $q_0$, that will become the only initial state, and for each transition $(s,a_j,q)$ of $\delta$ such that $s\in S$ we add a new transition $(q_0,a_j,q)$.
%Lastly, we add $q_0$ to the set of final states iff at least one of the states in $S$ was final. It is easy to check that this automaton recognizes the same language as $\mathcal A$.
let $\mathcal A=(Q,q_0,\delta, F,\Sigma)$ be an NFA with $Q=\{q_0,\dots,q_n\}$ be such an NFA. 
We build a new automaton $\mathcal A'=(Q\cup P,q_0,\delta', F,\Sigma\cup\{d\})$, where $P=\{p_1,\dots,p_{n-1}\}$ is a set of $n-1$ new states and $d$ is a new character. %; in this proof the order over $\Sigma\cup\{d\}$ is irrelevant.  \notaG{questo commento lo leverei, \`e evidente visto che non stiamo parlando di propriet\`a  legate all'ordine dell'alfabeto}
For each $q\in Q$ we add the self loop $(q,d,q)$. 
If we add only these transitions, it holds that $\la A=\Sigma^*$ iff $\la {A'}=(\Sigma + d)^*$. %:  \notaG{forse questa  dimostrazione non è necessaria }
%the left-to-right implication is trivial, whereas the right-to-left implication can be proved by contraposition by noting that if $\alpha\notin\la A$, adding only $d$-transitions can not make $\alpha$ an accepted string in $\mathcal A$ .
We can now add to the automaton as many $d$-transitions as we please without violating the property $\la A=\Sigma^*$ iff $\la {A'}=(\Sigma + d)^*$: the right-to-left implication still holds if we only add $d$-transitions, whereas the left-to-right implication holds since adding transitions may only expand the recognized language, but $(\Sigma + d)^*$ is already maximal (with respect to the inclusion). 
Therefore we add the transitions $(q_0, d,q_1)$ and $(q_0,d,p_1)$.
Moreover, for each $1\le i \le n-1$ we add the transitions $(p_i, d,q_{i+1})$ and $(p_i, d,p_{i+1})$ (see Figure \ref{trans d}).

\begin{figure}
\begin{center}
\begin{tikzpicture}[->,>=stealth', semithick, initial text={}, auto, scale=.35]
\node[state, label=above:{}, minimum size=1.15cm] (0) at (0,0) {$q_0$};
\node[state, label=above:{}, minimum size=1.15cm] (1) at (6,0) {$q_1$};
\node[state, label=above:{}, minimum size=1.15cm] (2) at (12,0) {$q_2$};
\node[state, label=above:{}, minimum size=1.15cm] (3) at (18,0) {$q_3$};
\node[state, label=above:{}, draw=none] (d) at (24,0) {$\dots$};
\node[state, label=above:{}, minimum size=1.15cm] (n) at (30,0) {$q_n$};
\node[state, label=above:{}, minimum size=1.15cm] (p1) at (6,-6) {$p_1$};
\node[state, label=above:{}, minimum size=1.15cm] (p2) at (12,-6) {$p_2$};
\node[state, label=above:{}, draw=none] (pd) at (18,-6) {$\cdots$};
\node[state, label=above:{}, minimum size=1.15cm] (pn) at (24,-6) {$p_{n-1}$};

\draw (0) edge [loop above] node {$d$} (0);
\draw (1) edge [loop above] node {$d$} (0);
\draw (2) edge [loop above] node {$d$} (0);
\draw (3) edge [loop above] node {$d$} (0);
\draw (n) edge [loop above] node {$d$} (0);

\draw (0) edge node {$d$} (1);
\draw (0) edge node {$d$} (p1);
\draw (p1) edge node {$d$} (p2);
\draw (p2) edge node {$d$} (pd);
\draw (pd) edge node {$d$} (pn);
\draw (p1) edge node {$d$} (2);
\draw (p2) edge node {$d$} (3);
\draw (pn) edge node {$d$} (n);

\end{tikzpicture}
\end{center}
\caption{The automaton $\mathcal A'$ with only $d$-transitions depicted.}
\label{trans d}
\end{figure}
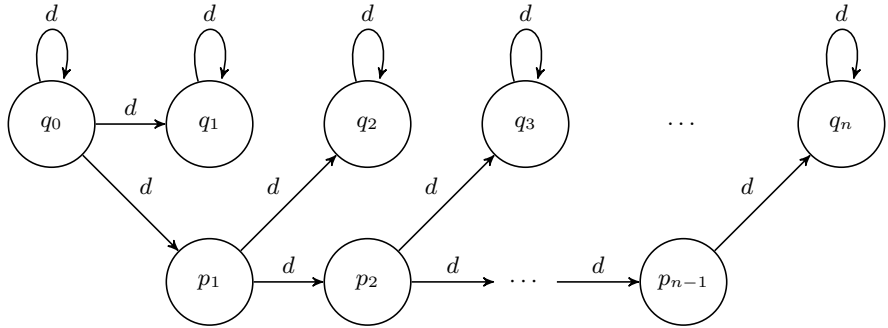
To conclude the proof that the reduction is correct, we need to show that $\mathcal A'$ is reduced. Since $q_0$ has no incoming edges, we have \begin{align*}
    &I_{q_0}=d^*\\
    &I_{p_i}=d^i\cdot d^*\text{\quad for }1 \le i\le n-1.
\end{align*}
Since $\mathcal A$ was trimmed and since each $q\in Q\setminus\{q_0\}$ is not an initial state, we have $I_q\cap \Sigma^+ \ne \emptyset$ for each $q\in Q\setminus\{q_0\}$. Thus $I_q\ne I_p$ for each $q\in Q\setminus\{q_0\}$ and for each $p\in P\cup\{q_0\}$. Moreover, for each $1\le i< j \le n$ we have $d^i \in I_{q_i}\setminus I_{q_j}$, hence $I_{q_i}\ne I_{q_j}$.
\end{proof}

\

We will use the previous lemma to solve a problem related to another interesting aspect of the relationship between DFA's, NFA's, and reduced NFA's: indexability. 
Given an NFA $\mathcal A$, it is possible to define a partial order $<_{\mathcal A}$ on its states that allows to represent $\mathcal A$ using an index, that is, a succint structure that supports fast matching queries \cite{JACM}. 
The partial order $<_{\mathcal A}$ is defined using the family of incoming languages $\{I_q: q\in Q\}$. As opposed to the case of DFA's, over NFA's these languages may not be pairwise disjoint,  and we can compare them as follows:

\[
I_q\preceq I_p \iff \forall \alpha\in I_q\; \forall\beta\in I_p\big(\{\alpha, \beta\} \not\subseteq I_q\cap I_p \Rightarrow \alpha \prec \beta\big).
\]
The above  partial order can be lifted to the collection of states of an NFA.  

\begin{definition}
\label{prec nfa}
Given two states $q$ and $p$ of an NFA $\mathcal A$, we say that $q<_{\mathcal A} p$ iff $I_q\preceq I_p$ and $I_q\neq I_p$.
\end{definition}

Note that if $\mathcal D$ is a DFA, then $<_{\mathcal D}$  simplifies:
\[
q<_{\mathcal D} p \iff \forall \alpha\in I_q\;\forall\beta\in I_p\big(\alpha \prec \beta\big),
\]
and this order satisfies the properties of a Wheeler order, with the exception of not necessarily being total. As a matter of fact, it can be proved that the DFA $\mathcal D$ is Wheeler if and only if $<_{\mathcal D}$ is a total order. Remarkably, this partial order can be computed in polynomial time \cite{JACM} on DFA's.

\begin{proposition}
Let $\mathcal D$ be a DFA with $n$ states. Then, we can compute the order $<_{\mathcal D}$ in $O(n^5)$ time.
\end{proposition}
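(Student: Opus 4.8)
The plan is to reduce the computation of $<_{\mathcal D}$ to a small number of reachability queries on a product graph, once we have a purely combinatorial description of the co-lexicographic comparison of incoming languages. Recall that in a DFA distinct states have disjoint incoming languages, so for $q\neq p$ we have $q<_{\mathcal D}p$ exactly when no pair $(\alpha,\beta)\in I_q\times I_p$ satisfies $\beta\prec\alpha$ (the case $\alpha=\beta$ being impossible). So the first step is to unfold $\beta\prec\alpha$: if $\tau$ is the longest common suffix of $\alpha$ and $\beta$, then $\beta\prec\alpha$ iff either $\beta=\tau$ (i.e. $\beta\dashv\alpha$ and $\beta\neq\alpha$), or $\beta=\mu b\tau$ and $\alpha=\nu a\tau$ with $b\prec a$. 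Crucially, in the second case one may drop the maximality of $\tau$: whenever \emph{some} common suffix $\tau$ and letters $b\prec a$ witness $\beta=\mu b\tau$, $\alpha=\nu a\tau$, one already gets $\beta\prec\alpha$, because $b\neq a$ forces that common suffix to be the longest one.

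Second, I would build the product graph $G$ on vertex set $Q\times Q$ with an edge $(y,y')\to(\delta(y,c),\delta(y',c))$ for every $c\in\Sigma$ for which both transitions are defined, so that a path from $(y,y')$ to $(q,p)$ in $G$ corresponds exactly to a string $\tau$ with $\delta(y,\tau)=q$ and $\delta(y',\tau)=p$ (including the empty path). Call a vertex $(y,y')$ \emph{bad} if $y$ has an incoming edge labelled by some $a$ and $y'$ has an incoming edge labelled by some $b\prec a$, adopting the convention that $q_0$ carries a virtual minimal label $\#$; this extra convention exactly captures the proper-suffix witnesses $\beta=\tau$. Using the fact that $\mathcal D$ is trimmed — every state is reachable from $q_0$, so the missing prefixes $\mu,\nu$ always exist — one proves the key equivalence: for $q\neq p$, $q\not<_{\mathcal D}p$ iff some bad vertex reaches $(q,p)$ in $G$. (For $q=p$ the relation is false by definition; and the initial state needs a little care since $\varepsilon\in I_{q_0}$ and $q_0$ may carry self-loops, but this only affects vertices involving $q_0$ and is handled by the $\#$-convention together with the observation that $\#$ can never be $\prec$ anything, so no vertex $(q_0,y')$ is bad.)

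Third comes the algorithm and its analysis. Building $G$ and marking the bad vertices is done by a single scan of the transitions, recording for each state the $\prec$-extremes of its incoming labels; this costs $O(|\Sigma|\,n^2)$. Then, for each of the at most $n^2$ ordered pairs $(q,p)$ with $q\neq p$, run a graph search on the reverse of $G$ from $(q,p)$ and declare $q<_{\mathcal D}p$ iff no bad vertex is met; each search visits $O(n^2)$ vertices and $O(|\Sigma|\,n^2)$ edges, for a total of $O(|\Sigma|\,n^4)$, which is $O(n^5)$ for an alphabet of size $O(n)$ (in particular for a constant alphabet). In fact a single multi-source traversal starting from all bad vertices at once recovers the whole relation in $O(|\Sigma|\,n^2)$ time, but the crude per-pair accounting already gives the claimed $O(n^5)$ bound, and correctness of the procedure is immediate from the equivalence of the second step.

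The main obstacle is precisely that equivalence: showing that co-lexicographic incomparability of $I_q$ and $I_p$ is detected by plain graph reachability from a ``locally bad'' pair. The delicate points are (a) that an \emph{arbitrary} common suffix, not necessarily the longest, may be used as the witness $\tau$, and (b) the bookkeeping around $\varepsilon$ and $q_0$, where incoming languages need not end with a genuine letter. Once this characterization is established, the product-graph construction, the reachability queries, and the complexity count are routine.
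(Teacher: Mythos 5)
Your proposal is correct, and it is worth noting that the paper itself gives no proof of this proposition: it simply cites the literature (the reference \cite{JACM}) for the polynomial-time computability of $<_{\mathcal D}$. Your argument is a valid self-contained proof, and it follows what is essentially the standard technique for this problem: reduce the co-lexicographic comparison of the (pairwise disjoint) incoming languages of a DFA to reachability in the product automaton $Q\times Q$ from ``locally bad'' pairs. The decomposition of $\beta\prec\alpha$ into the proper-suffix case and the differing-preceding-letter case, the observation that the common suffix witnessing the second case need not be maximal, the use of trimmedness to realize the prefixes $\mu,\nu$, and the $\#$-convention for $\varepsilon\in I_{q_0}$ are all sound, and the complexity accounting ($O(|\Sigma|\,n^{4})$ with per-pair backward searches, or $O(|\Sigma|\,n^{2})$ with a single multi-source traversal) comfortably meets the stated $O(n^{5})$ bound.

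One imprecision deserves a warning. Your parenthetical claim that ``no vertex $(q_0,y')$ is bad'' is false for general DFAs: the paper only forbids in-going edges at $q_0$ for \emph{Wheeler} automata, so here $q_0$ may well have real incoming edges (e.g.\ a self-loop labelled $c$), in which case a pair $(q_0,y')$ with $y'$ having an incoming label $b\prec c$ must be declared bad — otherwise the algorithm would wrongly conclude $q_0<_{\mathcal D}y'$ in, say, the two-state DFA with $\delta(q_0,c)=q_0$ and $\delta(q_0,b)=p$, where $b\in I_p$ and $c\in I_{q_0}$ give $b\prec c$ and hence $q_0\not<_{\mathcal D}p$. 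Your formal definition of badness (``$y$ has an incoming edge labelled $a$ and $y'$ has an incoming edge labelled $b\prec a$'') already handles this correctly, since the real incoming edges of $q_0$ count alongside the virtual label $\#$; only the side remark is wrong, and it should be deleted rather than implemented.
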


It follows that,  given a DFA $\mt D$,  we can compute $<_\mathcal D$ in polynomial time and use it to index $\mathcal D$ efficiently.
Would it be possible to generalized this result  to NFA's using the corresponding partial order  $<_\mathcal A$ of Definition \ref{prec nfa}?
%Note that, if  we were able to compute the corresponding quasi-order  $<_\mathcal A$ of Definition \ref{prec nfa} on NFA's, we would be able to index efficiently also this class of automata.
In the following lemma we give a negative answer to this question, 
%show that the corresponding quasi-order is hard to compute on NFA's, 
even when restricted to reduced automata, proving that a different approach is needed to index NFA's (see \cite{JACM} for a positive solution to the problem).

\begin{theorem}
\label{prec not eq}
Given two states $q$ and $p$ of an NFA $\mathcal A$, deciding whether $q <_{\mathcal A} p$ is PSPACE-complete. 
The same result holds even if $\mathcal A$ is reduced.
\end{theorem}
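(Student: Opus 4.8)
The plan is to handle membership and hardness separately.

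\emph{Membership in PSPACE.} Recall (Definition~\ref{prec nfa}) that $q<_{\mathcal A}p$ means $I_q\neq I_p$ \emph{and} $I_q\preceq I_p$, and that the co-lexicographic order on $\Sigma^*$ is total. Hence $I_q\preceq I_p$ \emph{fails} exactly when there are $\alpha\in I_q$ and $\beta\in I_p$ with $\beta\preceq\alpha$ and either $\alpha\notin I_p$ or $\beta\notin I_q$; splitting on which disjunct holds, this becomes the union of the two non-emptiness conditions $I_q\cap(\Sigma^*\setminus I_p)\cap\{\gamma:\,\exists\beta\in I_p\,(\beta\preceq\gamma)\}\neq\emptyset$ and $I_p\cap(\Sigma^*\setminus I_q)\cap\{\gamma:\,\exists\alpha\in I_q\,(\gamma\preceq\alpha)\}\neq\emptyset$. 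Each of $I_q,I_p$ is recognized by $\mathcal A$ with a single final state, hence has a linear-size NFA, and the co-lex closures $\{\gamma:\,\exists\beta\in L\,(\beta\preceq\gamma)\}$, $\{\gamma:\,\exists\beta\in L\,(\gamma\preceq\beta)\}$ are again regular with polynomial-size NFA's (a routine construction from the reversed automaton of $L$). Non-emptiness of an intersection of NFA's and complemented NFA's is in PSPACE — guess a witness letter by letter while maintaining the subset states of the complemented components — NFA equivalence is in PSPACE, and PSPACE is closed under Boolean combinations; so $q<_{\mathcal A}p$ is decidable in PSPACE.

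\emph{Hardness.} I would reduce from the universality problem for \emph{reduced} NFA's, which is PSPACE-complete by Lemma~\ref{reduced universality}; since the instance produced will itself be reduced, this settles both versions of the statement at once. Let $\mathcal A$ be a trimmed reduced NFA over $\Sigma$ with initial state $q_0$. A linear-time preprocessing first disposes of two easy cases: if $q_0\notin F$ (i.e. $\varepsilon\notin\la A$) output a fixed NO-instance; if $q_0\in F$ and $q_0$ has a self-loop on every letter of $\Sigma$, then $\la A=\Sigma^*$ and we output a fixed YES-instance. In the remaining case we may assume $\varepsilon\in\la A$ and, crucially, $I^{\mathcal A}_{q_0}\neq\Sigma^*$. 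I then build an NFA $\mathcal B$ over $\Sigma\cup\{\#,c,d\}$, with the alphabet ordered so that $\#$ is the minimum and $c\prec d$ lie above every letter of $\Sigma$: a fresh initial state $s$; a ``hub'' state $r$ with transitions $(s,\#,r)$ and $(r,\sigma,r)$ for all $\sigma\in\Sigma$; a disjoint copy of $\mathcal A$ entered by the transition $(s,\#,q_0')$ (where $q_0'$ is the copy of $q_0$) and carrying all of $\mathcal A$'s transitions among the copied states; and two states $q,p$ with transitions $(r,c,q)$, $(r,d,p)$ and $(f',c,p)$ for every final state $f$ of $\mathcal A$ (with $f'$ its copy; the final states of $\mathcal B$ may be taken to be $\{q,p\}$, so $\mathcal B$ is trimmed). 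A direct computation gives $I_s=\{\varepsilon\}$, $I_r=\#\Sigma^*$, $I_{v'}=\#\,I^{\mathcal A}_v$ for each copied state $v'$, $I_q=\#\Sigma^*c$, and $I_p=\#\Sigma^*d\cup\#\,\la A\,c$. Moreover $\mathcal B$ is reduced: the $I_{v'}$ are pairwise distinct because $\mathcal A$ is reduced; every string of $I_q$ and $I_p$ ends with $c$ or $d$, which no other incoming language does; and $I_r=\#\Sigma^*$ can coincide with $I_{v'}=\#\,I^{\mathcal A}_v$ only if $I^{\mathcal A}_v=\Sigma^*$, which after the preprocessing cannot occur.

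It then remains to verify that $q<_{\mathcal B}p\iff\la A=\Sigma^*$. Since $\#d\in I_p\setminus I_q$ we always have $I_q\neq I_p$, so $q<_{\mathcal B}p$ iff $I_q\preceq I_p$. Take a generic $\alpha=\#uc\in I_q$ (with $u\in\Sigma^*$) and a $\beta\in I_p$. If $\beta=\#vd$ then $\alpha\prec\beta$ (because $c\prec d$) while $\beta\notin I_q$, so the implication in the definition of $\preceq$ holds. If $\beta=\#vc$ with $v\in\la A$, then — because $\#$ is the minimum symbol — the co-lex comparison of $\alpha$ and $\beta$ collapses to the co-lex comparison of $u$ and $v$ in $\Sigma^*$, with $\alpha=\beta$ precisely when $u=v$. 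Consequently $I_q\preceq I_p$ fails exactly when there exist $u\notin\la A$ and $v\in\la A$ with $v\prec u$: for such $u,v$ the pair $\alpha,\beta$ is not contained in $I_q\cap I_p$ (as $\alpha\in I_p$ iff $u\in\la A$) yet $\alpha\not\prec\beta$, a violation; whereas when $u\in\la A$ the pair lies in $I_q\cap I_p$ and the implication is vacuous. Finally, since $\varepsilon\in\la A$ and $\varepsilon$ is the co-lex minimum of $\Sigma^*$, such $u,v$ exist iff $\Sigma^*\setminus\la A\neq\emptyset$, i.e. iff $\la A\neq\Sigma^*$. Hence $I_q\preceq I_p\iff\la A=\Sigma^*$, and the reduction is correct.

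The main obstacle is the design of $\mathcal B$: one must arrange that the ``unless both strings lie in $I_q\cap I_p$'' escape clause in the definition of $\preceq$ carries out precisely the universality quantifier, and it is this requirement that dictates the placement of the fresh symbols (a minimal anchor $\#$ and maximal separators $c\prec d$) and the appearance of $\la A\,c$ inside $I_p$. A secondary delicate point is keeping $\mathcal B$ reduced, the only genuine danger being that the hub $r$ and the copy of $q_0$ share the incoming language $\#\Sigma^*$ — which is exactly why the preprocessing must exclude $I^{\mathcal A}_{q_0}=\Sigma^*$, a condition that is fortunately polynomial-time checkable, since the only state of $\mathcal A$ whose incoming language contains $\varepsilon$ is $q_0$ itself and $I^{\mathcal A}_{q_0}=\Sigma^*$ holds iff $q_0$ carries a self-loop on every letter of $\Sigma$.
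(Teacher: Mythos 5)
Your proof is correct, and both halves take a genuinely different route from the paper's. For membership, the paper bounds the length of a violating pair $(\alpha,\beta)$ by a pumping argument in the powerset automaton (witnesses of length at most $d^2+d$ with $d\le 2^n$) and then guesses both strings in a streaming fashion with length counters and a comparison flag; you instead express the failure of $I_q\preceq I_p$ as two non-emptiness conditions built from $I_q$, $I_p$, their complements and their co-lex upward/downward closures, and invoke standard PSPACE procedures for Boolean combinations of NFA languages. This is tidier and avoids the paper's delicate case split on whether the last $d^2$ characters of $\alpha$ and $\beta$ coincide; the one step you leave implicit --- that the co-lex closures of a regular language admit polynomial-size NFAs --- is true (simulate a guessed witness in lock-step with the input, branching to an accepting sink at the first co-lex divergence and handling the proper-suffix case via the minimality of the padding symbol), but it is the load-bearing step of your membership argument and deserves an explicit construction. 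For hardness, both reductions start from universality and exploit the escape clause of $\preceq$ so that the universal quantification over $\Sigma^*$ is carried out exactly by the pairs falling outside $I_q\cap I_p$: the paper compares $I_{q_e}=a_1\cdot\mathcal L_\varepsilon+y$ with $I_{q_f}=a_1\cdot\pf L\cdot\Sigma+y+z$ and must conjoin the separately checkable condition $\Sigma\subseteq\mathcal L_\varepsilon$, whereas your hub construction compares $\#\Sigma^*c$ with $\#\Sigma^*d\cup\#\cdot\la A\cdot c$ and needs only the cheap preprocessing $\varepsilon\in\la A$ and $I_{q_0}\ne\Sigma^*$ (the latter precisely to keep $\mathcal B$ reduced, mirroring the paper's own caution about not normalizing the initial state of a reduced NFA). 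Your use of a minimal anchor $\#$ so that the co-lex comparison of $\#uc$ and $\#vc$ collapses to that of $u$ and $v$ is a clean trick, and reducing from Lemma~\ref{reduced universality} settles the general and reduced versions of the statement in one stroke.
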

\begin{proof}
First of all we need to prove that the problem is in PSPACE. We will show instead that its complement is in PSPACE and the thesis follows from the fact that PSPACE is closed under complementation. 
The complement of our problem consist of answering to the question whether $q\nless p$.
To do so, first we check whether $I_q=I_p$. As we have already mentioned, this problem is in PSPACE, so we can get the answer in polynomial space. 
If $I_q=I_p$, then $q\nless p$ and we answer "yes". Otherwise, we have
\[
q<_{\mathcal A} p \iff \forall \alpha\in I_q\; \forall\beta\in I_p\big( \{\alpha, \beta\} \not\subseteq I_q\cap I_p \Rightarrow \alpha \prec \beta \big),
\]
or equivalently 
\[
q\nless_{\mathcal A} p \iff \exists \alpha\in I_q\; \exists\beta\in I_p\big( \{\alpha, \beta\} \not\subseteq I_q\cap I_p \wedge \beta \prec \alpha \big).
\]
% To check whether $q\nless_{\mathcal A} p$, we consider the relation
% \[
% R:=\{(\alpha,\beta): \alpha\in I_q\;\wedge \beta\in I_p\;\wedge \{\alpha, \beta\} \not\subseteq I_q\cap I_p \wedge \beta \prec \alpha\}
% \]
% intending to apply Lemma \ref{pspace appartenenza}. 
% Note that $R$ is  clearly  invariant and  satisfy property \ref{ps app 1} of Lemma \ref{pspace appartenenza}. 
%In order to satisfy also property \ref{ps app 2}, 
Let $d$ be the number of states of the DFA $\mathcal D$ generated by the determinization of $\mathcal A$; clearly it holds $d\le 2^n$. 
We claim that if $q\nless p$, then there exist two strings $\alpha, \beta$ of length at most $d^2+d$ such that
\begin{equation}
\label{eq}
    \alpha\in I_q\;\wedge\; \beta\in I_p\;\wedge\; \{\alpha, \beta\} \not\subseteq I_q\cap I_p \;\wedge \;\beta \prec \alpha.
\end{equation}
Assume that $\alpha, \beta$ satisfy \eqref{eq}, with either $|\alpha|$ or $|\beta|$ (possibly both) greater than $d^2+d$.
We assume, w.l.o.g., that $|\alpha|\le |\beta|$ and distinguish two cases.
\\1) The last $d^2$ characters of $\alpha$ and $\beta$ differs; this also includes the case where $|\alpha|$ is strictly less than $d^2$. 
%Assume that $|\alpha|\le |\beta|$, the other case being symmetrical.
Consider the $d+1$ states of $\mathcal D$ visited by reading the first $d$ characters of $\beta$. Since $\mathcal D$ has $d$ states, at least one of them appears twice, implying that we visited a cycle.
By erasing from the first $d$ characters of $\beta$ the factor corresponding to such cycle, we obtain a string $\beta'$ such that $\alpha$ and $\beta'$ also satisfy \eqref{eq}.
\\2) The last $d^2$ characters of $\alpha$ and $\beta$ coincide; in particular $|\alpha|,|\beta|\ge d^2$. Consider the last $d^2+1$ states $r_0, ..., r_{d^2}$ of $\mathcal D$ visited by reading the string $\alpha$, and the last $d^2+1$ states $p_0, ..., p_{d^2}$ visited by reading the string $\beta$. 
Since $\mathcal D$ has only $d$ states, there must exist $0 \le i,j \le d^2$ with $i < j$ such that $(r_i, p_i) = (r_j, p_j)$, implying that $\alpha$ and $\beta$ visited two cycles labeled by the same string.
By erasing from the last $d^2$ characters of $\alpha$ and $\beta$ the factor corresponding to such cycles, we obtain two strings $\alpha', \beta'$ which also satisfy \eqref{eq}.
\\In both cases, we were able to shorten the length of the longest string. By repeating this process as many times as needed, we will eventually obtain two strings both shorter than $d^2+d$, with $d\le 2^n$.

Now that we have bounded the length of $\alpha, \beta$ with the constant $2^{2n}+2^n$, we can use non-determinism to guess, bit by bit, the length of $\alpha$ and $\beta$ and store this guessed information in two counters $a, b$ respectively, using $O\big(\log (2^{2n}+2^n)\big)= O(n)$ space for each. 
These counters determine which string among $\alpha$ and $\beta$ is longer and we start guessing the characters of such longest string from the left to the right, decreasing by one its counter whenever we guess a character. 
Note that we are not storing the guessed characters, since it would use too much space. 
When the counter reaches the same value of the other counter, we start guessing the characters of both the first and the second string at the same time and we carry on until both counters reach the value 0.
While guessing the characters of $\alpha$ (respectively, $\beta$) we update at each step the set of states of $\mathcal A$ reachable from $q_0$ by reading the currently guessed prefix of $\alpha$ ($\beta$), so that in the end we obtain the sets $\delta(q_0, \alpha)$ and $\delta(q_0, \beta)$.
With this information, we can check whether $\alpha\in I_q$ and $\beta\in I_p$ and $\{\alpha, \beta\} \not\subseteq I_q\cap I_p$.
To complete checking condition \eqref{eq}, we need to show how to decide whether $\beta\prec\alpha$.  
%While guessing the characters of $\alpha$ we update at each step the vector $v_1$ of length $n$ such that $v_1[i]$ stores the set of states of $\mathcal A$ reachable from $q_i$ by reading the currently guessed prefix of $\alpha$. Similarly, we update a vector $v_2$ using the guessed characters of $\beta$, so that in the end we obtain the sets $\delta(q, \alpha)$ and $\delta(q, \beta)$ for each $q\in Q$.

To confront co-lexicographically $\alpha$ and $\beta$, we use a variable $\rho$ that indicates whether $\alpha$ is less, equal or greater than $\beta$.
We initialize $\rho$ based on the counters $a,b$ as follows:
\[
\rho:=\begin{cases}
= \quad &\text{if }a=b\\
\dashv \quad &\text{if }a<b\\
\vdash \quad &\text{if }b<a.
\end{cases}
\]
We leave $\rho$ unchanged until we start guessing simultaneously the characters of $\alpha$ and $\beta$. 
When we guess the character $c_1$ for $\alpha$ and the character $c_2$ for $\beta$, we set
\[
\rho:=\begin{cases}
\prec \quad &\text{if }c_1\prec c_2\\
\succ \quad &\text{if }c_1\succ c_2\\
\rho \quad &\text{if }c_1=c_2.
\end{cases}
\]
Note that if at the end $\rho$ has value $\dashv$, it means that $\alpha \dashv \beta$, thus $\alpha \prec\beta$.
Similarly, if $\rho$ has value $\vdash$ then $\beta \prec\alpha$. Otherwise, we have $\alpha \,\rho\, \beta$. Thus we are always able to determine the co-lexicographic order of $\alpha$ and $\beta$. 
Therefore, deciding whether $q\nless p$ is a problem in PSPACE, and so it is its complement.

To prove  completeness, we show a reduction from the universality problem over generic, respectively reduced, NFA's. 

 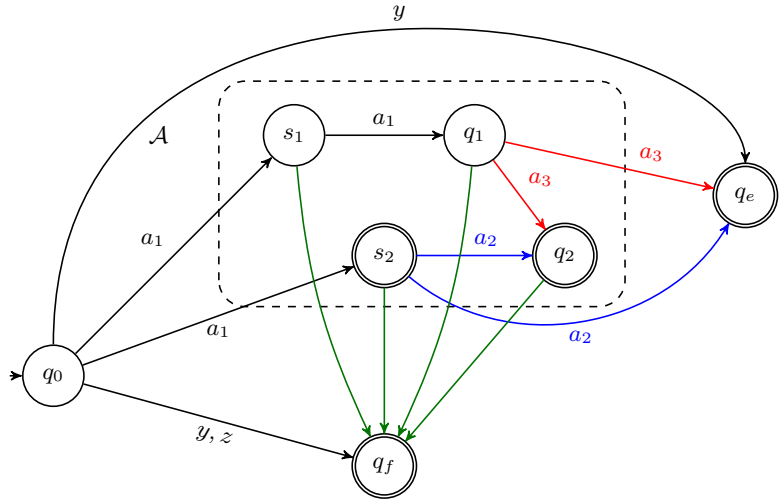
\begin{figure}[H]
\begin{center}
\begin{tikzpicture}[->,>=stealth', semithick, initial text={}, auto, scale=.4]
\node[state, label=above:{},initial] (0) at (-8,0) {$q_0$};

\node[state, label=above:{}] (1) at (0,8) {$s_1$};
\node[state, label=above:{}, accepting] (2) at (3,4) {$s_2$};
\node[state, label=above:{}] (3) at (6,8) {$q_1$};
\node[state, label=above:{}, accepting] (4) at (9,4) {$q_2$};

\node[state, label=above:{}, accepting] (5) at (15,6) {$q_e$};
\node[state, label=above:{}, accepting] (6) at (3,-3) {$q_f$};
\node (7) at (-4.5,8) {$\mathcal A$};

\draw[dashed, rounded corners = 10] (-2.5,2.3) rectangle (11,9.8);

\draw (0) edge node {$a_1$} (1);
\draw (0) edge[below] node {$a_1$} (2);
\draw (1) edge node {$a_1$} (3);
\draw (2) edge[color=blue, pos=.6] node {$a_2$} (4);
\draw (2) edge[bend right=50, below, color=blue] node {$a_2$} (5);
\draw (3) edge[color=red] node {$a_3$} (4);
\draw (3) edge[color=red] node[pos=.6] {$a_3$} (5);

\draw (1) [cgreen] edge[bend right=10] node {} (6);
\draw (2) [cgreen] edge node {} (6);
\draw (3) [cgreen] edge[bend left=10] node {} (6);
\draw (4) [cgreen] edge node {} (6);
\draw (0) edge[below, sloped] node {$y, z$} (6);
\draw (0) .. controls +(0,16) and +(0,6) .. node {$y$} (5);

\end{tikzpicture}
\end{center}
\caption{The automaton $\mathcal A'$ built starting from the automaton $\mathcal A$ with $S=\{s_1, s_2\}$ recognizing the language $\mathcal L=\{\varepsilon, a_2, a_1a_3\}$. Edges entering a final state in $\mathcal A$ have been duplicated and redirected to $q_e$. Green edges are labeled $\Sigma=\{a_1, a_2, a_3\}$.} 
\label{fig:reduced}
\end{figure}

Let $\mathcal A=(Q,S,\delta, F,\Sigma)$ be an NFA with $Q=\{q_1,\dots,q_n\}$ and $\Sigma=\{a_1,\dots,a_\sigma\}$ recognizing the language $\mathcal L = \la A$, we build a new NFA $\mathcal A'=(Q',q_0,\delta', F\cup\{q_e,q_f\},\Sigma')$ by adding a new initial state $q_0$ and two final states $\{q_e,q_f\}$ (see Figure \ref{fig:reduced}). 
The new alphabet is $\Sigma'=\Sigma \cup \{y,z\}$, where $a_j \prec y \prec z$ for each $1 \le j \le \sigma$.
For each $q_i\in S$, we add a transition from $q_0$ to $q_i$ labeled $a_1$. 
Adding $q_0$ has the sole purpose of having an initial state without incoming edges.  Note that we can not make the usual assumption that $\mathcal A$ has only one initial states without incoming edges: if we start from a reduced NFA and we build an equivalent NFA with the required property, there is no guarantee that the new automaton will still be reduced. 
The state $q_e$ represents the new final state that gathers all the strings in $a_1\cdot(\mathcal L \setminus \{\varepsilon\})$. 
To achieve this goal, for each transition $(q_i, a_j, q_{i'})$ of $\delta$ such that $q_{i'} \in F$ we add a new transition $(q_i, a_j, q_e)$.
The state $q_f$ gathers all the strings in $a_1\cdot \pf{L} \cdot \Sigma$, and this can be easily achieved by adding a transition $(q_i, a_j, q_f)$ for each $i \ge 1$ and $j \ge 1$. 
Lastly, we add the transitions $(q_0,y,q_e)$, $(q_0,y,q_f)$ and $(q_0,z,q_f)$. 
This way, if $\mathcal A$ is reduced then $\mathcal A'$ is also reduced: note that $I_{q_0}=\{\varepsilon\}$, for each $i \ge 1$ it holds $I^{\mathcal A'}_{q_i} = a_1\cdot I^{\mathcal A}_{q_i}$, the states $q_e,q_f$ are the only that can read the string $y$  and $q_f$ is the only state that can read the string $z$.

Let $\mathcal L_\varepsilon$ denote the language $\mathcal L \setminus \{\varepsilon\}$. By construction, we have
\begin{align*}
    I_{q_e}&= a_1\cdot \mathcal L_\varepsilon + y \\
    I_{q_f}&= a_1\cdot \pf L \cdot \Sigma + y + z.
  \end{align*}
We want to show that $\mathcal L = \Sigma^*$ iff $q_e < q_f\, \wedge\, \Sigma \subseteq \mathcal L_\varepsilon$. Note that $\Sigma \subseteq \mathcal L_\varepsilon$ is a necessary condition for $\mathcal L$ to be universal, and such condition can be checked in polynomial time using reachability on $\mathcal A$, therefore the reduction is still polynomial.
\\$(\Rightarrow)$ If $\mathcal L = \Sigma^*$, it clearly follows that $\Sigma \subseteq \mathcal L_\varepsilon$. Moreover we have $\pf L \cdot \Sigma = \Sigma^+$ and we obtain
\begin{align*}
    I_{q_e}&= a_1\cdot \Sigma^+ + y \\
    I_{q_f}&= a_1\cdot \Sigma^+ + y + z.
\end{align*}
It follows immediately that $q_e < q_f$.
\\$(\Leftarrow)$ 
Note that $\mathcal L_\varepsilon \subseteq \pf L \cdot \Sigma$.
We first prove that from the hypothesis it follows $\mathcal L_\varepsilon = \pf L \cdot \Sigma$. Assume by contradiction that $\mathcal L_\varepsilon \ne \pf L \cdot \Sigma$ and let $\beta$ be a string in $\pf L \cdot \Sigma \setminus \mathcal L_\varepsilon$.
Then we have
\[
y\in I_{q_e}, \quad a_1\cdot\beta \in I_{q_f}, \quad \{y, a_1\cdot\beta\} \nsubseteq I_{q_e} \cap I_{q_f}
\]
but $y \succ a_1\cdot\beta$, a contradiction. Thus $\mathcal L_\varepsilon = \pf L \cdot \Sigma$. 

We can then prove by induction on $|\alpha|$ that $\alpha \in \Sigma^+$ implies $\alpha\in\mathcal L_\varepsilon$. 
If $|\alpha|=1$ then $\alpha\in\Sigma$ and by hypothesis we have $\Sigma \subseteq \mathcal L_\varepsilon$. 
If $|\alpha|=n+1>1$, then $\alpha = \alpha'\cdot a_j$ for some $\alpha' \in \Sigma^+$ and some $a_j \in \Sigma$. 
By induction hypothesis we have $\alpha' \in \mathcal L_\varepsilon \subseteq \pf L$, and from $\mathcal L_\varepsilon = \pf L \cdot \Sigma$ it follows $\alpha \in \mathcal L_\varepsilon$.

This concludes the reduction from the universality problem to our problem over general NFA's. Since the construction described preserves the reduced-ness of the starting automaton, it also works as a reduction from the universality problem over reduced NFA's to our problem over reduced NFA's. In Lemma \ref{reduced universality} we proved that the former problem is PSPACE-complete, thus proving that the latter is also PSPACE- complete.
\end{proof}

\

We can use the previous results to prove another complexity result over reduced NFA's.

\begin{corollary}
Deciding whether an NFA $\mathcal A$ is reduced is PSPACE-complete.
\end{corollary}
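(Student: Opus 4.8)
The plan is to prove both that the problem lies in PSPACE and that it is PSPACE-hard, the hardness coming from a reduction that reuses the construction in the proof of Theorem~\ref{prec not eq}, this time applied to the instances furnished by Lemma~\ref{reduced universality}. For membership I would argue about the complement: an NFA $\mathcal A$ fails to be reduced exactly when there are two distinct states $q\neq p$ with $I_q=I_p$. There are only quadratically many pairs, and for a fixed pair the question ``$I_q=I_p$?'' is the equivalence of the two NFA's obtained from $\mathcal A$ by taking $\{q\}$, respectively $\{p\}$, as the set of final states --- a PSPACE question. Enumerating the pairs with a counter and rerunning the equivalence test in the same space shows ``$\mathcal A$ is not reduced'' is in PSPACE, and since PSPACE is closed under complementation, ``$\mathcal A$ is reduced'' is in PSPACE too.

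For hardness I would reduce from the \emph{non}-universality problem for reduced NFA's, which is PSPACE-complete by Lemma~\ref{reduced universality} together with the closure of PSPACE under complementation. Given a reduced NFA $\mathcal A$ with $\mathcal L=\la A$ and $\Sigma=\{a_1,\dots,a_\sigma\}$, I would first check in polynomial time, by plain reachability, whether $\varepsilon\in\mathcal L$ and whether $\Sigma\subseteq\mathcal L$: if either fails then $\mathcal L\neq\Sigma^*$, and the reduction outputs a fixed one-state (hence trivially reduced) automaton. Otherwise I would run exactly the construction of $\mathcal A'$ from the proof of Theorem~\ref{prec not eq}, with the sole modification of dropping the edge $(q_0,z,q_f)$ (so that the letter $z$ disappears). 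As computed there, this yields $I_{q_0}=\{\varepsilon\}$, $I_{q_i}=a_1\cdot I^{\mathcal A}_{q_i}$ for every state $q_i$ inherited from $\mathcal A$, $I_{q_e}=a_1\cdot(\mathcal L\setminus\{\varepsilon\})+y$, and $I_{q_f}=a_1\cdot\pf L\cdot\Sigma+y$.

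The correctness argument I have in mind runs as follows. Since $\mathcal A$ is reduced, the only pair of states of $\mathcal A'$ that can possibly share an incoming language is $\{q_e,q_f\}$: the state $q_0$ is the unique one with $\varepsilon$ in its incoming language; $q_e$ and $q_f$ are the only states with $y$ in theirs, so neither coincides with any $q_i$; and distinct $q_i,q_j$ have $I^{\mathcal A}_{q_i}\neq I^{\mathcal A}_{q_j}$, hence $I_{q_i}\neq I_{q_j}$ in $\mathcal A'$. Therefore $\mathcal A'$ is reduced iff $I_{q_e}\neq I_{q_f}$, i.e. iff $\mathcal L\setminus\{\varepsilon\}\neq\pf L\cdot\Sigma$. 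Finally, under the two conditions guaranteed by the precheck, the same induction on word length used in the $(\Leftarrow)$ direction of the proof of Theorem~\ref{prec not eq} gives $\mathcal L\setminus\{\varepsilon\}=\pf L\cdot\Sigma\iff\mathcal L=\Sigma^*$ (the inclusion $\mathcal L\setminus\{\varepsilon\}\subseteq\pf L\cdot\Sigma$ being immediate). Hence $\mathcal A'$ is reduced iff $\mathcal L\neq\Sigma^*$; the construction is polynomial, so reducedness is PSPACE-hard and, with the upper bound, PSPACE-complete.

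The step I expect to be the main obstacle is the ``no spurious coincidence'' claim: it is precisely what forces the reduction to start from an \emph{already reduced} automaton, so that plain NFA universality would not suffice and Lemma~\ref{reduced universality} is genuinely required, and it is also why the two collector states $q_e,q_f$ must be tagged by the fresh letter $y$ to keep their incoming languages disjoint from those of the states inherited from $\mathcal A$. The polynomial precheck of $\varepsilon\in\mathcal L$ and $\Sigma\subseteq\mathcal L$ is a secondary subtlety, since the bare equality $I_{q_e}=I_{q_f}$ expresses only $\mathcal L\setminus\{\varepsilon\}=\pf L\cdot\Sigma$, which is strictly weaker than universality in the absence of those assumptions.
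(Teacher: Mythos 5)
Your proposal is correct and follows essentially the same route as the paper: the same quadratic enumeration of pairs plus the PSPACE equivalence test for membership, and for hardness the same construction from Theorem~\ref{prec not eq} with the edge $(q_0,z,q_f)$ removed, applied to the reduced instances supplied by Lemma~\ref{reduced universality}, so that reducedness of the output hinges exactly on whether $I_{q_e}=I_{q_f}$. The only cosmetic difference is that you reduce from non-universality and land directly on ``reduced,'' while the paper reduces from universality and closes with complementation; your explicit precheck that $\varepsilon\in\mathcal L$ is a small but welcome tightening of the argument.
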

\begin{proof}
To prove that the problem is in PSPACE, note that $\mt A= (Q, q_0, \delta, F, \Sigma)$ is reduced iff, for all $q,p\in Q$, $q\ne p$ implies $I_q\ne I_p$. Therefore, it is sufficient to check $O(n^2)$ times whether $I_q=I_p$, where $n=|Q|$.
As we have already mentioned, the problem of deciding whether $I_q=I_p$ belongs to PSPACE, thus the thesis follows.

To prove  completeness, we combine the reductions shown in Lemma \ref{reduced universality} and Theorem \ref{prec not eq}.
Let $\Sigma_d=\Sigma \cup \{d\}$. 
We first apply the reduction shown in Lemma \ref{reduced universality} to build a \emph{reduced} automaton $\mathcal A'$ such that $\la A = \Sigma^*$ iff $\la{A'}=\Sigma_d^*$. We set $\mathcal L':= \la{A'}$.
Then, we apply the reduction showed in Theorem \ref{prec not eq} to the automaton $\mathcal A'$, but we remove the edge $(q_0, z, q_f)$; we call this new automaton $\mathcal A''$. 
The languages recognized by $q_e$ and $q_f$ change as follow:
\begin{align*}
    I_{q_e}&= a_1\cdot \mathcal L'_\varepsilon + y \\
    I_{q_f}&= a_1\cdot \pf {L'} \cdot \Sigma + y.
\end{align*}
Since $\mathcal A'$ is a reduced automaton and the states $q_e$ and $q_f$ are the only ones with an incoming edge labeled $y$, it immediately follows that $\mathcal A''$ is \emph{not} reduced iff $I_{q_e}=I_{q_f}$.
Applying the same argument we used in Theorem \ref{prec not eq}, we can conclude that $\la A'=\Sigma_d^*$ iff $I_{q_e}=I_{q_f}$ ---again, we assumed that $\Sigma_d \subseteq \la{A'}$, since this condition can be checked in polynomial time.
Summarizing we have that $\mathcal A''$ is \emph{not} reduced iff $\la A = \Sigma^*$. 
Our claim follows from the equality PSPACE=NPSPACE.
\end{proof}

Note that, as proved in \cite{ADPP2}, deciding whether  a \emph{Wheeler} NFA is reduced is a simpler problem, being  in P. 

\subsection{Languages}

In this section we switch our focus from automata to languages.
An important consequence of the Myhill-Nerode Theorem for Wheeler languages is stated in the following Lemma (proved in \cite{ADPP2}).

\begin{lemma}
\label{monotone}
A regular language $\mathcal L$ is Wheeler if and only if all monotone sequences in $(\pf L, \prec)$ become eventually constant modulo $\equiv_\mathcal L$. In other words, for all sequences $(\alpha_i)_{i \ge 0}$ in $\pf L$ with
\[
\alpha_1 \preceq \alpha_2 \preceq \dots \alpha_i \preceq \dots \quad \text{ or }\quad \alpha_1 \succeq \alpha_2 \succeq \dots \succeq \alpha_i \succeq \dots
\]
there exists an $n$ such that $\alpha_h \equiv_\mathcal L \alpha_k$, for all $h,k \ge n$. 
\end{lemma}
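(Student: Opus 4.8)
The plan is to prove both directions of the biconditional, using the Myhill-Nerode-style characterization of the minimum WDFA recalled in the excerpt, namely that a regular language $\mathcal L$ is Wheeler if and only if the equivalence $\equiv_\mathcal L^c$ has finite index (equivalently, the WDFA $\mathcal B$ with state set $\{[\alpha]_{\equiv_\mathcal L^c} : \alpha\in\pf L\}$ is finite). So the whole statement reduces to: $\equiv_\mathcal L^c$ has finite index $\iff$ every monotone sequence in $(\pf L,\prec)$ is eventually constant modulo $\equiv_\mathcal L$.

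For the direction ``Wheeler $\Rightarrow$ monotone sequences stabilize'', I would argue by contrapuntal reasoning. Suppose $(\alpha_i)_{i\ge0}$ is monotone (say non-decreasing) in $(\pf L,\prec)$ but is not eventually constant modulo $\equiv_\mathcal L$; I want to produce infinitely many $\equiv_\mathcal L^c$-classes. Since $\mathcal L$ is regular, $\equiv_\mathcal L$ itself has finite index, so along the sequence the $\equiv_\mathcal L$-class changes infinitely often: there are indices $i_1<i_2<\dots$ with $\alpha_{i_k}\not\equiv_\mathcal L\alpha_{i_{k+1}}$. Now observe that each such ``jump'' forces a new $\equiv_\mathcal L^c$-class, because the third clause in the definition of $\equiv_\mathcal L^c$ is precisely a convexity (interval) condition with respect to $\prec$: two strings that are $\equiv_\mathcal L^c$-equivalent must have the \emph{same} $\equiv_\mathcal L$-class as every $\pf L$-string co-lexicographically between them. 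Since the $\alpha_{i_k}$ are strictly increasing co-lexicographically (after possibly passing to a subsequence to remove co-lex-equal repetitions, which is harmless because co-lex-equal strings are $\equiv_\mathcal L$-equal) and the $\equiv_\mathcal L$-class genuinely changes, no two of them can lie in the same $\equiv_\mathcal L^c$-class. Hence $\equiv_\mathcal L^c$ has infinitely many classes and $\mathcal L$ is not Wheeler. A subtlety to handle carefully: I need to rule out that the infinitely many class-changes all happen among finitely many co-lex values with $\varepsilon$-interleaving; but monotonicity plus the fact that co-lex-equivalent strings are Myhill-Nerode equivalent takes care of it, since genuine $\equiv_\mathcal L$ changes require genuine co-lex movement.

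For the converse ``all monotone sequences stabilize $\Rightarrow$ Wheeler'', I would argue the contrapositive: assume $\mathcal L$ is not Wheeler, so $\equiv_\mathcal L^c$ has infinitely many classes, and build a monotone sequence that never stabilizes modulo $\equiv_\mathcal L$. Since $\equiv_\mathcal L$ has finite index but $\equiv_\mathcal L^c$ has infinite index, and since the only extra constraints in $\equiv_\mathcal L^c$ beyond $\equiv_\mathcal L$ are ``ends with the same character'' (finitely many options) and the convexity condition, there must be some fixed $\equiv_\mathcal L$-class $C$ and some fixed last-character $c$ such that the strings of $\pf L$ ending in $c$ and lying in $C$ split into infinitely many $\equiv_\mathcal L^c$-classes. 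Being non-$\equiv_\mathcal L^c$-equivalent despite sharing $\equiv_\mathcal L$-class and last character means these classes are separated, as $\prec$-intervals, by other $\equiv_\mathcal L$-classes; so between any two of them there is a $\pf L$-string in a different $\equiv_\mathcal L$-class. Pick representatives from infinitely many of these classes, arranged in co-lex order; interleaving them with the separating strings yields a monotone (in $\prec$) sequence in $\pf L$ whose $\equiv_\mathcal L$-class changes infinitely often — contradiction.

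The main obstacle I expect is the bookkeeping in the converse: turning ``infinitely many $\equiv_\mathcal L^c$-classes'' into an honest monotone sequence requires a pigeonhole step (fixing $C$ and $c$) and then a careful extraction in co-lex order together with witnesses of the convexity failures, making sure the resulting interleaved sequence is genuinely monotone and genuinely non-stabilizing. One technical point worth isolating as a small lemma: since $(\Sigma^*,\prec)$ is a total order and co-lex-equivalent strings are $\equiv_\mathcal L$-equivalent (indeed $\equiv_\mathcal L^c$-equivalent), each $\equiv_\mathcal L^c$-class is a $\prec$-interval of $\pf L$, so ``infinitely many classes'' really does mean infinitely many disjoint nontrivially-separated intervals, which is exactly what lets the extraction go through. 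With this lemma in hand, both directions are short; everything hinges on reading the third clause of $\equiv_\mathcal L^c$ as the statement that $\equiv_\mathcal L^c$-classes are $\prec$-convex.
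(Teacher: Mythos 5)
Your proof is correct. Note that the paper itself does not prove this lemma --- it is quoted from \cite{ADPP2} --- so there is no in-paper argument to match; but your reconstruction via the Wheeler Myhill--Nerode theorem (``$\mathcal L$ is Wheeler iff $\equiv_\mathcal L^c$ has finite index'') is sound, and both directions check out. In the forward direction, the key observation that two $\equiv_\mathcal L^c$-equivalent strings force every $\pf L$-string co-lexicographically between them into the same $\equiv_\mathcal L$-class is exactly what kills a non-stabilizing monotone sequence: consecutive inequivalent terms of the extracted subsequence sit between any two candidates for $\equiv_\mathcal L^c$-equivalence. In the converse, the pigeonhole on the (finitely many) $\equiv_\mathcal L$-classes and last characters, followed by extraction of a co-lexicographically monotone family of representatives and interleaving with the separators witnessing the failure of convexity, is the standard bookkeeping; the only step you gloss over is that an infinite set of distinct strings in a total order always admits an infinite strictly monotone subsequence, which is routine. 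One remark on alternatives: the forward direction is often proved more directly from the automaton rather than from $\equiv_\mathcal L^c$ --- in a WDFA each incoming language $I_q$ is a $\prec$-interval of $\pf L$, these finitely many intervals partition $\pf L$, and a monotone sequence in a finite partition into intervals must eventually remain in a single block, hence in a single state and a single $\equiv_\mathcal L$-class. That route avoids invoking the Wheeler Myhill--Nerode theorem for one of the two directions, at the cost of working with a concrete accepting WDFA; your version is more uniform, since both directions reduce to the index of $\equiv_\mathcal L^c$.
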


Lemma \ref{monotone} shows how it is possible to recognize whether a language $\mathcal L$ is Wheeler simply by verifying a property on elements of $\pf L$: trying to find a WDFA that recognizes $\mathcal L$ is no longer needed to decide  Wheelerness of $\mathcal L$. 
As shown in Theorem \ref{polynomialW} (see \cite{ADPP2}), we can verify whether the property mentioned in Lemma \ref{monotone} is satisfied just analysing the structure of the minimum DFA recognizing $\mathcal L$.

\begin{theorem}
\label{polynomialW}
Let $\mathcal D_\mt L$ be the minimum DFA that recognizes the language $\mathcal L$,  with initial state $q_0$ and dimension $n = |\mathcal D_\mt L|$. 
\\$\mathcal L$ is not Wheeler if and only if there exist $\mu, \nu$ and $\gamma$ in $\Sigma^*$, with $\gamma \ndashv \mu,\nu$, such that:
\begin{enumerate}
    \item $\mu \not\equiv_\mathcal L \nu$ and they label paths from $q_0$ to states $u$ and $v$, respectively;
    \item $\gamma$ labels two cycles, one starting from $u$ and one starting from $v$;
    \item $\mu, \nu \prec \gamma$\; or \; $\gamma \prec \mu,\nu$.
\end{enumerate}
The length of the strings $\mu, \nu$ and $\gamma$ satisfying the above  can be bounded: 
\begin{enumerate}
\setcounter{enumi}{3}
    \item $|\mu|, |\nu| \le |\gamma| \le n^3+2n^2+n+2$.
\end{enumerate}
\end{theorem}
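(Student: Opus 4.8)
The plan is to characterize non-Wheelerness via the obstruction in Lemma~\ref{monotone}: a language fails to be Wheeler exactly when it admits an infinite monotone sequence in $(\pf L, \prec)$ that hits infinitely many distinct $\equiv_\mathcal L$-classes. The first step is to turn this infinitary statement into a finitary pattern on the minimum DFA $\mathcal D_\mt L$. Working with $\mathcal D_\mt L$ means $\equiv_\mathcal L$-classes correspond bijectively to states, so a monotone sequence visiting infinitely many classes is a monotone sequence of prefixes reaching infinitely many states — and since $\mathcal D_\mt L$ is finite, some state $u$ (resp.\ $v$) is reached infinitely often with the sequence passing through \emph{different} states in between, yielding, by a pigeonhole/Ramsey-type argument on the sequence, two distinct states $u \neq v$ and a loop label $\gamma$ that cycles at both. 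The co-lexicographic monotonicity of the sequence forces the order relation in item~3: either all of $\mu,\nu$ sit co-lexicographically below $\gamma$ (the increasing case) or above it (the decreasing case). The condition $\gamma \ndashv \mu,\nu$ is what guarantees that pumping $\gamma$ actually moves us co-lexicographically (appending $\gamma$ repeatedly changes the co-lex order in a fixed direction precisely when $\gamma$ is not a suffix of the base string), so it must be recorded as part of the obstruction. The converse direction is easier: given $\mu,\nu,\gamma$ with these properties, the sequences $\mu\gamma^k$ and $\nu\gamma^k$ (suitably interleaved) form a monotone sequence in $\pf L$ separating $u$ from $v$ forever, contradicting Lemma~\ref{monotone}.

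For the length bounds in item~4, the approach is a cut-and-paste pumping argument analogous to the one used in the proof of Theorem~\ref{prec not eq}, but now in the product automaton $\mathcal D_\mt L \times \mathcal D_\mt L$ (which has $n^2$ states): the pair of paths labeled $\mu$ and $\nu$ from $(q_0,q_0)$ to $(u,v)$ can be assumed simple, so $|\mu| = |\nu| \le n^2$ after equalizing lengths — here one pads the shorter of $\mu,\nu$ by inserting a $\gamma$-cycle, which is legitimate since both $u$ and $v$ carry a $\gamma$-loop; I would first establish $|\mu|=|\nu|$ WLOG and then bound this common value. For $\gamma$ itself, observe that $\gamma$ labels a cycle at $u$ and a cycle at $v$; the shortest such common-label cycle lives in $\mathcal D_\mt L \times \mathcal D_\mt L$ again, giving $|\gamma| \le n^2$ for the "cycle" requirement — but we also need the order condition $\mu,\nu \prec \gamma$ (or the reverse) to survive the shortening, and this is where the real bookkeeping lies: shortening $\gamma$ by removing a sub-cycle might destroy the strict co-lex comparison with $\mu$ and $\nu$. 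To keep it, one shortens $\gamma$ only down to the point where its last, say, $n^2$ (or a similar polynomial) characters are pinned, and one may need $\gamma$ long enough that a full "distinguishing window" against both $\mu$ and $\nu$ fits inside it; summing the contributions ($|\gamma|$ must dominate $\max(|\mu|,|\nu|)$ plus a window for the comparison plus slack for the cycle structure) is what produces the cubic bound $n^3 + 2n^2 + n + 2$, with the exact constants falling out of the careful accounting.

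The main obstacle I expect is precisely this last coordination: ensuring that the three requirements — $\gamma \ndashv \mu,\nu$, $\gamma$ labeling cycles at $u$ and $v$, and the strict order $\mu,\nu \prec \gamma$ or $\gamma \prec \mu,\nu$ — can be made simultaneously small. Each alone is a routine pumping lemma, but the pumping operations interact: shortening $\mu$ may change whether $\gamma$ is a suffix of it; shortening $\gamma$ may change whether it cycles at both states or whether it still beats $\mu$ and $\nu$ co-lexicographically. I would handle this by fixing a strict order of operations — first normalize $\gamma$ so that it cycles at both $u,v$ and has controlled length, \emph{then} normalize $\mu$ and $\nu$ relative to this fixed $\gamma$, treating the "$\gamma \ndashv$" and order conditions as invariants to be preserved at each step — and by being generous with the length budget for $\gamma$ (it is allowed to be the largest of the three, which is exactly what item~4 asserts). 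The details are a finite-automaton counting exercise, not a conceptual difficulty, so I would state the normalization lemmas and then verify the arithmetic of the bound.
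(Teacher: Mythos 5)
A preliminary remark: this paper does not actually prove Theorem~\ref{polynomialW}; it imports it from \cite{ADPP2}, explicitly calls that proof ``long and technical'', and in the Appendix proves only the weaker Proposition~\ref{general polynomialW} \emph{by invoking} Theorem~\ref{polynomialW} as a black box. So your proposal must be judged on its own. Its skeleton is the right one, and your converse direction is essentially complete: interleaving $\mu\gamma^i$ and $\nu\gamma^j$, which all converge monotonically to $\gamma^\omega$ from the side dictated by condition~3 (strict monotonicity being exactly what $\gamma \ndashv \mu,\nu$ buys), yields a monotone sequence in $\pf L$ alternating between the distinct classes of $\mu$ and $\nu$, contradicting Lemma~\ref{monotone}.

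The forward direction, however, has a genuine gap. A ``pigeonhole/Ramsey-type argument on the sequence'' does not by itself produce one word $\gamma$ labelling a cycle at \emph{both} $u$ and $v$: pigeonhole gives you repeated states along individual runs, not a common cycle label at two different states. The missing object is the right-to-left limit of the monotone sequence: for $\alpha_1\prec\alpha_2\prec\cdots$ the longest common suffix of $\alpha_i$ and $\alpha_j$ is non-increasing in $j$, hence stabilizes, and the stabilized suffixes grow into an infinite string $\omega$ whose arbitrarily long finite suffixes are readable into both $u$ and $v$; only then does pigeonhole on \emph{pairs} of states along these two backward reads produce a common cycle word, and only from the position of the sequence relative to $\omega$ do conditions~3 and $\gamma\ndashv\mu,\nu$ emerge. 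Without $\omega$ the central step is asserted, not proved. The length bounds also contain two concrete errors: (a) padding the shorter of $\mu,\nu$ with $\gamma$-cycles makes $\gamma$ a suffix of it, destroying $\gamma\ndashv\mu,\nu$ (the standard repair, used elsewhere in this paper, is to pass to a power $\gamma^m$ longer than both strings, not to pad $\mu$); (b) $\mu$ and $\nu$ are different strings of different lengths, so ``the pair of paths labeled $\mu$ and $\nu$'' is not a path in $\mathcal D_\mt L\times\mathcal D_\mt L$ --- the product-automaton trick applies to the single word $\gamma$ read from two states, not to the pair $(\mu,\nu)$. Finally, the specific polynomial $n^3+2n^2+n+2$ is claimed to ``fall out of careful accounting'' with no accounting given; since the coordinated shortening of $\mu,\nu,\gamma$ while preserving all three side conditions is precisely where the difficulty of the theorem lies, this cannot be waved through.
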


The proof of Theorem \ref{polynomialW} in \cite{ADPP2} can be adapted to work on generic DFA's. 
Since such proof is both long and technical, we will prove instead (in the Appendix) the following proposition, where we worsen the bound given in condition 4. 
This is not a problem, since we will only use the fact that this bound is polynomial in $n$.

\begin{proposition}
\label{general polynomialW}
Let $\mathcal D = (Q, q_0, \delta, F, \Sigma)$ be a DFA recognizing the language $\mathcal L$,  with $n = |\mathcal D|$. 
\\$\mathcal L$ is not Wheeler if and only if there exist $\mu, \nu$ and $\gamma$ in $\Sigma^*$, with $\gamma \ndashv \mu,\nu$, such that:
\begin{enumerate}
    \item $\mu \not\equiv_\mathcal L \nu$ and they label paths from $q_0$ to states $u$ and $v$, respectively;
    \item $\gamma$ labels two cycles, one starting from $u$ and one starting from $v$;
    \item $\mu, \nu \prec \gamma$\; or \; $\gamma \prec \mu,\nu$.
\end{enumerate}
The length of the strings $\mu, \nu$ and $\gamma$ satisfying the above  can be bounded:
\begin{enumerate}
\setcounter{enumi}{3}
    \item $|\mu|, |\nu| \le |\gamma| \le (n^3+2n^2+n+2)\cdot n^2$.
\end{enumerate}
\end{proposition}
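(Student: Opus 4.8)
The plan is to prove both directions by relating arbitrary DFAs to their minimized version, for which Theorem~\ref{polynomialW} already gives the characterization. Let $\mathcal D = (Q, q_0, \delta, F, \Sigma)$ be a DFA recognizing $\mathcal L$ with $n = |\mathcal D|$, and let $\mathcal D_\mathcal L$ be the minimum DFA for $\mathcal L$; recall that $|\mathcal D_\mathcal L| \le n$ and that there is a surjective homomorphism $h \colon Q \to Q_{\mathcal D_\mathcal L}$ sending a state reached by $\alpha$ to the state of $\mathcal D_\mathcal L$ reached by $\alpha$, with $h$ commuting with transitions and $\mu \equiv_\mathcal L \nu$ iff $h$ sends the $\mu$-state and $\nu$-state of $\mathcal D$ to the same state. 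The characterizing condition (items 1--3) is phrased purely in terms of $\not\equiv_\mathcal L$, of strings labeling paths from $q_0$, and of strings labeling cycles, and the notion ``$\mu \not\equiv_\mathcal L \nu$, both labeling paths from $q_0$, with $\gamma$ labeling a cycle at each endpoint and the stated co-lex inequality'' is invariant under passing between $\mathcal D$ and $\mathcal D_\mathcal L$ \emph{in one direction}: if such $\mu,\nu,\gamma$ exist for $\mathcal D$, then applying $h$ shows they witness the condition for $\mathcal D_\mathcal L$ as well (a cycle in $\mathcal D$ maps to a closed walk, hence a cycle, in $\mathcal D_\mathcal L$, since determinism forces the same state to be revisited). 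The converse direction is the one requiring work: from a witness in $\mathcal D_\mathcal L$ we must lift to a witness in $\mathcal D$, and a single cycle downstairs need not lift to a cycle upstairs — only to a path whose endpoints have the same $h$-image.

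Concretely I would argue: first, $\mathcal L$ is not Wheeler iff (by Theorem~\ref{polynomialW} applied to $\mathcal D_\mathcal L$) there are $\mu,\nu,\gamma$ satisfying 1--3 for $\mathcal D_\mathcal L$, with the length bound $|\mu|,|\nu|\le|\gamma|\le m^3+2m^2+m+2$ where $m=|\mathcal D_\mathcal L|\le n$. So the entire content is to show that the existence of \emph{some} $\mu,\nu,\gamma$ satisfying 1--3 for $\mathcal D$ (with no length bound) is equivalent to the same for $\mathcal D_\mathcal L$, and then that if a witness for $\mathcal D$ exists at all, one exists with the claimed bound $(n^3+2n^2+n+2)\cdot n^2$. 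The ``$\mathcal D$-witness $\Rightarrow$ $\mathcal D_\mathcal L$-witness'' direction is the easy homomorphism argument above. For ``$\mathcal D_\mathcal L$-witness $\Rightarrow$ $\mathcal D$-witness'': reading $\mu$ and $\nu$ in $\mathcal D$ lands in states $u,v$ with $h(u)\ne h(v)$, so $\mu\not\equiv_\mathcal L\nu$; now read $\gamma$ repeatedly from $u$. Since $\gamma$ is a cycle in $\mathcal D_\mathcal L$, all states $u, \delta(u,\gamma), \delta(u,\gamma^2),\dots$ have the same $h$-image, so among the first $n+1$ of them two coincide, giving a power $\gamma^s$ ($1\le s\le n$) that labels a cycle at $u$; similarly $\gamma^t$ labels a cycle at $v$ for some $1\le t\le n$. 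Then $\gamma^{st}$ labels a cycle both at $u$ and at $v$. Replacing $\gamma$ by $\gamma^{st}$ preserves $\gamma\ndashv\mu,\nu$ and the co-lex inequalities of item 3 (appending copies of $\gamma$ on the \emph{right}, i.e. at the front of the reversed strings, does not disturb a strict co-lex comparison already witnessed — or, cleanly, $\gamma\prec\mu$ implies $\gamma^{st}\prec\mu$ and $\mu\prec\gamma$ implies $\mu\prec\gamma^{st}$ because the co-lex order is a congruence under the relevant one-sided concatenation and $\gamma,\gamma^{st}$ share the same last character hence compare with $\mu,\nu$ the same way once we pass to the relevant prefix argument). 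This yields a genuine witness $\mu,\nu,\gamma^{st}$ for $\mathcal D$.

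For the length bound in item 4, start from a $\mathcal D_\mathcal L$-witness with $|\gamma|\le n^3+2n^2+n+2$; after the lifting step we use $\gamma^{st}$ with $s,t\le n$, so $|\gamma^{st}|\le st\cdot|\gamma|\le n^2\cdot(n^3+2n^2+n+2)$, and $|\mu|,|\nu|\le|\gamma|\le|\gamma^{st}|$, which is exactly the stated bound. One still has to check that $\mu,\nu$ themselves need not be re-chosen or lengthened — they are inherited verbatim from the $\mathcal D_\mathcal L$-witness and already label paths in $\mathcal D$ from $q_0$, so no change is needed; and $\gamma\ndashv\mu,\nu$ together with $s,t\ge 1$ gives $\gamma^{st}\ndashv\mu,\nu$ since a suffix of $\gamma^{st}$ of length $\le|\gamma|$ that is a suffix of $\mu$ or $\nu$ would be a suffix of $\gamma$, contradiction (here one uses $|\mu|,|\nu|<|\gamma|$; if not one argues directly that $\gamma\ndashv\mu$ forces $\gamma^{st}\ndashv\mu$ by a short periodicity argument).

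The main obstacle is the co-lex invariance in the lifting step: making precise that replacing $\gamma$ by a power $\gamma^{st}$ (and, symmetrically, that $\mu,\nu$ need no modification) keeps \emph{all} of conditions 1--3 intact, in particular the interaction between the suffix condition $\gamma\ndashv\mu,\nu$ and the strict co-lex inequalities. This is where I expect to spend most of the care: it is essentially a collection of elementary but fiddly facts about the co-lexicographic order (it is a right-to-left lexicographic order, so ``prepending'' in reversed coordinates is the operation that behaves well), and the cleanest route is to isolate once and for all a small lemma — if $\gamma\ndashv\mu$ and $\gamma\prec\mu$ then $\gamma\gamma'\prec\mu$ for any $\gamma'$ with $\gamma\gamma'$ still not a suffix of $\mu$, and dually — and then apply it mechanically. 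Everything else (the homomorphism $h$, the pigeonhole on cycle lengths, the arithmetic of the bound) is routine.
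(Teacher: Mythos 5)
Your overall architecture is the same as the paper's: the easy direction projects a witness from $\mathcal D$ down to the minimum DFA $\mathcal D_\mathcal L$ and invokes Theorem \ref{polynomialW}, and the hard direction lifts a witness from $\mathcal D_\mathcal L$ back to $\mathcal D$ by pigeonhole on powers of $\gamma$, paying a factor $O(n^2)$ in the length of $\gamma$. However, there is a genuine gap in your lifting step. You claim that, since the states $u, \delta(u,\gamma), \delta(u,\gamma^2),\dots$ all have the same image in $\mathcal D_\mathcal L$, two of the first $n+1$ coincide and this gives ``a power $\gamma^s$ that labels a cycle \emph{at $u$}.'' The pigeonhole only gives $\delta(u,\gamma^i)=\delta(u,\gamma^j)$ for some $0\le i<j\le n$, i.e.\ a cycle labelled $\gamma^{j-i}$ based at $\delta(u,\gamma^i)$ --- not at $u$. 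In a non-minimal DFA the $\gamma$-orbit of $u$ can be a ``rho'': a nontrivial tail followed by a cycle, even though every state on it collapses to a single state of $\mathcal D_\mathcal L$. For instance, take $u\ne u'$ with $\delta(u,\gamma)=u'=\delta(u',\gamma)$ and $u,u'$ Myhill--Nerode equivalent: no power of $\gamma$ ever returns to $u$. This invalidates your subsequent assertion that ``$\mu,\nu$ are inherited verbatim \dots so no change is needed.''

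The fix --- and it is exactly what the paper does --- is to replace $\mu$ by $\hat\mu\hat\gamma^{i}$ and $\nu$ by $\hat\nu\hat\gamma^{i'}$, where $i,i'\le n$ are the tail lengths, and to take $\gamma$ to be a power $\hat\gamma^{h''}$ with $h''$ a common multiple of the two cycle lengths chosen large enough that $|\gamma|>|\mu|,|\nu|$ (whence $\gamma\ndashv\mu,\nu$ comes for free). This forces you to re-verify conditions 1 and 3 for the \emph{modified} $\mu,\nu$: condition 1 survives because $\hat\mu\equiv_\mathcal L\hat\mu\hat\gamma^{i}$ (in $\mathcal D_\mathcal L$ the string $\hat\gamma$ really does label a cycle at the state of $\hat\mu$), and condition 3 survives by the common-suffix-stripping argument you sketch (compare $\hat\gamma^{h''}$ with $\hat\mu\hat\gamma^{i}$ by cancelling the common suffix $\hat\gamma^{i}$ and using that $\hat\gamma$ and $\hat\mu$, neither a suffix of the other, are co-lex separated within their last $\min(|\hat\gamma|,|\hat\mu|)$ characters). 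The arithmetic still closes, since $|\mu|\le(n+1)\,|\hat\gamma|\le|\gamma|\le n^2(n^3+2n^2+n+2)$, but as written your proof does not produce a valid witness for $\mathcal D$.
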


The polynomial bound given by condition 4 of Theorem \ref{polynomialW} allows us to design an algorithm that decides whether a given DFA recognizes a Wheeler language: using dynamic programming (see \cite{ADPP}) it is possible to keep track of all the relevant paths and cycles inside the DFA and check, in polynomial time, whether there exists three strings satisfying the conditions of the theorem.  

Things change if, instead of a DFA, we are given an NFA. Trying to exploit the same idea used for DFA's does not work: the problem of deciding whether two strings $\mu$ and $\nu$ read by an NFA are Myhill-Nerode equivalent is PSPACE-complete. Even worse, a straightforward attempt of building the minimum DFA recognizing the NFA's language might lead to a blow-up of the sates, resulting in a exponential time (and exponential space) algorithm. 

We show that the problem of deciding whether an NFA recognizes a Wheeler language is indeed hard, but does not necessarily require exponential time to be solved. Instead, the problem turns out to be PSPACE-complete. To show this, we first show how to adapt Theorem \ref{polynomialW} to work on NFA's, as described in the following corollary.

\begin{corollary}
\label{nfa length}
Let $\mathcal A = (Q, q_0, \delta, F, \Sigma)$ be an NFA of dimension $n := |\mathcal A|$. Then $\mathcal L := \la A$ is not Wheeler if and only if there exist three strings $\mu, \nu, \gamma$ such that $\gamma\ndashv\mu,\nu$ and
\begin{enumerate}
    \item $\mu\gamma^i \not\equiv_{\mathcal L} \nu\gamma^j$ for all $0 \le i,j \le 2^n$; 
    \item $\gamma$ labels two cycles, one starting from a state $p \in \delta(q_0,\mu)$ and one from a state $r \in \delta(q_0,\nu)$;
    \item $\mu, \nu \prec \gamma$ or $\gamma \prec \mu, \nu$.
\end{enumerate}
Moreover, the  length of the strings $\mu, \nu$ and $\gamma$ satisfying the above can be bounded:
\begin{enumerate}
\setcounter{enumi}{3}
    \item $|\mu|, |\nu| < |\gamma| < n^3\cdot(2^{3n}+2\cdot 2^{2n}+2^n+2)\in O(2^{3n})$.
\end{enumerate}
\end{corollary}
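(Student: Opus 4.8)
The plan is to reduce the NFA case to the DFA case via determinization, applying Proposition~\ref{general polynomialW} to the minimum DFA $\mathcal D$ obtained from $\mathcal A$ by the powerset construction, and then to translate the resulting strings back through the NFA. Let $\mathcal D$ be the DFA produced by determinizing $\mathcal A$; it has at most $2^n$ states, so the minimum DFA $\mathcal D_{\mathcal L}$ recognizing $\mathcal L$ also has at most $2^n$ states. By Proposition~\ref{general polynomialW} applied to $\mathcal D$ (or to $\mathcal D_{\mathcal L}$), $\mathcal L$ is not Wheeler if and only if there exist $\mu,\nu,\gamma$ with $\gamma\ndashv\mu,\nu$ such that $\mu\not\equiv_{\mathcal L}\nu$ label paths in $\mathcal D$ to distinct-context states, $\gamma$ labels a cycle from each of those two states, $\mu,\nu\prec\gamma$ or $\gamma\prec\mu,\nu$, and $|\mu|,|\nu|\le|\gamma|\le (d^3+2d^2+d+2)\cdot d^2$ where $d\le 2^n$ is the size of the DFA; this gives the bound in condition~4.

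The next step is to re-express the DFA-side conditions in terms of $\mathcal A$. Since reading $\mu$ in $\mathcal D$ ends in the state $\delta(q_0,\mu)$ (the subset of $Q$ reachable in $\mathcal A$), a $\gamma$-cycle in $\mathcal D$ starting at that state is exactly the statement that $\delta(q_0,\mu\gamma)=\delta(q_0,\mu)$ as subsets of $Q$; pumping the cycle, one extracts an NFA state $p\in\delta(q_0,\mu)$ lying on a $\gamma$-cycle in $\mathcal A$, and similarly $r\in\delta(q_0,\nu)$ — this is condition~2. For condition~1: since $\gamma$ labels cycles at the DFA states reached by $\mu$ and $\nu$, the DFA state reached by $\mu\gamma^i$ equals that reached by $\mu$ for every $i\ge 0$, and likewise for $\nu\gamma^j$; hence $\mu\gamma^i\equiv_{\mathcal L}\mu$ and $\nu\gamma^j\equiv_{\mathcal L}\nu$ for all $i,j$, and from $\mu\not\equiv_{\mathcal L}\nu$ we get $\mu\gamma^i\not\equiv_{\mathcal L}\nu\gamma^j$ for all $i,j$ — in particular for all $0\le i,j\le 2^n$. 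Condition~3 is unchanged. For the converse direction, conditions~1--3 for $\mathcal A$ imply conditions~1--3 for $\mathcal D_{\mathcal L}$: from a $\gamma$-cycle at an NFA state in $\delta(q_0,\mu)$ one does not immediately get a $\gamma$-cycle in the DFA, so here one instead argues via the characterization of Theorem~\ref{polynomialW}/Proposition~\ref{general polynomialW} applied after noting that $\mu\gamma^i$ stabilizes modulo $\equiv_{\mathcal L}$ within $2^n$ steps (there are at most $2^n$ Myhill--Nerode classes), so the finitely many inequivalences in condition~1 already force the existence of DFA-level $\mu',\nu',\gamma'$ witnessing non-Wheelerness via Lemma~\ref{monotone} — a monotone sequence $\mu\gamma^i$ (sorted co-lexicographically, using $\mu,\nu\prec\gamma$ or $\gamma\prec\mu,\nu$) that does not become eventually constant modulo $\equiv_{\mathcal L}$.

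The main obstacle I expect is the bookkeeping in the "only if" direction, specifically showing that the DFA cycle at the state $\delta(q_0,\mu)$ descends to a genuine $\gamma$-labelled cycle through a single NFA state $p\in\delta(q_0,\mu)$ rather than merely a set-level fixpoint: one must iterate $\gamma$ enough times (at most $|Q|$, or one composes the relation $\gamma$ induces on $\delta(q_0,\mu)$ with itself until it is idempotent) to land on a state that $\gamma$ maps to itself, which is why a factor of roughly $n^3$ (absorbing $|Q|$ and the cost of aligning the two cycles on the $\mu$-side and $\nu$-side simultaneously, as in case~2 of the proof of Theorem~\ref{prec not eq}) appears multiplied against the DFA bound $d^3+2d^2+d+2$ with $d=2^n$, yielding $|\gamma|< n^3\cdot(2^{3n}+2\cdot2^{2n}+2^n+2)\in O(2^{3n})$. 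The $2^n$ ceiling on $i,j$ in condition~1 is then exactly what is needed so that checking finitely many inequivalences suffices, because $\equiv_{\mathcal L}$ has at most $2^n$ classes and the sequences $(\mu\gamma^i)_i$, $(\nu\gamma^j)_j$ each stabilize modulo $\equiv_{\mathcal L}$ within that many steps.
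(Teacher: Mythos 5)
Your architecture on the harder ($\Rightarrow$) direction matches the paper's: determinize, invoke the DFA characterization to get $\hat\mu,\hat\nu,\hat\gamma$, then pump a run of $\hat\mu\hat\gamma^n$ (resp.\ $\hat\nu\hat\gamma^n$) through $\mathcal A$ to find single NFA states $p,r$ lying on $\hat\gamma^{k-h}$- and $\hat\gamma^{k'-h'}$-cycles, and set $\gamma:=\hat\gamma^{\mathrm{lcm}(k-h,k'-h')\cdot n}$; your diagnosis of where the $n^3$ comes from ($\mathrm{lcm}<n^2$ times the extra factor $n$ that forces $|\gamma|>|\mu|,|\nu|$ and hence $\gamma\ndashv\mu,\nu$) is exactly the paper's. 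One concrete slip: in your first paragraph you apply Proposition \ref{general polynomialW} to the $2^n$-state determinization, quote the bound $(d^3+2d^2+d+2)\cdot d^2$ with $d\le 2^n$, and claim ``this gives the bound in condition 4''. It does not --- that is $O(2^{5n})$. You must apply Theorem \ref{polynomialW} to the \emph{minimum} DFA (also of size at most $2^n$) to get $|\hat\gamma|\le 2^{3n}+2\cdot 2^{2n}+2^n+2$ \emph{without} the extra $d^2$ factor; the $n^3$ multiplier then comes only from the NFA-side cycle alignment, as your last paragraph correctly states. Your two paragraphs are inconsistent on this point and only the second is right.

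On the ($\Leftarrow$) direction you genuinely diverge from the paper. The paper lifts the NFA-level $\gamma$-cycles back up to the DFA by pigeonhole on the $2^n+1$ states $\hat\delta(\hat q_0,\mu\gamma^0),\dots,\hat\delta(\hat q_0,\mu\gamma^{2^n})$, producing DFA-level witnesses $\mu\gamma^h,\ \nu\gamma^{h'},\ \gamma^{\mathrm{lcm}(k-h,k'-h')\cdot 2^n}$ for Theorem \ref{polynomialW}; this is precisely where the range $0\le i,j\le 2^n$ in condition 1 is consumed (one needs $\mu\gamma^h\not\equiv_\mathcal L\nu\gamma^{h'}$ for whichever $h,h'\le 2^n$ the pigeonhole yields). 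You instead appeal to Lemma \ref{monotone} directly: merge the two increasing sequences $(\mu\gamma^i)_i$ and $(\nu\gamma^j)_j$ (both converge co-lexicographically to the same limit, by condition 3 together with $\gamma\ndashv\mu,\nu$) into one monotone sequence, and use eventual periodicity of the sequence of Myhill--Nerode classes (at most $2^n$ of them) to propagate the finitely many inequivalences of condition 1 to all $i,j$, so the merged sequence is never eventually constant modulo $\equiv_\mathcal L$. That route is valid and arguably more direct, but you should say the class sequence ``becomes eventually periodic, with every class already attained at some index $\le 2^n$'' rather than ``stabilizes'', and you should not describe Lemma \ref{monotone} as producing ``DFA-level $\mu',\nu',\gamma'$'' --- it yields non-Wheelerness without constructing any such witnesses.
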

\begin{proof}
Let $\mathcal D = (\hat Q, \hat q_0, \hat \delta, \hat F, \Sigma)$ be the minimum DFA recognizing $\mathcal L$. Clearly $\mathcal D$ has at most $2^n$ states.
\\($\Longleftarrow$) From condition 2 it follows that $\mu\gamma^* \subseteq \pf L$, so consider the following list of $2^n+1$ states of $\mathcal D$:
\[
\hat\delta(\hat q_0, \mu\gamma^0), \, \hat\delta(\hat q_0, \mu\gamma^1), \, \dots, \, \hat\delta(\hat q_0, \mu\gamma^{2^n}).
\]
Since $\mathcal D$ has at most $2^n$ states, there must exist two integers $0 \le h < k \le 2^n$ such that $\hat\delta(\hat q_0, \mu\gamma^h) = \hat\delta(\hat q_0, \mu\gamma^k)$. Therefore $\gamma^{k-h}$ labels a cycle starting from $\hat\delta(\hat q_0, \mu\gamma^h)$. Similarly, there exist $0 \le h' < k' \le 2^n$ such that $\gamma^{k'-h'}$ labels a cycle starting from $\hat\delta(\hat q_0, \nu\gamma^{h'})$. The strings 
\begin{align*}
\hat\mu &:= \mu \gamma^{h} \\
\hat\nu &:= \nu \gamma^{h'} \\
\hat\gamma &:= \gamma^{\text{lcm}(k-h, k'-h')\cdot 2^n},
\end{align*}
where the factor $2^n$ in the definition of $\hat\gamma$ ensures that $|\hat\mu|, |\hat\nu| < |\hat\gamma|$, so that $\hat\gamma\not \dashv \hat\mu, \hat\nu$ and the strings $\hat\mu, \hat\nu, \hat\gamma$ satisfy condition 2 of Theorem \ref{polynomialW}. 
Condition 1 of Theorem \ref{polynomialW} follows automatically from conditions 1 of this corollary. Lastly, condition 3 of Theorem \ref{polynomialW} follows from conditions 3 of this corollary and the fact that $\gamma\ndashv\mu,\nu$.
Thus we can apply Theorem \ref{polynomialW} to conclude that $\mathcal L$ is not Wheeler.
\\$(\Longrightarrow)$
Since $\mathcal L = \la D$ is not Wheeler, let $\hat\mu, \hat\nu, \hat\gamma$ be  strings satisfying  Theorem \ref{polynomialW}. The DFA $\mathcal D$ has at most $2^n$ states, hence the length of $\hat\gamma$ is bounded by the constant $2^{3n}+2\cdot 2^{2n}+2^n+2$.
We have $\hat\mu\hat\gamma^* \subseteq \pf L$, so let $t_0 = q_0, t_1, \dots, t_m$ be a run of $\hat\mu\hat\gamma^n$ over $\mathcal A$. We set $u := |\hat\mu|$ and $g := |\hat\gamma|$, and consider the list of $n+1$ states 
\[
t_u, \; t_{u+g}, \; t_{u+2g}, \; \dots, \; t_{u+ng} = t_m 
\]
Since $\mathcal A$ has $n$ states, there must exist two integers $0 \le h < k \le n$ such that $t_{u+hg} = t_{u+kg}$. That is, there exists a state $p := t_{u+hg}$ such that $p \in \delta\left(q_0, \hat\mu\hat\gamma^h\right)$ and $\hat\gamma^{k-h}$ labels a cycle starting from $p$. We can repeat the same argument for a run of $\hat\nu\hat\gamma^n$ over $\mathcal A$ to find a state $r$ and two integers $h', k'$ such that $r \in \delta(q_0, \hat\nu\hat\gamma^{h'})$ and $\hat\gamma^{k'-h'}$ labels a cycle starting from $r$. We can then define the strings
\begin{align*}
\mu &:= \hat\mu \hat\gamma^{h} \\
\nu &:= \hat\nu \hat\gamma^{h'} \\
\gamma &:= \hat\gamma^{\text{lcm}(k-h, k'-h')\cdot n}
\end{align*}
which satisfy the conditions 2 and 3.
\\Condition 4 is satisfied since $|\hat\gamma| \le 2^{3n}+2\cdot2^{2n}+2^n+2$ and $\text{lcm}(k-h,k'-h') < n^2$.
\\Finally, condition 1 is satisfied for all $i,j\ge0$. Indeed, for all $l$ the strings $\hat\mu$ and $\hat\mu\hat\gamma^l$ lead to the same state of $\mathcal D$, thus $\hat\mu \equiv_\mathcal L \hat\mu\hat\gamma^l$. Similarly, for all $l$ we also have $\hat\nu \equiv_\mathcal L \hat\nu\hat\gamma^l$. Since $\forall i\; \exists s_i$ such that $\mu\gamma^i = \hat\mu \hat\gamma^{s_i}$, and similarly, $\forall j\; \exists s_j$ such that $\nu \gamma^j= \hat\nu \hat\gamma^{s_j}$, the thesis follows from $\hat\mu \not\equiv_\mathcal L \hat\nu$. 
\end{proof}

Despite the fact the the bound in condition 4 has become exponential by switching to NFA's, it is still possible to check in polynomial space (but exponential time) whether there are three strings $\mu, \nu$ and $\gamma$ satisfying the conditions of Proposition \ref{general polynomialW}. Thus we can prove the following:

\begin{theorem}
\label{pspace}
Given an NFA $\mathcal A= (Q, q_0, \delta, F, \Sigma)$, deciding whether the language $\mathcal L := \la A$ is Wheeler is PSPACE-complete. The same result holds even if $\mathcal A$ is reduced.
\end{theorem}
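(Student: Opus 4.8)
\textbf{Proof proposal for Theorem \ref{pspace}.}

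The plan is to establish membership in PSPACE and PSPACE-hardness separately, with the hardness part essentially inherited from the reductions already developed in Theorem \ref{prec not eq} and Lemma \ref{reduced universality}.

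For membership in PSPACE, I would rely on Corollary \ref{nfa length}: the language $\mathcal L = \la A$ is \emph{not} Wheeler if and only if there exist strings $\mu,\nu,\gamma$ with $\gamma \ndashv \mu,\nu$ satisfying conditions 1--3 of that corollary, and (condition 4) we may assume $|\mu|,|\nu| < |\gamma| < n^3(2^{3n}+2\cdot 2^{2n}+2^n+2) \in O(2^{3n})$. So the non-deterministic algorithm guesses the lengths of $\mu,\nu,\gamma$ in binary, using $O(n)$ space for each counter (since the lengths are bounded by a number with $O(n)$ bits), then guesses the three strings character by character without storing them. As in the proof of Theorem \ref{prec not eq}, at each step I maintain the subset of states of $\mathcal A$ reachable by reading the current prefixes — this is $O(n)$ space — so at the relevant moments I can record the reachable-state sets $\delta(q_0,\mu)$, $\delta(q_0,\mu\gamma^i)$, etc. To check condition 2 (that $\gamma$ labels a cycle from some $p \in \delta(q_0,\mu)$ and some $r\in\delta(q_0,\nu)$), I guess such states $p,r$ and verify that reading $\gamma$ from $p$ returns to $p$ and from $r$ returns to $r$, which is a subset-closure computation done on the fly in polynomial space. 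Condition 3 (the co-lexicographic comparisons $\mu,\nu \prec \gamma$ or $\gamma \prec \mu,\nu$, and $\gamma\ndashv\mu,\nu$) is handled exactly with the $\rho$-variable bookkeeping technique from Theorem \ref{prec not eq}, comparing the strings suffix-wise as they are guessed. The only genuinely new ingredient over Theorem \ref{prec not eq} is condition 1: we must verify $\mu\gamma^i \not\equiv_{\mathcal L} \nu\gamma^j$ for all $0\le i,j \le 2^n$. Since Myhill-Nerode non-equivalence of two strings read by an NFA is itself a PSPACE question (it is an inequivalence of NFA's, solvable in polynomial space), and there are only $(2^n+1)^2 \in 2^{O(n)}$ pairs $(i,j)$ to check — each requiring a string of the form $\mu\gamma^i$ of length $2^{O(n)}$, hence with an $O(n)$-bit length counter — we can iterate over all pairs, and for each run a polynomial-space subroutine deciding $\mu\gamma^i \not\equiv_\mathcal L \nu\gamma^j$. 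The total space used is polynomial in $n$; the running time is exponential, but that is irrelevant for a PSPACE bound. By Savitch's theorem (NPSPACE${}={}$PSPACE), the existence of $\mu,\nu,\gamma$ — i.e. non-Wheelerness — is in PSPACE, and since PSPACE is closed under complement, Wheelerness of $\mathcal L$ is in PSPACE.

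For PSPACE-hardness, I would reduce from the universality problem, for generic NFA's in the first statement and for \emph{reduced} NFA's in the second, invoking the PSPACE-completeness of the latter from Lemma \ref{reduced universality}. The idea is to take an NFA $\mathcal A$ and build an NFA $\mathcal A'$ whose language is Wheeler if and only if $\la A$ is universal — or, more convenient, is \emph{not} Wheeler if and only if $\la A$ is universal (this suffices, since PSPACE is closed under complement). Concretely, I would re-use the gadget of Theorem \ref{prec not eq}: there, starting from $\mathcal A$, we built a reduced NFA $\mathcal A'$ with two states $q_e, q_f$ whose incoming languages were $I_{q_e} = a_1\cdot\mathcal L_\varepsilon + y$ and $I_{q_f} = a_1\cdot\pf{L}\cdot\Sigma + y + z$, and showed $\mathcal L=\Sigma^*$ iff $q_e <_{\mathcal A'} q_f \wedge \Sigma\subseteq\mathcal L_\varepsilon$. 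The non-Wheelerness witness of Lemma \ref{monotone}/Theorem \ref{polynomialW} asks for a monotone non-eventually-constant sequence in $(\pf{L(\mathcal A')},\prec)$; the point is to arrange the construction so that such a sequence exists precisely when $q_e \not<_{\mathcal A'} q_f$, i.e. precisely when $\la A$ fails to be universal — so that $\la A = \Sigma^*$ forces $\mathcal L(\mathcal A')$ to be Wheeler (a small, explicitly Wheeler language), while $\la A \ne \Sigma^*$ produces an embedded infinite antichain-like obstruction making $\mathcal L(\mathcal A')$ non-Wheeler. I expect the main obstacle to be exactly this last design step: engineering $\mathcal A'$ so that the Myhill-Nerode-interval obstruction of Theorem \ref{polynomialW} is equivalent to a clean universality statement about $\la A$, while simultaneously keeping $\mathcal A'$ \emph{reduced} when $\mathcal A$ is (the incoming-language computations, as in Theorem \ref{prec not eq}, must be checked to show no two distinct states of $\mathcal A'$ share an incoming language). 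Once the construction is fixed, verifying the ``$\Leftrightarrow$'' and the polynomial-time bound on the reduction, together with the reduced-ness check, should be routine bookkeeping analogous to what was done in Theorem \ref{prec not eq}. The reduction being polynomial and $\mathcal A'$ being reduced when $\mathcal A$ is, PSPACE-hardness for both the general and the reduced case follows.
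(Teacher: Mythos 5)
Your membership argument is essentially sound and close in spirit to the paper's, though you route it through Corollary \ref{nfa length} applied to $\mathcal A$ directly, whereas the paper guesses strings bounded as in Proposition \ref{general polynomialW} for the (never explicitly built) determinization $\mathcal D$ and simulates $\mathcal D$'s states as subsets of $Q$ on the fly; that way condition 1 is a single NFA-inequivalence test instead of your $(2^n+1)^2$ tests of $\mu\gamma^i\not\equiv_\mathcal L\nu\gamma^j$. Your version still fits in polynomial space (the relation $q\mapsto\delta(q,\gamma)$ can be accumulated in $O(n^2)$ bits while $\gamma$ is guessed, and the exponentially many equivalence checks reuse space), so this half is acceptable, just heavier than necessary.

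The hardness half, however, has a genuine gap, and you have in effect flagged it yourself: the construction is not given, and the gadget you propose to recycle is the wrong tool. The $q_e,q_f$ gadget of Theorem \ref{prec not eq} is engineered so that non-universality of $\la A$ yields a \emph{single} pair of strings ($y$ versus $a_1\beta$) violating the order $<_{\mathcal A'}$ on two specific states; but non-Wheelerness of a \emph{language} is a Myhill--Nerode property requiring an \emph{infinite} obstruction --- by Lemma \ref{monotone} a monotone sequence in $\pf{L}$ that never stabilizes modulo $\equiv_\mathcal L$, equivalently (Theorem \ref{polynomialW}) two $\gamma$-labelled cycles based at inequivalent states with $\gamma$ co-lexicographically on the same side of both access strings. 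Nothing in the $q_e,q_f$ gadget produces such cycles, and it is not ``routine bookkeeping'' to make it do so. The paper's actual reduction is structurally different: it adds $c$-labelled back edges from every final state of $\mathcal A$ to $q_0$ (so the modified automaton accepts $(\mathcal L c)^*\mathcal L$, which equals $(\Sigma+c)^*$ iff $\mathcal L=\Sigma^*$), and then prepends a fresh initial state with an $a$-edge into this component and a $b$-edge into a universal sink looping on $\Sigma\cup\{c\}$, with $a\prec b\prec c$. The $c$-cycles at $q_0$ and at the sink are exactly the two cycles needed, and the interleaved monotone sequence $ac\prec bc\prec ac^2\prec bc^2\prec\cdots$ fails to stabilize modulo $\equiv_{\mathcal L''}$ precisely when some $\alpha\in\Sigma^*\setminus\mathcal L$ witnesses $ac^i\not\equiv_{\mathcal L''}bc^j$ for all $i,j$; when $\mathcal L=\Sigma^*$ the language collapses to $(a+b)(\Sigma+c)^*$, which is Wheeler. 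This cycle-creating, order-interleaving design is the heart of the proof and is absent from your proposal.
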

\begin{proof}
First of all we need to prove that the problem is in PSPACE. We will show instead that its complement is in NPSPACE, then the thesis follows from Savitch's Theorem, which states that NPSPACE = PSPACE, and the fact that PSPACE is closed under complementation. 

Let $\mathcal D$ be the automaton obtained by the determinization of $\mathcal A$ with dimension $d=|\mathcal D|\le 2^n$. We  prove that we can check  the conditions in Proposition \ref{general polynomialW} for the automaton $\mathcal D$, without building it, using polynomial space.
We use non-determinism to guess, bit by bit, the length of $\mu, \nu$ and $\gamma$ and store this guessed information in three counters $u, v, g$ respectively, using $O\big(\log(d^5)\big)=O(n)$ space for each. 
%We also guess  the ending states $p,r \in Q$  of $\mu, \nu$. 
These counters determine which string among $\mu, \nu$ and $\gamma$ is longer and we start guessing the characters of such string from the left to the right, decreasing by one its counter whenever we guess a character. 
When the counter reaches the same value of the second biggest counter, we start guessing the characters of both the first and the second string at the same time and we carry on until they reach the value of the last counter.
Then, we guess simultaneously the characters of all three strings until all counters reach the value 0.
While guessing the characters of $\mu$ (respectively, $\nu$) we update at each step the set of states of $\mathcal A$ reachable from $q_0$ by reading the currently guessed prefix of $\mu$ ($\nu$), so that in the end we obtain the sets $\delta(q_0, \mu)$ and $\delta(q_0, \nu)$.
We proceed similarly for $\gamma$, but this time we compute the set $\delta(q, \gamma)$ for each state $q \in Q$. 
Since $\mathcal D$ is the determinized version of $\mathcal A$, we can verify condition 2 of Proposition \ref{general polynomialW} by checking whether the set $\delta(q_0, \mu)$ and the set 
\[
\delta(q_0, \mu\cdot\gamma)=\bigcup_{p\in \delta(q_0, \mu)}\delta(p, \gamma)
\]
are equal, and we do the same for $\delta(q_0, \nu)$ and $\delta(q_0, \nu\cdot\gamma)$.
Condition 3 of Proposition \ref{general polynomialW} can be checked in constant space. To confront $\mu$ and $\gamma$, we use a variable $\rho$ that indicates whether $\mu$ is less, equal or greater than $\gamma$.
We initialize $\rho$ based on the counters $u,g$ as follows:
\[
\rho:=\begin{cases}
= \quad &\text{if }u=g\\
\vdash \quad &\text{if }u<g.
\end{cases}
\]
We leave $\rho$ unchanged until we start guessing simultaneously $\mu$ and $\gamma$. 
Then, when we guess simultaneously the character $c_1$ for $\mu$ and the character $c_2$ for $\gamma$, we set 
\[
\rho:=\begin{cases}
\prec \quad &\text{if }c_1\prec c_2\\
\succ \quad &\text{if }c_1\succ c_2\\
\rho \quad &\text{if }c_1=c_2.
\end{cases}
\]
Note that if at the end $\rho$ has value $\vdash$, it means that $\mu \vdash \gamma$. Otherwise, we have $\mu \,\rho\, \gamma$. Therefore, we are always able to determine the co-lexicographic order of $\mu$ and $\gamma$. 
To check condition 1 of Proposition \ref{general polynomialW}, consider the automata $A_{\mu}$ and $A_{\nu}$ obtained from the NFA $\mathcal A$ by considering as initial states the sets  $\delta(q_0, \mu)$ and  $\delta(q_0, \nu)$, respectively. 
We have that $\mu \not\equiv_{\mathcal L} \nu$ if and only if $\la{A_{\mu}} \neq \la{A_{\nu}}$, and checking whether $\la{A_{\mu}} = \la{A_{\nu}}$ can be done in polynomial space, since deciding whether two NFA's recognize the same language is a well-known PSPACE-complete problem.

To prove the completeness of the problem, we will show a polynomial reduction from the universality problem for NFA, i.e. the problem of deciding whether the language accepted by an NFA $\mathcal A$, over the alphabet $\Sigma$, is $\Sigma^*$. 

Let $\mathcal A = (Q, q_0,\delta, F, \Sigma)$ be an NFA and let $\mathcal L = \la A$. We can assume without loss of generality that $q_0 \in F$, otherwise $\mathcal A$ would not accept the empty string and we could immediately derive that $\mathcal L \ne \Sigma^*$. Let $a,b,c$ be three characters not in $\Sigma$ and such that $a\prec b\prec c$ with respect to the lexicographical order (the order of the characters of $\Sigma$ is irrelevant in this proof). First, we build the automaton $\mathcal A'$ starting from $\mathcal A$ by adding an edge $(q_f, q_0, c)$ for each final state $q_f \in F$, see the top part of Figure \ref{fig:pspace}. Notice that $\mathcal A'$ recognizes the language $\mathcal L' = \la{A'} = (\mathcal Lc)^* \cdot \mathcal L$, and it is straightforward to prove that $\mathcal L = \Sigma^*$ if and only if $\mathcal L' = (\Sigma + c)^*$: if $\mathcal L = \Sigma^*$, let $\alpha$ be a string in $(\Sigma + c)^*$ containing $n$ occurrences of $c$. Then $\alpha = \alpha_0\, c\, \alpha_2\, c\,\dots\, \alpha_{n-1}\, c\, \alpha_{n}$ for some $\alpha_1, \dots, \alpha_{n} \in \Sigma^*$. Hence $\alpha \in (\Sigma^*c)^*\cdot \Sigma^* = \mathcal L'$. On the other hand, if $\mathcal L \ne \Sigma^*$ let $\alpha$ be a string in $\Sigma^* \setminus \mathcal L$. Then $\alpha \cdot c \notin \mathcal L'$.

\begin{figure}
\begin{center}
\begin{tikzpicture}[->,>=stealth', semithick, initial text={}, auto, scale=.4]
\node[state, label=above:{},initial] (0) at (-6,0) {$q'_0$};

\node[state, label=above:{}, accepting] (1) at (0,5) {$q_0$};
\node[state, label=above:{}, accepting] (2) at (5,-3) {$q_1$};
\node[state, label=above:{}] (3) at (10,5) {N};
%\node[state, label=above:{}, accepting] (4) at (14,-3) {$q_2$};
\node[state, label=above:{}, accepting] (5) at (15,5) {A};
\node[] (6) at (5,5) {$\mathcal A$};
\node (7) at (-4.5,8) {$\mathcal A'$};
\draw[dashed, rounded corners = 10] (-2,3.3) rectangle (17,6.8);
\draw[dashed, rounded corners = 10] (-3,2.3) rectangle (18,9.8);

\draw (0) edge node {$a$} (1);
\draw (0) edge[below] node {$b$} (2);

\draw (1) edge[loop above] node {c} (1);

\draw (5) edge[bend right = 40, above] node {$c$} (1);
\draw (2) edge[loop right] node {$\Sigma, c$} (2);

\end{tikzpicture}
\end{center}
\caption{The automaton $\mathcal A''$. Every accepting state of $\mathcal A$, labeled A in the figure, has a back edge labeled $c$ connecting it to $q_0$. Conversely, non-accepting states of $\mathcal A$, labeled N in the figure, do not have such back edges.}
\label{fig:pspace}
\end{figure}
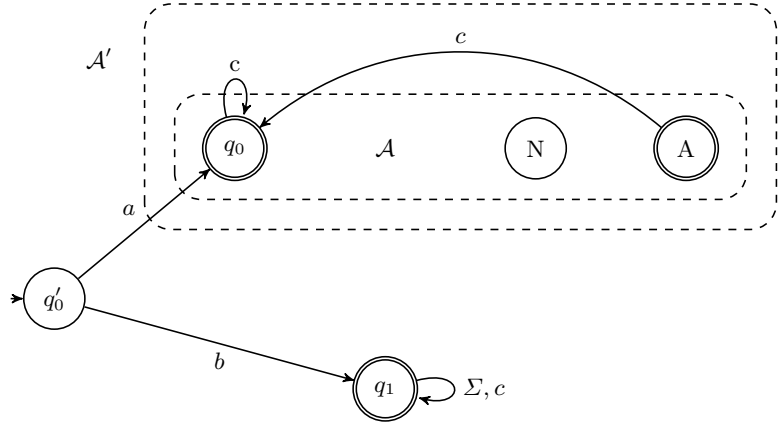

We build a second automaton $\mathcal A''$ as depicted in Figure \ref{fig:pspace}. Let $\mathcal L'' = \la {A''}$ be the language recognized by $\mathcal A''$. We claim that $\mathcal L = \Sigma^*$ if and only if $\mathcal L''$ is Wheeler.
\\$(\Longrightarrow)$ If $\mathcal L = \Sigma^*$, we have already proved that $\mathcal L' = (\Sigma+c)^*$. Hence we have $\mathcal L'' = (a + b) \cdot (\Sigma+c)^*$. The minimum DFA recognizing $\mathcal L''$ has only one loop, therefore by Theorem \ref{polynomialW} $\mathcal L''$ is Wheeler.
\\$(\Longleftarrow)$ If $\mathcal L \ne \Sigma^*$, let $\alpha$ be a string in $\Sigma^* \setminus \mathcal L$. Note that $\alpha \ne \varepsilon$ since we assumed that $\varepsilon \in \mathcal L$. Every possible run of $\alpha$ over $\mathcal A$ must lead to a non-accepting state, hence $\alpha \cdot c \notin \mathcal L'$. 
This implies that for all $i \ge 0$ we have $a \cdot c^i \cdot \alpha \cdot c \notin \mathcal L''$ (notice that the only edge labeled $c$ leaving $q_0$ ends in $q_0$). On the other hand, for all $j \ge 0$ we have $bc^j \cdot \alpha \cdot c \in \mathcal L''$, hence for all $i, j \ge 0$ we have $ac^i \not\equiv_{\mathcal L''} bc^j$.
Thus the following monotone sequence in $\pf {L''}$
\[
ac \prec bc \prec acc \prec bcc \prec \dots \prec ac^n \prec bc^n \prec \dots
\]
is not eventually constant modulo $\equiv_\mathcal{L''}$. From Lemma \ref{monotone} it follows that $\mathcal L''$ is not Wheeler.

Note that in the reduction described in   Figure \ref{fig:pspace}, if the starting NFA $\mt A$ was reduced, then also $\mt A''$ would be reduced. 
This means that the statement of the theorem   holds even if restricted to reduced NFA's.
\end{proof}

\begin{remark}
Note that  the previous theorem is in contrast with what happens when we consider the problem of deciding whether an NFA is Wheeler, instead of whether it accepts a Wheeler language: in that case, restricting the problem to reduced NFA's makes it solvable in polynomial time.
\end{remark}

\section{State complexity}
As already mentioned above, a significant property on the interplay between deterministic and non-deterministic Wheeler Automata is that given a size-$n$ WNFA $\mathcal A$, there always exists a WDFA that recognizes the same language whose size is at most $2n$. 
The announced amount of states can be computed using the (classic) powerset construction. 
In other words, the  blow-up of the number of states that we might observe  when converting NFA's to DFA's, does not occur for Wheeler non-deterministic automata.
This property is a  direct consequence of an important feature of Wheeler automata: for any state $q$, the set of strings recognized by $q$---namely $I_q$---is an interval over $\pf L$ with respect to the co-lexicographic order.  

State complexity is also used to measure the complexity of operations on regular languages. In the next section we  prove that the  interval property of a Wheeler DFA can  also be exploited to prove that    the state complexity of the intersection of Wheeler languages is
significantly better than the state complexity of the intersection of general regular languages. 

\subsection{Intersecting Wheeler languages}

The state complexity of a regular language $\mt L$ is defined as the number of states of the minimum DFA $\mt D_\mt L$ recognizing $\mt L$.
The state complexity of an operation on regular languages is a function that associates to the state complexities of the operand languages the worst-case state complexity of the language resulting from the operation.
For instance, we say that the state complexity of the intersection of $\mt L_1$ and $\mt L_2$ is $mn$, where $m$ and $n$ are the number of states of $\mt D_{\mt L_1}$ and $\mt D_{\mt L_2}$ respectively. 
The bound $mn$ for the intersection can easily be proved using the state-product construction for $\mt D_{\mt L_1}$ and $\mt D_{\mt L_2}$, and it is a known fact that this bound is tight \cite{Yu1994TheSC}.

It is natural to define the Wheeler state complexity of a Wheeler language $\mt L$ as the number of states of the minimum WDFA $\mt D^W_\mt L$ recognizing $\mt L$.
In the following theorem, we show what it is the Wheeler state complexity of the intersection of two Wheeler languages $\mt L_1$ and $\mt L_2$.
% In the general case, given two DFA's $\mathcal D_1$ and $\mathcal D_2$ with $m$ and $n$ states respectively, recognizing the language $\mathcal L_1$ and $\mathcal L_2$ respectively, the minimum DFA recognizing the intersection language $\mathcal L_1 \cap \mathcal L_2$ has at most $m\times n$ states. Moreover, this bound is tight \cite{Yu1994TheSC}.
% As for determinization, in the following theorem we show that for Wheeler automata we can do better.

\begin{theorem}
\label{intersection}
Let $\mathcal D^W_{\mathcal L_1}$ and $\mathcal D^W_{\mathcal L_2}$ be the minimum WDFA's recognizing the languages $\mathcal L_1$ and $\mathcal L_2$ respectively. Then, the minimum WDFA recognizing $\mathcal L:=\mathcal L_1\cap \mathcal L_2$ has at most $|D^W_{\mathcal L_1}|+|D^W_{\mathcal L_2}|-|\Sigma|-1$ states.

\noindent This bound is tight.
\end{theorem}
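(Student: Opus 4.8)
The plan is to run the standard product construction on $\mathcal D^W_{\mathcal L_1}$ and $\mathcal D^W_{\mathcal L_2}$ and to show that on Wheeler inputs it is extremely wasteful in reachable states. Put $m=|\mathcal D^W_{\mathcal L_1}|$, $n=|\mathcal D^W_{\mathcal L_2}|$, and let $\mathcal P$ be the reachable part of the product automaton (states in $Q_1\times Q_2$, initial $(q_0^1,q_0^2)$, final pairs $F_1\times F_2$); then $\la P=\mathcal L:=\mathcal L_1\cap\mathcal L_2$. The first step is to check that $\mathcal P$ is again Wheeler. For a reachable pair $(q,r)$, the set of strings reaching it is $I_q\cap I_r$, and by the interval (path-coherence) property of WDFA's $I_q$ is a $\prec$-interval of $\pf{L_1}$ and $I_r$ a $\prec$-interval of $\pf{L_2}$, so $I_q\cap I_r$ is a $\prec$-interval of $\pf{L_1}\cap\pf{L_2}$. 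As $\mathcal P$ is deterministic these incoming languages are pairwise disjoint, hence pairwise $\prec$-comparable; ordering the states of $\mathcal P$ accordingly, properties (i) and (ii) of Definition \ref{WheelerAutomaton} follow from the facts that appending a fixed letter is $\prec$-monotone and that both coordinate maps $\alpha\mapsto\delta_1(q_0^1,\alpha)$, $\alpha\mapsto\delta_2(q_0^2,\alpha)$ are $\prec$-monotone; moreover $(q_0^1,q_0^2)$ is the minimum and has no in-going edge. Thus $\mathcal L$ is Wheeler and its minimum WDFA has at most $|\mathcal P|$ states, so it suffices to count the reachable states of $\mathcal P$.

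For the count, use that $\mathcal P$ is input-consistent: $\pf{L_1}\cap\pf{L_2}$ decomposes, in co-lexicographic order, into $\{\varepsilon\}$ followed by the $|\Sigma|$ blocks $B_a=\{$strings of $\pf{L_1}\cap\pf{L_2}$ ending in $a\}$, $a\in\Sigma$ (each $B_a$ is a $\prec$-interval because appending a smaller letter gives a $\prec$-smaller string), and every reachable non-initial state of $\mathcal P$ has incoming interval contained in one such $B_a$. Fix $a$ and let $M_a$ (resp. $N_a$) be the number of states of $\mathcal D^W_{\mathcal L_1}$ (resp. $\mathcal D^W_{\mathcal L_2}$) with in-going letter $a$; their incoming intervals partition the $a$-block of $\pf{L_1}$ (resp. $\pf{L_2}$), and intersecting with $B_a$ they induce partitions of $B_a$ into $p_a\le M_a$ (resp. $q_a\le N_a$) nonempty intervals. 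The reachable states of $\mathcal P$ with in-going letter $a$ are exactly the nonempty cells $I_q\cap I_r$ inside $B_a$, i.e. the cells of the common refinement of two interval partitions of the linear order $B_a$; since an interval partition into $k$ pieces has $k-1$ internal cut points, this common refinement has at most $p_a+q_a-1\le M_a+N_a-1$ cells when $B_a\neq\emptyset$, and none otherwise. Adding the single initial state and summing over the $|\Sigma|$ letters (using $\sum_a M_a=m-1$, $\sum_a N_a=n-1$) gives $|\mathcal P|\le 1+\sum_a(M_a+N_a-1)=m+n-|\Sigma|-1$; the only care needed is for a letter $a$ with $B_a=\emptyset$, for then the states of $\mathcal D^W_{\mathcal L_1}$/$\mathcal D^W_{\mathcal L_2}$ with that in-going letter are unused, so the corresponding terms drop from both sums and the bound still holds (under the harmless convention that every letter of $\Sigma$ occurs in $\mathcal L_1$ or in $\mathcal L_2$, enforced by shrinking $\Sigma$ otherwise).

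For tightness one exhibits $\mathcal L_1,\mathcal L_2$ making every inequality above an equality at once: no block $B_a$ empty, the internal cut points of the two induced partitions of each $B_a$ pairwise distinct (so its common refinement genuinely has $p_a+q_a-1$ cells), every state of $\mathcal D^W_{\mathcal L_1}$ and of $\mathcal D^W_{\mathcal L_2}$ having incoming interval meeting $\pf{L_1}\cap\pf{L_2}$, and $\mathcal P$ already minimal. Already over a unary alphabet the bound is attained: for $k\ge 1$ take $\mathcal L_1=\{a^j:j\ge k\}$ and $\mathcal L_2=\{a^j:j\ge 1\}$, whose minimum WDFA's have $k+1$ and $2$ states, while $\mathcal L_1\cap\mathcal L_2=\mathcal L_1$ has $k+1=(k+1)+2-|\Sigma|-1$ states; over a two-letter alphabet one gets witnesses in which neither language contains the other, by arranging the two $\equiv_\mathcal L^c$-partitions to be mutually non-refining inside a letter-block (which becomes possible once that block is not a single $a$-chain).

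The step I expect to be the real obstacle is the bookkeeping of the second paragraph: pinning the constant down to exactly $-|\Sigma|$ (rather than $m+n-1$), which rests on the observation that both partitions respect the common decomposition of $\pf{L_1}\cap\pf{L_2}$ into $\{\varepsilon\}$ and the $|\Sigma|$ letter-blocks, so those block boundaries are cut points shared by both and are thus ``paid for once'', together with a clean treatment of the degenerate cases (empty blocks, letters absent from $\mathcal L$). Verifying that $\mathcal P$ is Wheeler is routine but must be carried out with the canonical co-lexicographic orders, since that is precisely what makes the coordinate projections monotone; and the construction witnessing tightness has to make all of the above inequalities tight simultaneously, which is where a little ingenuity is required.
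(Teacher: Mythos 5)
Your proof is correct, and at its core it is the same argument as the paper's: fix a letter $a$, observe that the incoming languages of the $a$-states of the two minimum WDFA's induce two interval partitions of the $a$-block of $\text{Pref}(\mathcal L_1)\cap\text{Pref}(\mathcal L_2)$, and count the nonempty cells of their common refinement as at most $n_a+m_a-1$ (the paper phrases this via the strictly increasing index sum $f(\alpha_s)=i+j$, you via counting cut points --- identical in substance), then sum over letters and add one for $\{\varepsilon\}$. The packaging differs in two ways worth noting. First, you justify the upper bound by proving that the reachable product automaton is itself Wheeler (incoming languages are disjoint intervals, conditions (i)--(ii) follow from monotonicity), whereas the paper works purely with equivalence classes and instead proves the lemma that the common refinement of $\equiv^c_{\mathcal L_1}$ and $\equiv^c_{\mathcal L_2}$ refines $\equiv^c_{\mathcal L}$, invoking the Wheeler Myhill--Nerode theorem to conclude; your route has the side benefit of establishing the paper's subsequent Remark (that the state-product construction directly yields a WDFA of the stated size) as part of the proof, and you are also more explicit than the paper about the degenerate case of letters absent from both languages, which both proofs must implicitly exclude. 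Second, for tightness your unary witness $\mathcal L_1=\{a^j:j\ge k\}$, $\mathcal L_2=a^+$ is correct and does attain $m+n-|\Sigma|-1$ exactly, but it is degenerate ($\mathcal L_1\subseteq\mathcal L_2$, so the intersection is just $\mathcal L_1$) and covers only $|\Sigma|=1$; your two-letter witnesses are only gestured at. The paper instead constructs an explicit binary family ($A_n$ = no factor $a^{n+1}$, $B_m$ = no factor $b^{m+1}$, with minimum WDFA sizes $2n+1$ and $m+2$ and intersection of size $2n+m$), which is the piece you would need to supply to make the tightness claim as strong as the paper's.
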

\begin{proof}
First we prove that, given any two strings $\alpha,\beta \in \Sigma^*$, if $\alpha \equiv^c_{\mathcal L_1} \beta$ and $\alpha \equiv^c_{\mathcal L_2} \beta$ then $\alpha \equiv^c_{\mathcal L} \beta$. 
From $\alpha \equiv_{\mathcal L_1}\beta$ and $\alpha \equiv_{\mathcal L_2}\beta$ it follows that $\alpha \equiv_{\mathcal L}\beta$. 
Moreover, from $\alpha \equiv^c_{\mathcal L_1} \beta$ it follows that $\alpha$ and $\beta$ end with the same letter.
What it is left to prove is that for any $\gamma \in \Sigma^*$ such that $\alpha \prec \gamma \prec \beta$ it holds $\alpha \equiv_{\mathcal L} \gamma$. This follows immediately since $\alpha \equiv^c_{\mathcal L_1} \beta$ implies $\alpha \equiv_{\mathcal L_1} \gamma$ and $\alpha \equiv^c_{\mathcal L_2} \beta$ implies $\alpha \equiv_{\mathcal L_1} \gamma$.

Let $C^1_0, \dots C^1_{n-1}$ be the $\equiv^c_{\mt L_1}$-classes and let $C^2_0, \dots C^2_{m-1}$ be the $\equiv^c_{\mt L_2}$-classes; we assume that both lists are ordered co-lexicographically.
Since the $\equiv^c_{\mt L_1}$-classes are pairwise disjoint---and the same holds for the $\equiv^c_{\mt L_2}$-classes---the number of $\equiv_{\mathcal L_1\cap \mathcal L_2}^c$-classes is at most equal to the number of non-empty intersections of the form $C^1_i\cap C^2_j$, for $1\le i\le n$ and $1\le j\le m$. 
Classes that end with different characters of the alphabet must have empty intersection; a particular case are the classes $C^1_0=C^2_0=\{\varepsilon\}$, which always lead to the non-empty intersection $C^1_0\cap C^2_0 = \{\varepsilon\}$.
We will focus on classes whose elements end with a specific character, say $a$. 
Let $C^{1a}_1,\dots, C^{1a}_{n_a}$ be all the $\equiv^c_{\mt L_1}$-classes that end with $a$, co-lexicographically ordered, and let let $C^{2a}_1,\dots, C^{2a}_{m_a}$ be all the $\equiv^c_{\mt L_2}$-classes that end with $a$.
Let $k$ be the number of non-empty intersections of the form $C^{1a}_i\cap C^{2a}_j$, and let $\alpha_1 \prec \dots \prec \alpha_k$ be an ordered list containing one representatives for each non-empty intersection.
For any $1 \le s < k$, consider the strings $\alpha_s$ and $\alpha_{s+1}$. There must exist four unique indexes $i,j,i',j'$ such that $\alpha_s \in C^{1a}_i \cap C^{2a}_j$ and $\alpha_{s+1} \in C^{1a}_{i'} \cap C^{2a}_{j'}$. 
From $\alpha_s \prec \alpha_{s+1}$ it follows that both $i \le i'$ and $j \le j'$ hold, since the $\equiv^c_{\mt L_1}$-classes---and the $\equiv^c_{\mt L_2}$-classes---are pairwise disjoint and co-lexicographically ordered. 
On the other hand, it can not be the case that both $i=i'$ and $j=j'$ hold, because $\alpha_s$ and $\alpha_{s+1}$ belong to different intersections. 
Therefore we have that $i'+j' \ge i+j+1$.
The values of the function $f(\alpha_s)=i+j$ can range from 2 to $n_a+m_a$, hence there might be at most $n_a+m_a-1$ different representatives. 
Taking the sum over every possible characters of $\Sigma$ and adding the class $C^1_0\cap C^2_j = \{\varepsilon\}$, we get an upper bound of
\begin{align*}
1+\sum_{a\in\Sigma}(n_a+m_a-1)&=1+\sum_{a\in\Sigma}n_a+\sum_{a\in\Sigma}m_a-|\Sigma|=\\
&=1+(n-1)+(m-1)-|\Sigma|=n+m-|\Sigma|-1
\end{align*}
different possible representatives.

To show that the bound is tight (at least for $|\Sigma|=2$), consider the following families of languages over the alphabet $\Sigma=\{a,b\}$, with  $a\prec b$:
\begin{align*}
    &A_n:=\{\alpha\in\Sigma^*: \; a^{n+1}\text{ is not a factor of } \alpha\} \\
    &B_m:=\{\beta\in\Sigma^*: \; b^{m+1}\text{ is not a factor of } \beta\}.
\end{align*}
We can easily prove that all these languages are Wheeler. 
The minimum DFA recognizing $B_m$ is already a WDFA, see Figure \ref{Bm}, with $m+2$ states. 
A list of representatives of such classes is
\[
\varepsilon, a, b, \dots, b^m.
\]

\begin{figure}[H]
\begin{center}
\begin{tikzpicture}[->,>=stealth', semithick, initial text={}, auto, scale=.6]
\node[state, label=above:{},initial, accepting] (0) at (0,3) {$q_0$};
\node[state, label=above:{}, accepting] (1) at (8,6) {$q_1$};
\node[state, label=above:{}, accepting] (2) at (3,0) {$q_2$};
\node[state, label=above:{}, accepting] (3) at (8,0) {$q_3$};
\node[state, label=above:{}, accepting] (4) at (13,0) {$q_4$};

\draw (0) edge node {$a$} (1);
\draw (0) edge[below] node {$b$} (2);
\draw (2) edge[below] node {$b$} (3);
\draw (3) edge[below] node {$b$} (4);
\draw (1) edge[loop above] node {$a$} (1);
\draw (3) edge[below] node[xshift=8pt] {$a$} (1);
\draw (4) edge[below] node {$a$} (1);

\draw (2) edge[bend right = 10, below] node {$a$} (1);
\draw (1) edge[bend right = 10, above] node {$b$} (2);

\end{tikzpicture}
\end{center}
\caption{The minimum WDFA recognizing $B_3$.}
\label{Bm}
\end{figure}
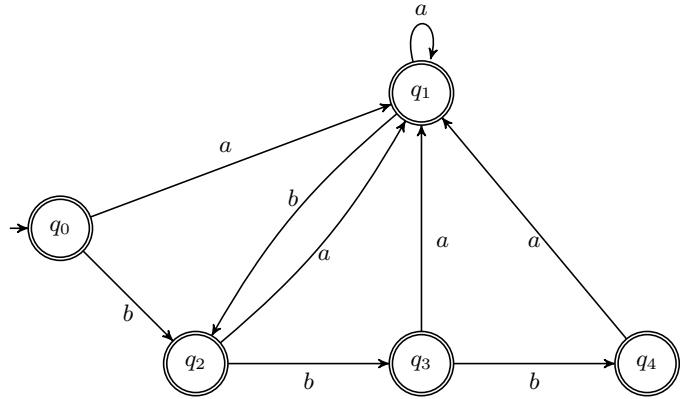

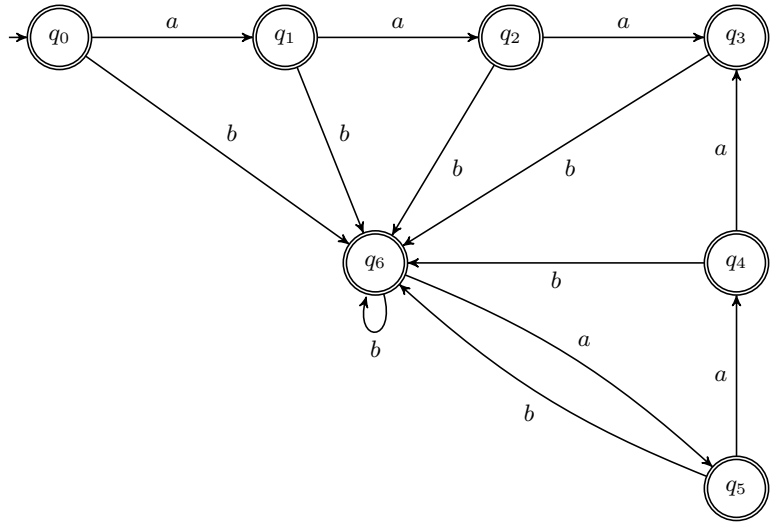
\begin{figure}
\begin{center}
\begin{tikzpicture}[->,>=stealth', semithick, initial text={}, auto, scale=.6]
\node[state, label=above:{},initial, accepting] (0) at (0,10) {$q_0$};
\node[state, label=above:{}, accepting] (1) at (5,10) {$q_1$};
\node[state, label=above:{}, accepting] (2) at (10,10) {$q_2$};
\node[state, label=above:{}, accepting] (3) at (15,10) {$q_3$};
\node[state, label=above:{}, accepting] (4) at (15,5) {$q_4$};
\node[state, label=above:{}, accepting] (5) at (15,0) {$q_5$};
\node[state, label=above:{}, accepting] (6) at (7,5) {$q_6$};

\draw (0) edge node {$a$} (1);
\draw (1) edge node {$a$} (2);
\draw (2) edge node {$a$} (3);
\draw (5) edge node {$a$} (4);
\draw (4) edge node {$a$} (3);
\draw (6) edge[bend left = 10] node {$a$} (5);

\draw (0) edge node {$b$} (6);
\draw (1) edge node {$b$} (6);
\draw (2) edge node {$b$} (6);
\draw (3) edge node {$b$} (6);
\draw (4) edge node {$b$} (6);
\draw (5) edge[bend left = 10] node {$b$} (6);

\draw (6) edge[loop below] node {$b$} (1);

\end{tikzpicture}
\end{center}
\caption{The minimum WDFA recognizing $A_3$.}
\label{An}
\end{figure}

\noindent The minimum WDFA recognizing $A_n$ has more states than the minimum DFA: for $1\le i < n$ we have that $a^i \prec a^n \prec ba^i$, hence we have to split the $\equiv_{A_n}$-class containing both $a^i$ and $ba^i$ into two different $\equiv^c_{A_n}$-classes. The automaton has $2n+1$ states, see Figure \ref{An}. 
A list of representatives of the $\equiv_{A_n}^ c$-classes is
\[
\varepsilon, a, \dots, a^n, ba^{n-1}, \dots, ba, b.
\]

We have already proved that the language $\mathcal L:= A_n \cap B_m$ might have at most $(2n+1)+(m+2)-|\Sigma|-1=2n+m$ different $\equiv^c_{\mathcal L}$-classes, hence it is sufficient to show that there are at least $2n+m$ different ones.
We claim that the $2n+m$ strings
\[
\varepsilon, a, \dots, a^n, ba^{n-1}, \dots, ba, b,\dots, b^m
\]
all belong to different $\equiv^c_{\mathcal L}$-classes. 
Strings that ends with a different amount of $a'$s (or $b'$s) belong to different $\equiv_{\mathcal L}$-classes, so there is nothing to prove.
Therefore we only have to check, for each $1\le i < n$, that $a^i$ and $ba^i$ belong to different $\equiv^c_{\mathcal L}$-classes, and again this is true since $a^i \prec a^n \prec ba^i$.
\end{proof}

\begin{remark}
Similarly to the case of determinizing a WNFA, where we can use the classic powerset construction without generating too many states, to compute a WDFA that recognizes the intersection of the languages accepted by two WDFA's $\mathcal W_1$ and $\mathcal W_2$ we can use the classic state-product construction with the certainty that it will not produce more states that necessary; that is, the number of states generated will be at most the sum of the number of states of $\mathcal W_1$ and $\mathcal W_2$.
\end{remark}

\begin{opprob}
Wheeler automata are closed under few operations: intersection and right-concatenation with a finite language, i.e. if $\mathcal L$ is a Wheeler language and $\mathcal F$ is a finite language, then also $\mathcal L \cdot \mathcal F$ is a Wheeler language. 
In general, the state complexity of the concatenation of $\mathcal L(\mathcal D_1)\cdot \mathcal L(\mathcal D_2)$ can result in an exponential blow-up in the number of states of $\mathcal D_2$ \cite{Yu1994TheSC}, even when restricted to finite languages \cite{Campeanu}. 
It remains open the question whether it is possible to obtain a better---that is, sub-exponential---upper bound for Wheeler automata.
\end{opprob}

\subsection{Computing the minimum WDFA}
Despite of the good behaviour that Wheeler automata show regarding determinization and intersection, there are cases when the state complexity of a construction is exponential. In fact, it is known \cite{ADPP} that a blow-up of  states can occur when switching from the minimum DFA recognizing a language $\mathcal L$ to its minimum WDFA. 
As a last contribution we provide an algorithm to compute the minimum WDFA starting from the minimum DFA $\mathcal D_{\mathcal L}$ of a Wheeler language $\mathcal L$, consisting in two steps: first, we describe an algorithm that extracts a \emph{fingerprint} of $\mathcal L$ starting from $\mathcal D_{\mathcal L}$, that is, a set of string containing exactly one representative of each $\equiv_\mathcal L^c$-class of $\mathcal L$.
Second, we provide an algorithm that builds the minimum WDFA recognizing $\mathcal L$ starting from any of its fingerprints. 

\begin{definition}[Fingeprint]
Let $\mathcal L$ be a Wheeler language, and let $m$ be the number of equivalence classes of $\equiv^c_{\mathcal L}$. 
A set of strings $F=\{\alpha_1, \dots, \alpha_m\}\subseteq \Sigma^*$ is called a \emph{fingerprint} of $\mathcal L$ if and only if for each $\equiv^c_{\mathcal L}$-class $C$ it holds $|F\cap C|=1$. 
\end{definition}

We start by proving that we can impose an upper bound to the length of the representative of a fingerprint.

\begin{lemma}
\label{short}
Let $\mathcal D_\mt L$ be the minimum DFA recognizing the Wheeler language $\mathcal L$ over the alphabet $\Sigma$, and let $C_1,...,C_m$ be the pairwise distinct equivalence classes of $\equiv_\mathcal L^c$. Then, for each $1 \le i \le m$, there exists a string $\alpha_i \in C_i$ such that $|\alpha_i| < n + n^2$, where $n := |\mathcal D_\mt L|$.
\end{lemma}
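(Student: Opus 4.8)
The plan is to exploit the structure of the minimum DFA $\mathcal{D}_\mathcal{L}$ together with the definition of $\equiv_\mathcal{L}^c$ as the input-consistent, convex refinement of $\equiv_\mathcal{L}$. Recall that a $\equiv_\mathcal{L}^c$-class $C_i$ is determined by three pieces of data: (a) the Myhill--Nerode class of its strings, i.e.\ a state $q$ of $\mathcal{D}_\mathcal{L}$; (b) the last character $a$ of its strings; and (c) which ``convex block'' of $\equiv_\mathcal{L}$-equivalent, $a$-ending strings it occupies, since strings reaching $q$ via an $a$-labeled edge and ending with $a$ may be split into several $\equiv_\mathcal{L}^c$-classes by interleaving strings reaching other states. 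So to bound the length of a representative $\alpha_i \in C_i$, I would first pick \emph{any} short string reaching $q$ — by the pumping argument on $\mathcal{D}_\mathcal{L}$, there is a string $\mu$ of length $< n$ with $\delta(q_0,\mu) = q$, and if $C_i$ is not the class of $\varepsilon$ we may take $\mu$ to end with the correct character $a$ (the incoming label of $q$; by input-consistency, which we may assume after a linear-time preprocessing, all in-edges of $q$ share a label, so any path reaching $q$ ends with $a$). This handles data (a) and (b) with length $< n$.

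The delicate point is data (c): the short string $\mu$ we picked may land in the ``wrong'' convex block, i.e.\ in a different $\equiv_\mathcal{L}^c$-class than $C_i$, even though it is $\equiv_\mathcal{L}$-equivalent and $a$-ending. To reach $C_i$ we must find a string $\alpha_i$ that is $\equiv_\mathcal{L}$-equivalent to $\mu$, ends with $a$, and is co-lexicographically placed inside the block $C_i$ rather than a neighboring block. The key observation is that two $\equiv_\mathcal{L}$-equivalent, $a$-ending strings $\gamma \preceq \gamma'$ lie in different $\equiv_\mathcal{L}^c$-classes precisely when some string $\eta$ with $\gamma \prec \eta \prec \gamma'$ lies in a \emph{different} $\equiv_\mathcal{L}$-class; such a ``separating'' string $\eta$ corresponds to a state $q' \neq q$ reachable by a string co-lexicographically between $\gamma$ and $\gamma'$. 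Since there are at most $n$ states, there are at most $n$ such separating $\equiv_\mathcal{L}$-classes, hence at most $n$ convex blocks of $a$-ending strings mapping to the fixed state $q$; in fact this is exactly why the total number of $\equiv_\mathcal{L}^c$-classes is linear in $n$ (cf.\ Theorem~\ref{wdeterminization}). Thus the block $C_i$ is pinned down by requiring $\alpha_i$ to lie on the correct side of at most $n$ ``threshold'' strings, one per separating state.

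Concretely, I would proceed as follows. Fix the state $q$ and character $a$ associated to $C_i$. For each state $q'$ that can be reached by an $a$-ending string, and whose reachable-by-$a$-ending-strings interval overlaps the co-lex range of $q$'s $a$-ending strings, pick a short witness string $\nu_{q'}$ (length $< n$, ending with $a$) reaching $q'$; these witnesses, sorted co-lexicographically, delimit the blocks. Now the class $C_i$ sits between two consecutive such thresholds, say we must produce a string $\alpha_i$ with $\delta(q_0,\alpha_i)=q$, $\alpha_i$ ending with $a$, and $\nu_{q'} \prec \alpha_i \prec \nu_{q''}$ for the appropriate neighbors. I would build $\alpha_i$ by taking a short string reaching $q$ and, if necessary, prepending or inserting a bounded-length pumpable factor on a cycle through $q$ to shift its co-lex position into the target window: since the co-lex order is governed by reading right-to-left, controlling a suffix of length $O(n)$ suffices to place $\alpha_i$ strictly between two fixed strings of length $< n$, while a cycle (of length $< n$) through $q$ lets us keep $\delta(q_0,\alpha_i)=q$. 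Carefully composing the base string ($< n$), one cycle traversal ($< n$), and possibly one short correcting suffix yields $|\alpha_i| < n + n^2$ (the quadratic term absorbing the product of ``number of thresholds'' and ``cost per adjustment''). The main obstacle is exactly this last maneuver — proving that one can always nudge a string reaching $q$ into the prescribed co-lex window while staying in state $q$, within the claimed length budget; the rest is routine pumping and bookkeeping.
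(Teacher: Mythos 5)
Your proposal correctly isolates the hard part of the lemma --- given the state $q$ and the ending character $a$, producing a string that lands in the \emph{right convex block} --- but it does not actually prove it, and the step you flag as ``the main obstacle'' is precisely where the whole content of the lemma lives. Two concrete problems. First, your thresholds: the convex class $C_i$ is delimited by separating strings $\eta$ from other Myhill--Nerode classes, and nothing guarantees a priori that these separators can be taken of length $<n$; picking an arbitrary short witness $\nu_{q'}$ reaching a separating state $q'$ need not place it at the correct co-lexicographic position (the strings of $I_{q'}$ may themselves split into several convex classes interleaved with those of $I_q$), so your short thresholds may delimit a partition strictly coarser than the true one, and a ``block'' between two of them may still contain several $\equiv_\mathcal{L}^c$-classes. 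Second, the ``nudging'' maneuver is asserted, not established: inserting or traversing a cycle through $q$ either appends characters at the end of the string (which moves its co-lex position arbitrarily) or modifies a prefix (which only matters when the compared strings share long suffixes), and you give no argument that one can always land strictly inside a prescribed co-lex window while staying in $I_q$ within the stated length budget. Crucially, your argument uses Wheelerness only to bound the \emph{number} of blocks; but the lemma is false without Wheelerness (a non-Wheeler language can have convex classes, indeed infinitely many, with no short representatives), so the hypothesis must enter at the existence step itself.

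The paper's proof takes the opposite, non-constructive route: assume some class $C_i$ has only representatives of length $\ge n+n^2$, take $\alpha\in C_i$ of minimum length, and pump a cycle out of its first $n$ characters to get a shorter $\beta=\alpha'\zeta$ with $\alpha\equiv_\mathcal{L}\beta$ but (by minimality) $\alpha\not\equiv_\mathcal{L}^c\beta$. Since $\alpha$ and $\beta$ share the long suffix $\zeta$ (so end with the same character), the split must come from a separator $\eta$ with $\alpha\prec\eta\prec\beta$ and $\eta\not\equiv_\mathcal{L}\alpha$; because $|\zeta|\ge n^2$ and $\zeta\dashv\eta$, running $\alpha$ and $\eta$ in parallel over the last $n^2$ characters yields a repeated pair of distinct states, hence a common cycle label $\gamma$, and pumping $\gamma^k$ produces the configuration of Theorem~\ref{polynomialW}, contradicting Wheelerness. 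If you want to salvage your constructive approach you would need to supply an argument of exactly this synchronized-double-pumping type to justify the nudging step, at which point you have essentially reproduced the paper's proof in a more roundabout way.
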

\begin{proof}
Suppose by contradiction that there exists a class $C_i$ such that for all $\alpha \in C_i$ it holds $|\alpha| \ge n + n^2$, and let $\alpha \in C_i$ be a string of minimum length. Consider the first $n+1$ states $q_0=t_0, ..., t_n$ of $\mathcal D_\mt L$ visited by reading the first $n$ characters of $\alpha$. Since $\mathcal D_\mt L$ has only $n$ states, there must exist $0 \le i,j \le n$ with $i < j$ such that $t_i=t_j$. Let $\alpha'$ be the prefix of $\alpha$ of length $i$ (if $i=0$ then $\alpha' = \varepsilon$), let $\delta$ be the factor of $\alpha$ of length $j-i$ labeling the path $t_i,...,t_j$, and let $\zeta$ be the suffix of $\alpha$ such that $\alpha = \alpha' \delta \zeta$. By construction, the strings $\alpha$ and $\beta := \alpha' \zeta$ end in the same state, hence $\alpha \equiv_\mathcal L \beta$. Moreover, from $|\beta| < |\alpha|$ and the minimality of $\alpha$ it follows that $\alpha \not\equiv_\mathcal L^c \beta$. 
\\Suppose that $\alpha \prec \beta$, the other case being completely symmetrical. Since $\alpha$ and $\beta$ share the same suffix $\zeta$, they end with the same character. This means that the strings $\alpha$ and $\beta$, which are Myhill-Nerode equivalent but not $\equiv_\mathcal L^c$ equivalent, were not split into two distinct $\equiv_\mathcal L^c$-classes due to input-consistency, therefore there must exists a string $\eta$ such that $\alpha \prec \eta \prec \beta$ and $\eta \not\equiv_\mathcal L \alpha$. Formally, assume by contradiction that for all strings $\eta$ such that $\alpha \prec \eta \prec \beta$ it holds $\eta \equiv_\mathcal L \alpha$. Then, by definition of $\equiv_\mathcal L^c$, it would follow $\alpha \equiv_\mathcal L^c \beta$, a contradiction.
\\Let $\eta$ be a string such that $\alpha \prec \eta \prec \beta$ and $\eta \not\equiv_\mathcal L \alpha$. From $\zeta \dashv \alpha, \beta$ it follows that $\zeta \dashv \eta$, so we can write $\eta = \eta' \zeta$ for some $\eta' \in \Sigma^*$. Recall that by construction $\alpha = \alpha' \delta \zeta$ with $|\alpha' \delta| \le n$, hence $|\zeta| \ge n^2$. Consider the last $n^2+1$ states $r_0, ..., r_{n^2}$ of $\mathcal D_\mt L$ visited by reading the string $\alpha$, and the last $n^2+1$ states $p_0, ..., p_{n^2}$ visited by reading the string $\eta$. Since $\mathcal D_\mt L$ has only $n$ states, there must exist $0 \le i,j \le n^2$ with $i < j$ such that $(r_i, p_i) = (r_j, p_j)$. Notice that it can't be $r_i = p_i$, otherwise from the determinism of $\mathcal D_\mt L$ it would follow $r_{n^2} = p_{n^2}$; from the minimality of $\mathcal D_\mt L$ it would then follow $\alpha \equiv_\mathcal L \eta$, a contradiction.
\\Let $\zeta''$ be the suffix of $\zeta$ of length $n^2-j$, and let $\gamma$ be the factor of $\zeta$ of length $j-i$ labeling the path $r_i,...,r_j$. Since $|\zeta| \ge n^2$, there exists $\zeta' \in \Sigma^*$ such that $\zeta = \zeta' \gamma \zeta''$. We can then rewrite $\alpha, \eta$ and $\beta$ as
\begin{align*}
    \alpha &= \alpha' \delta \zeta = \alpha' \delta \zeta' \gamma \zeta'' \\
    \eta &= \eta' \zeta = \eta' \zeta' \gamma \zeta'' \\
    \beta &= \alpha' \zeta = \alpha' \zeta' \gamma \zeta''.
\end{align*}
Let $k$ be an integer such that $|\gamma^k|$ is greater than $|\alpha' \delta \zeta'|$ and $|\eta' \zeta'|$. Set $\mu := \eta' \zeta'$; from $\alpha \prec \eta \prec \beta$ it follows that $\alpha' \delta \zeta' \prec \mu \prec \alpha' \zeta'$. If $\gamma^k \prec \mu$ set $\nu := \alpha' \zeta'$, otherwise set $\nu := \alpha' \delta \zeta'$. In both cases, the hypothesis of Theorem \ref{polynomialW} are satisfied, since $\gamma^k$ labels two cycles starting from the states $r_i$ and $p_i$, that we have proved to be distinct. We can conclude that $\mathcal L$ is not Wheeler, a contradiction, and the thesis follows. 
\end{proof}

%The previous lemma shows that we can always look for a fingerprint whose elements have a relatively short length.
We show now how to compute the minimum WDFA recognizing a Wheeler language $\mathcal L$ if we are given its minimum DFA $\mt D_\mt L$ and one of its fingeprints.

\begin{proposition}[Fingerprint to min WDFA]
\label{DFA to WDFA}
Let $\mathcal D_{\mathcal L}$ be the minimum automaton recognizing the Wheeler language $\mathcal L$ with $| \mathcal D_\mt L | = n$ and let $C_1,...,C_m$ be the pairwise distinct equivalence classes of $\equiv_\mathcal L^c$.
Assume that we are given a \emph{fingerprint} of $\mathcal L$, whose elements have length less than $n^2+n$.
Then it is possible to build the minimum WDFA recognizing $\mathcal L$ in $O(n^2\cdot\sigma\cdot m\log m)$ time.
\end{proposition}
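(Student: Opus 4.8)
The plan is to build, state by state, exactly the automaton $\mathcal B$ described after Theorem~\ref{wdeterminization} (see \cite{ADPP2}): its states are the $\equiv_{\mathcal L}^c$-classes, the initial state is $[\varepsilon]_{\equiv_{\mathcal L}^c}$, the final states are those classes containing a string of $\mathcal L$, and $\delta'([\alpha]_{\equiv_{\mathcal L}^c},a)=[\alpha a]_{\equiv_{\mathcal L}^c}$ whenever $\alpha a\in\pf L$. The fingerprint $F=\{\alpha_1,\dots,\alpha_m\}$ hands us one representative $\alpha_i$ per class $C_i$, so I will use these representatives as the names of the states of $\mathcal B$ and compute, for each of them, its finality and its outgoing transitions. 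As preprocessing I would (a) run every $\alpha_i$ through $\mathcal D_{\mathcal L}$ and record the state $s_i:=\delta_{\mathcal D_{\mathcal L}}(q_0,\alpha_i)$ it reaches --- note $\alpha_i\equiv_{\mathcal L}\alpha_j \iff s_i=s_j$ by minimality of $\mathcal D_{\mathcal L}$ --- and (b) sort the representatives co-lexicographically, $\beta_1\prec\dots\prec\beta_m$. Since each $C_i$ is an \emph{interval} of $(\pf L,\prec)$ and the $C_i$ partition $\pf L$, sorting the representatives also sorts the classes: writing $I_k$ for the class of $\beta_k$, the $I_k$ are consecutive intervals $I_1<\dots<I_m$ covering $\pf L$.

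The core step is computing $\delta'$. Fix a representative $\beta_k$ and a letter $a$, put $\gamma:=\beta_k a$, and let $q:=\delta_{\mathcal D_{\mathcal L}}(s_k,a)$; if $q$ is undefined then $\gamma\notin\pf L$ and $[\beta_k]$ has no $a$-transition. Otherwise I locate $\gamma$ among the sorted representatives by binary search, obtaining the index $j$ with $\beta_j\preceq\gamma\prec\beta_{j+1}$ (with the conventions that $\beta_0$ is below everything and $\beta_{m+1}$ above everything). Because the $I_k$ are consecutive intervals covering $\pf L$ and $\gamma\in\pf L$, the class of $\gamma$ is necessarily $I_j$ or $I_{j+1}$; to decide which, I check which of $\beta_j,\beta_{j+1}$ has $s_\cdot=q$ \emph{and} ends with the letter $a$. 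If $\gamma\in I_l$ then $\beta_l$ is $\equiv_{\mathcal L}$-equivalent to $\gamma$ (same class), hence reaches $q$, and ends with the same letter $a$ as $\gamma$, so the correct candidate passes the test. The crucial point --- and the step I expect to be the main obstacle to state cleanly --- is that \emph{at most one} candidate passes: two consecutive $\equiv_{\mathcal L}^c$-classes cannot simultaneously agree on the reached state of $\mathcal D_{\mathcal L}$ and on their last letter, for otherwise their union would be an interval of $\pf L$ all of whose elements are pairwise $\equiv_{\mathcal L}$-equivalent and end with the same letter, hence (by the very definition of $\equiv_{\mathcal L}^c$) a single $\equiv_{\mathcal L}^c$-class, contradicting $I_j\neq I_{j+1}$. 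This small "disambiguation lemma" is what makes the binary search conclusive; everything else in this step is bookkeeping.

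For the remaining data: the initial state is the class of $\varepsilon$, and since $\varepsilon$ is the $\prec$-minimum of $\pf L$ it lies in $I_1$, so the initial state of $\mathcal B$ is the class named $\beta_1$ (or, uniformly, one more application of the locate-and-disambiguate routine with $\gamma=\varepsilon$); a class $I_k$ is final iff its representative is accepted by $\mathcal D_{\mathcal L}$, i.e.\ iff $s_k$ is a final state of $\mathcal D_{\mathcal L}$, because all strings of a class reach the same state of $\mathcal D_{\mathcal L}$. The automaton produced is, state by state, transition by transition, exactly $\mathcal B$, so by Theorem~\ref{wdeterminization} it is the minimum WDFA recognizing $\mathcal L$; no separate verification of Wheelerness is needed.

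Finally, the running time. Every string we touch ($\beta_i$, or $\gamma=\beta_k a$) has length $O(n^2)$ by hypothesis, so a single co-lexicographic comparison and a single pass through $\mathcal D_{\mathcal L}$ each cost $O(n^2)$. Preprocessing (a) costs $O(m\,n^2)$ and the sort in (b) costs $O(m\log m)$ string comparisons, i.e.\ $O(n^2\,m\log m)$. There are at most $\sigma m$ transitions to resolve, each costing one binary search ($O(\log m)$ comparisons, so $O(n^2\log m)$) plus $O(n^2)$ to form $\gamma$ and recompute $q$ and $\gamma$'s last letter; this sums to $O(n^2\,\sigma\,m\log m)$, which dominates the preprocessing and yields the claimed bound.
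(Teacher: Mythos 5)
Your construction is correct and follows essentially the same route as the paper: sort the fingerprint co-lexicographically, binary-search for the position of $\alpha_j\cdot c$, and disambiguate between the two neighbouring representatives using Myhill--Nerode equivalence (via states of $\mathcal D_{\mathcal L}$) together with the last character, exactly as in the paper's case analysis. Your explicit "disambiguation lemma" (two consecutive $\equiv_{\mathcal L}^c$-classes cannot agree on both the reached state of the minimum DFA and the final letter, since their union would then be a single convex class) is precisely the fact the paper uses implicitly in its case 3(c), and your complexity accounting matches the claimed $O(n^2\cdot\sigma\cdot m\log m)$ bound.
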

\begin{proof}
Let $\{ \alpha_1, ..., \alpha_m \}$ be a fingerprint of $\mathcal L$ and let $\mathcal D_\mt L$ be the minimum DFA recognizing $\mathcal L$. We can assume without loss of generality that $\alpha_1 \prec ... \prec \alpha_m$. We build the automaton $\mathcal D_\mt L^W=(Q, \alpha_1, \delta, F,\Sigma)$, where the set of states is $Q = \{ \alpha_1, ..., \alpha_m \}$ and the set of final states is $F = \{ \alpha_j:\; \alpha_j \in \mathcal L \}$. The transition function $\delta$ can be computed as follow. For all $1 \le j \le m$ and for all $c \in \Sigma$, check whether $\alpha_j \cdot c \in \pf L$. If $\alpha_j \cdot c \notin \pf L$, there are no edges labeled $c$ that exit from $\alpha_j$. If instead $\alpha_j \cdot c \in \pf L$, in order to define $\delta(\alpha_j, c)$ we just have   to determine the $\equiv_{\mathcal L}^c$-class of the string $\alpha_j\cdot c$ (see Theorem \ref{wdeterminization}). We first  locate the position of $\alpha_j \cdot c$ in the intervals defined by  $\alpha_1 \prec ... \prec \alpha_m$ using a binary search. There are three possible cases.
\begin{enumerate}
    \item $\alpha_j \cdot c \preceq  \alpha_1$. Then by the properties of $\equiv_{\mathcal L}^c $ it easily follows $\alpha_j \cdot c \equiv_{\mathcal L}^c \alpha_1$  and we define  $\delta(\alpha_j, c) = \alpha_1$.
\item $\alpha_m \preceq \alpha_j \cdot c$. Similarly to the previous case, we have  $\alpha_j \cdot c \equiv_{\mathcal L}^c \alpha_m$  and we define $\delta(\alpha_j, c) = \alpha_m$.
\item There exists $s$ such that $\alpha_s \preceq \alpha_j \cdot c \preceq \alpha_{s+1}$. It can not be the case that both $\alpha_jc \not\equiv_\mathcal L \alpha_s$ and $\alpha_jc \not\equiv_\mathcal L \alpha_{s+1}$, since $\{ \alpha_1, ..., \alpha_m \}$ is a fingerprint of $\mathcal L$ and $\equiv_{\mathcal L}^c$-classes are intervals in $\pf L$. Hence we distinguish three cases.
\begin{enumerate}
    \item $\alpha_s \equiv_\mathcal L \alpha_j\cdot c \not \equiv_\mathcal L \alpha_{s+1}$. Then 
    $\alpha_j \cdot c \equiv_{\mathcal L}^c \alpha_s$ and we define $\delta(\alpha_j, c) = \alpha_s$.
    \item $\alpha_s \not\equiv_\mathcal L \alpha_j\cdot c \equiv_\mathcal L \alpha_{s+1}$. Then $\delta(\alpha_j, c) = \alpha_{s+1}$.
    \item $\alpha_s \equiv_\mathcal L \alpha_j\cdot c \equiv_\mathcal L \alpha_{s+1}$. Since $\{ \alpha_1, ..., \alpha_m \}$ is a fingerprint of $\mathcal L$, it is either $c = \text{end}(\alpha_jc) = \text{end}(\alpha_s)$, in which case $\alpha_j \cdot c \equiv_{\mathcal L}^c \alpha_s$ and we define $\delta(\alpha_j, c) = \alpha_s$, or $c = \text{end}(\alpha_{s+1})$, in which case $\alpha_j \cdot c \equiv_{\mathcal L}^c \alpha_{s+1}$ and we define $\delta(\alpha_j, c) = \alpha_{s+1}$ (where by $\text{end}(\beta)$ we denote the last letter of the string $\beta$, for $\beta \in \Sigma^+$).
\end{enumerate}
\end{enumerate}
\end{proof}

To complete the construction, we show how to extract a \emph{fingerprint} of a Wheeler language $\mathcal L$ starting from its minimum DFA.
We first need to prove the following Lemma.

\begin{lemma}
\label{bounded string}
Given a DFA $\mathcal D$ with $n$ states, a state $q$ and a string $\gamma \notin I_q$ with $|\gamma|\le n^2+n$, we can find in polynomial time, if it exists, the greatest (smallest) string in $I_q$ that is smaller (greater) than $\gamma$ and has length at most $n^2+n$.
\end{lemma}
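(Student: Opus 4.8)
The plan is to reduce the problem to a shortest-path / reachability computation on an auxiliary product automaton that tracks, simultaneously, (a) the current state of $\mathcal D$ reached by the string being built, (b) the position in $\gamma$ up to which the string read so far agrees with a prefix of $\gamma$, and (c) whether the string has already been forced to be strictly smaller (resp. larger) than $\gamma$ in the co-lexicographic order. Since the co-lexicographic order compares strings from the right, it is convenient to work with the \emph{reversed} automaton $\mathcal D^R$ (states of $\mathcal D$, edges reversed, former accepting set becomes the set of allowed sources, $q_0$ the target) and to build the candidate string from its last character backwards; a string $\alpha \in I_q$ corresponds exactly to a path in $\mathcal D^R$ from $q$ back to $q_0$, read in reverse. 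In this reversed view, comparing $\alpha$ against $\gamma$ co-lexicographically becomes the ordinary lexicographic comparison of their reverses, which can be done left-to-right as we construct the reversed path.

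Concretely, I would take the product $\mathcal D^R \times P_\gamma$, where $P_\gamma$ is the $(|\gamma|+2)$-state "prefix-of-$\gamma^R$" gadget with states $\{0,1,\dots,|\gamma|, \bot\}$: state $k$ means "the reversed string read so far equals the length-$k$ prefix of $\gamma^R$", and $\bot$ means "already diverged". From state $(p,k)$ on letter $a$ we move to $(p',k+1)$ if $a = \gamma^R[k+1]$, and to $(p', \bot)$ otherwise, where $p'$ ranges over $\mathcal D^R$-successors of $p$; in the $\bot$ case we additionally record, the first time divergence happens, the sign of the comparison $a$ versus $\gamma^R[k+1]$ (three outcomes: smaller, larger, or — if $k = |\gamma|$ already, i.e. the built reversed string has $\gamma^R$ as a proper prefix — the built string is co-lex larger, being a proper extension on the right... wait, I should be careful here, so this bookkeeping is where the one genuinely fiddly point lives). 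We then look for a path in this product from $(q, 0)$ to a vertex $(q_0, \cdot)$ whose comparison verdict is "$\alpha$ smaller than $\gamma$" (for the "greatest string $\prec \gamma$" version) — and among all such, BFS/Dijkstra with the right tie-breaking finds, in polynomial time, the co-lexicographically greatest one of bounded length, since restricting to strings of length $\le n^2+n$ just means restricting paths to that many edges, which keeps the search space polynomial.

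The main obstacle — and the only place where real care is needed rather than routine bookkeeping — is making the greedy/shortest-path search actually return the \emph{greatest} (resp. smallest) feasible string, not merely \emph{some} feasible string, while simultaneously respecting the length bound $n^2+n$. The clean way is a greedy construction built character-by-character from the most significant end (i.e. the left end of $\alpha$, equivalently: first fix what $\gamma^R$-agreement prefix length $k^*$ we commit to, then the diverging character, then complete): at each step choose the largest admissible next character such that the partial string can still be completed to an element of $I_q$ of total length $\le n^2+n$ with the correct final verdict, testing "can still be completed" by a bounded reachability query in the product automaton. One must verify that this greedy choice is safe — that committing to the locally-largest extendable character never strands us — which follows from the interval-like structure of the set of completions together with the fact that reachability within a bounded number of steps is monotone; and one must double-check the edge cases $|\alpha| \le |\gamma|$ versus $|\alpha| > |\gamma|$ and the treatment of the suffix-relation (when $\gamma \dashv \alpha$ or vice versa), which is exactly the $\dashv/\vdash$ case-analysis already used in the proof of Theorem \ref{prec not eq}. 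Everything else is a polynomial-size graph search, so the running time is clearly polynomial in $n$ (and in $|\gamma| \le n^2 + n$, hence still polynomial in $n$).
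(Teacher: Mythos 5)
Your plan is essentially the paper's own argument in different packaging: the paper's backward-built sets $S_i=\{p\in Q:\ p\overset{\gamma_i}{\leadsto}q\}$ together with its precomputed min/max-string-per-(state, length) table are exactly your product of the reversed automaton with the prefix-of-$\gamma^R$ gadget plus a bounded-reachability oracle, and both proofs reduce the answer to the same decomposition --- longest common suffix with $\gamma$, one divergence character, then an extremal remainder of bounded length. The fiddly points you flag are real but resolve as you expect: the suffix cases are settled by $\gamma\dashv\alpha\Rightarrow\gamma\prec\alpha$ (so they only matter for the ``smallest greater'' direction, where the paper treats them as a separate subproblem via $S_{|\gamma|}$), and the correct search primitive is indeed your most-significant-character-first greedy (not BFS/Dijkstra, the one framing in your sketch that does not directly apply), which is safe for the standard reason that fixing a larger character at the most significant differing position dominates every completion of any smaller choice.
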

\begin{proof}
Let UB the the upper bound $\text{UB}=n^2+n$. Using dynamic programming, we can extract a $n\times$UB table storing, for each $(i,j)$, the smallest and the greatest string in $I_{q_i}$ of length at most $j$ (see [ADPP]).
Given a string $\alpha$, we use the notation $\alpha[i]$ to denote the $i$-th to last character of $\alpha$ (or $\varepsilon$ if $i>|\alpha|$),  and  the notation $\alpha_i$ to denote the suffix of $\alpha$ of length $i$. In particular we have $\alpha_{i+1}=\alpha[i+1]\cdot\alpha_i$.
In this Lemma we are interested only in strings with length less than UB, therefore every string (subset of strings) that will be mentioned has to be intended as an element (subset, respectively) of $\Sigma^{\le\text{UB}} = \{\alpha \in \Sigma^*: \; |\alpha|\le \text{UB}\}$.

We want to find the greatest string in $I_q$ that is smaller than $\gamma$. 
Note that if $\gamma$ is the suffix of a string $\alpha$, then $\gamma \prec \alpha$ so we do not have to worry about strings ending with $\gamma$.
Note also that the greatest string smaller than $\gamma$ must maximize the length of the longest suffix it has in common with $\gamma$.
Therefore, we look for all the states of $\mt D$ starting from which it is possible to read the longest proper suffix of $\gamma$ that ends in $q$. 
To do that, for each $1\le i < |\gamma|$ we build the set $S_i=\{p\in Q:\; p \overset{\gamma_i}{\leadsto} q\}$. 
We start from the set $S_0=\{q\}$ and to build $S_{i+1}$ from $S_i$ we simply follow backward the edges labeled $\gamma[i+1]$.
Every time we determine a set $S_i$, we check if there exists at least one incoming edge with a label strictly less than $\gamma[i+1]$. 
If this is the case, we keep in memory $S_i$ as the last set we built with such property; previously stored sets can be overwritten. 
This procedure ends either when we find an $S_i$ that is empty or when we successfully build the last set $S_{|\gamma|-1}$. 
If we did not store any of the $S_i$ we have built, then there is no string in $I_q$ smaller than $\gamma$.
%: suppose by contradiction that exists $\alpha \in I_q$ such that $\alpha \prec \gamma$, let $\eta$ be the maximum common suffix of $\alpha$ and $\gamma$ and set $d:=|\eta|$; note that $d<|\gamma|$ since $\gamma\notin I_q$ and $\gamma$ is not a suffix of $\alpha$. 
%Then we have $\alpha=\alpha'\eta$ and $\gamma=\gamma'\eta$ for some $\alpha',\gamma'$. 
%From $\alpha \prec \gamma$ it follows $\alpha[d+1] \prec \gamma[d+1]$. 
% Let $v$ be the state reached from the initial state by reading $\alpha'$ and let $u$ be its predecessor along the computation of $\alpha'$.
% From $\alpha \in I_q$ it follows $v \overset{\eta}{\leadsto}q$, hence $v \in S_d$ by definition of $S_d$. 
% Moreover $v$ has an incoming edge starting from $u$ labeled $\alpha[d+1]$ with $\alpha[d+1] \prec \gamma[d+1]$, therefore we should have stored $S_d$, a contradiction.
If instead we have stored at least one $S_i$, we consider the last one stored (that is, the only one that has not been overwritten), say $S_k$. 
Clearly, the computation of any string in $I_q$ smaller than $\gamma$ that maximizes the length of the longest suffix it has in common with $\gamma$  must reach a state of $S_k$ at his $k$-th to last step. 
Therefore, let $c$ be the greatest label smaller than $\gamma[k+1]$ that enters $S_k$ (note that $c$ must exists since we stored $S_k$), and let $S$ be the set of states that can reach $S_k$ by an edge labeled $c$. 
Using the table computed at the very beginning of this lemma, we can easily find, if it exists, the greatest string $\bar\alpha$ of length at most UB$-(k+1)$ that can reach a state of $S$. 
Then, the greatest string in $I_q$ that is smaller than $\gamma$ is $\bar\alpha\cdot c \cdot \gamma_k$.

To find the smallest string in $I_q$ that is greater than $\gamma$, we split the problem into two sub-problems: 1) find the smallest string in $I_q$ that is greater than $\gamma$ but has not $\gamma$ as a suffix and 2) find the smallest string in $I_q$ that has $\gamma$ as a suffix.
The first problem is a symmetric version of the one discussed above, and can be solved in a similar way: we use exactly the same sets $S_i$, but this time we store a set $S_i$ if there exists at least one incoming edge with a label strictly greater than $\gamma[i+1]$.
To also solve the second problem, instead of stopping when computing $S_{|\gamma|-1}$ we carry on and compute $S_{|\gamma|}$. We do this since the following implication holds: there exists at least one string in $I_q$ that has $\gamma$ as a suffix iff $S_{|\gamma|}$ is not empty and there is at least one string of length at most $UB-|\gamma|$ that can reach a state of $S_{|\gamma|}$. 
If $S_{|\gamma|}\ne \emptyset$, we use again the table to determine, if it exists, the smallest string $\bar\beta$ of length at most $UB-|\gamma|$ that can reach a state of $S_{|\gamma|}$.
%, since the smallest string in $I_q$ that has $\gamma$ as a suffix is $\bar\beta\cdot\gamma$.
Lastly, we confront $\bar\beta\cdot\gamma$ with the string obtained by solving the first problem and we choose the smaller one.
\end{proof}

As a last step, Algorithm \ref{alg} generates a fingerprint of a language $\mt L$ starting from the minimum DFA $\mt D_\mt L$. 
The algorithm uses the subroutines described in Lemma \ref{bounded string}: given a DFA $\mt D$ with set of states $Q=\{q_0,\dots,q_{n-1}\}$ and two strings $m, m'\in \text{Pref}(\la D)$ with $m\in I_{q_k}$ (for some $0\le k \le n-1$),
\begin{itemize}
    \item MinMaxPair$(\mt D)$ returns the set of pairs $(m_0,M_0),\dots,(m_{n-1},M_{n-1})$, where $m_i$ is the co-lexicographically smallest string in $I_{q_i}$ of length at most $n^2+n$, and $M_i$ is the greatest.
    \item GreatestSmaller$(m, m', \mt D)$ returns the greatest string in $I_{q_k}$ smaller than $m'$ of length at most $n^2+n$.
    \item SmallestGreater$(m, m', \mt D)$ returns the smallest string in $I_{q_k}$ greater than $m'$ of length at most $n^2+n$.
\end{itemize}
\begin{algorithm}   
    \caption{Min DFA to fingerprint}\label{alg}
    \begin{algorithmic}[1]
        \Require{The minimum DFA $\mt D_\mt L$ recognizing $\mt L$}
        \Ensure{ A fingerprint of $\mt L$}
        \Statex
        %\State $\mt D'\leftarrow$ InputConsistent$(\mt D_\mt L)$ 
        %\Comment{A subroutine that makes $\mt D_\mt L$ input-consistent}
        \State $\tau \leftarrow$ MinMaxPairs$(\mt D_\mt L)$
        \Comment{We initialize a set of $|\mt D_\mt L|$ pairs of strings}
        \Statex
        \While{there exist $(m, M),(m', M')\in \tau$ such that $m \prec m' \prec M$}
            \State $M_1\leftarrow$ GreatestSmaller$(m, m', \mt D_\mt L)$
            \State $m_2\leftarrow$ SmallestGreater$(m, m', \mt D_\mt L)$
            \State $\tau\leftarrow\tau\setminus\{(m, M)\}$
            \State $\tau\leftarrow\tau\cup\{(m, M_1), (m_2, M)\}$
        \EndWhile
        \Statex
        \State $\tau\leftarrow$ Expand$(\tau)$
        \State \textbf{return} the first component of each element of $\tau$
    \end{algorithmic}
\end{algorithm}
At each iteration of the while cycle, we check the existence of two overlapping pairs $(m, M), (m', M')$ and replaces the first one with two new pairs $(m, M_1)$ and $(m_2, M)$. 
As we will prove in the Appendix, this cycle always ends. Clearly, when we exit the cycle $\tau$ can not contain overlapping pairs.
We will also prove that, at this point, each pair $(m,M)\in \tau$ satisfy the following properties:
\begin{enumerate}
    \item $m$ and $M$ belong to the same $\equiv_\mt L$-class;
    \item if there exists a Wheeler class $C$ such that $m\prec C\prec M$, then $C\subseteq [m]_{\equiv_\mt L}$.
\end{enumerate}
Lastly, we use the subroutine Expand to extract, from each pair $(m,M)\in \tau$, a representative of all the Wheeler classes $C$ (if any exists) such that $m\prec C\prec M$.
Since property 1-2 hold, if $\text{end}(m)=\text{end}(M)$ there are no Wheeler classes $C$ such that $m\prec C\prec M$; moreover, $m$ and $M$ belong to the same Wheeler class, so we leave the pair $(m, M)$ unchanged.
Otherwise, if $\text{end}(m)\neq \text{end}(M)$ and there is  a Wheeler class $C$ is such that $m\prec C\prec M$, it must be the case that  the strings in $C$ end with a character that differs from both the last character of $m$ and the last one of $M$. For each character $c$ such that $\text{end}(m)\prec c \prec \text{end}(M)$, we check whether there exists a string $\alpha_c\in I_{\delta(q_0,m)}$ such that $\text{end}(\alpha_c)=c$.
Every time we find an $\alpha_c$ with such property, we add to $\tau$ the pair $(\alpha_c,\alpha_c)$. 
%Since property 1-2 hold, if a Wheeler class $C$ is such that $m\prec C\prec M$ it must be the case that \notaG{ $\text{end}(m)\neq \text{end}(M)$ and } the strings in $C$ end with a character that differs from both the last character of $m$ and the last one of $M$.
%If $\text{end}(m)=\text{end}(M)$, then there are no Wheeler classes $C$ such that $m\prec C\prec M$; moreover, $m$ and $M$ belong to the same Wheeler class, so we leave the pair $(m, M)$ unchanged. 
%Otherwise, for each character $c$ such that $\text{end}(m)\prec c \prec \text{end}(M)$ we check whether exist a string $\alpha_c\in I_{q_m}$ such that $\text{end}(\alpha_c)=c$, where $q_m=\delta(q_0, m)=\delta(q_0, M)$.
%Every time we find an $\alpha_c$ with such property, we add to $\tau$ the pair $(\alpha_c,\alpha_c)$. 
As a last step, we replace the pair $(m, M)$ with the pairs $(m, m)$ and $(M,M)$, since from $\text{end}(m)\neq \text{end}(M)$ it follows that $m$ and $M$ belong to different Wheeler classes.

After the Expand subroutine has been run, $\tau$ will contain exactly one pair for each Wheeler class $C$ of $\mt L$, whose components both belong to $C$.
By extracting from each pair one of its components, e.g. the first one, we obtain a fingerprint of $\mt L$.

\section{Conclusions}

In this paper we considered a number of computational complexity problems related with the general idea of  ordering states of a finite automaton. In general, ordering objects might  lead to significant simplification of otherwise difficult storage and/or manipulation problems. In fact, ordered finite automata  can ease such tasks as index construction, membership testing, and even determinization of NFA's accepting a given regular language. Clearly, a key point is the complexity of \emph{finding} the right order from scratch. Even though this turned out to be simple on DFA's and, as opposed to the non-ordered case, turning a Wheeler NFA into a Wheeler DFA is polynomial, things become much more tricky when the input automaton is a non-deterministic one. This issue, together with some of its natural variants, were the main theme of this paper. We proved that a number of ordered-related results, ultimately guaranteeing the existence of polynomial time algorithms on  DFA's, are much more complex if the starting automaton is an NFA---even in the case of a reduced NFA. 

The complexity bounds we studied and presented here suggest the ``dangerous'' directions along which generalisations can be searched. 

An interesting theme we did not explore here, is the possibility of exploiting  order over more general classes of automata and languages. Can ordering states of a push-down (deterministic) automata or even a (deterministic) Turing Machine, be a way to obtain a simplification of interesting problems  over the recognized languages?
%Are there  different possible orders on strings (instead of the co-lexicographical one) that can be used to simplify problems  over finite (deterministic) automata and other classes of finite automata?
Can we define an order over the states of a DFA and use this order to simplify problems relative to 
language over infinite strings?

The order imposed on states of an accepting automaton is reflecting, in a variety of ways, the underlying properties of the ordering on strings reaching that state. The co-lexicographic order seems to be an especially effective one. However, exploring this relationship---and the corresponding complexity bounds---in interesting and expressive contexts, can be extremely stimulating in terms challenging formal language problems.

\bibliographystyle{splncs04}
\bibliography{bibliography}

\newpage
\section{Appendix}

\begin{proof}[Proof of Proposition \ref{general polynomialW}]
Let $\mathcal D_\mt L = (\hat Q, \hat q_0, \hat \delta, \hat F, \Sigma)$ be the minimum DFA recognizing $\mathcal L$. Clearly $\mathcal D_\mt L$ has at most $n$ states.
\\($\Longleftarrow$) From condition 2 it follows that $\delta(q_0,\mu)=\delta(q_0,\mu\gamma)$, thus $\mu\equiv_\mt L\mu\gamma$. Therefore, in $\mt D_\mt L$ we also have $\delta(\hat q_0,\mu)=\delta(\hat q_0,\mu\gamma)$. Similarly, it holds $\delta(\hat q_0,\nu)=\delta(\hat q_0,\nu\gamma)$.
It follows that $\mu, \nu,\gamma$ satisfy condition 1-3 of Theorem \ref{polynomialW}, hence $\mt L$ is not Wheeler.
\\$(\Longrightarrow)$
Since $\mathcal L$ is not Wheeler, let $\hat\mu, \hat\nu, \hat\gamma$ be three strings satisfying conditions 1-4 of Theorem \ref{polynomialW}. The DFA $\mathcal D_\mt L$ has at most $n$ states, hence the length of $\hat\mu,\hat\nu$ and $\hat\gamma$ is bounded by $n^3+2n^2+n+2$.
We have $\hat\mu\hat\gamma^* \subseteq \pf L$, so let $t_0 = q_0, t_1, \dots, t_m$ be a run of $\hat\mu\hat\gamma^n$ over $\mathcal D$. We set $u := |\hat\mu|$ and $g := |\hat\gamma|$, and consider the list of $n+1$ states 
\[
t_u, \; t_{u+g}, \; t_{u+2g}, \; \dots, \; t_{u+ng} = t_m 
\]
Since $\mathcal D$ has $n$ states, there must exist two integers $0 \le h < k \le n$ such that $t_{u+hg} = t_{u+kg}$. That is, there exists a state $p := t_{u+hg}$ such that $p \in \delta\left(q_0, \hat\mu\hat\gamma^h\right)$ and $\hat\gamma^{k-h}$ labels a cycle starting from $p$. We can repeat the same argument for a run of $\hat\nu\hat\gamma^n$ over $\mathcal D$ to find a state $r$ and two integers $h', k'$ such that $r \in \delta(q_0, \hat\nu\hat\gamma^{h'})$ and $\hat\gamma^{k'-h'}$ labels a cycle starting from $r$. 
We define the constant $h''$ as the minimum multiple of $(k-h)\cdot(k'-h')$ greater than $\max\{h+1,h'+1\}$; it can be proved that $h''\le n^2$, and by construction $\hat\gamma^{h''}$ labels both a cycle starting from $p$ and one starting from $r$. 
We then define the strings
\begin{align*}
\mu &:= \hat\mu \hat\gamma^{h} \\
\nu &:= \hat\nu \hat\gamma^{h'} \\
\gamma &:= \hat\gamma^{h''}, \\
\end{align*}
which satisfy conditions 2 and 3. Note that we have chosen a $h''$ such that $|\gamma|>|\mu|, |\nu|$, so that $\gamma\ndashv\mu,\nu$.
Condition 4 is satisfied since $|\hat\gamma| \le n^3+2n^2+n+2$ and $h''\le n^2$.
Lastly, condition 1 is satisfied since the strings $\hat\mu$ and $\hat\mu\hat\gamma^h$ lead to the same state of $\mathcal D_\mt L$, thus $\hat\mu \equiv_\mathcal L \hat\mu\hat\gamma^h$. Similarly, we have $\hat\nu \equiv_\mathcal L \hat\nu\hat\gamma^{h'}$. 
The thesis then follows from $\hat\mu\not\equiv_\mt L\hat\nu$.
\end{proof}

\begin{proof}[Termination and correctness of Algorithm \ref{alg}]
%Starting from the minimum DFA $\mathcal D_{\mathcal L}$ recognizing the language $\mathcal L$, we want to generate a fingerprint of $\mathcal L$.
%First of all, we turn $\mathcal D_{\mathcal L}$ into the input-consistent automaton $\mathcal D'$ by splitting each state in at most $|\Sigma|$ copies. This can be done in linear time and we have $|\mathcal D'|\in O(n|\Sigma|)$. 
%Using dynamic programming, 
We start by analyzing the subroutines used by the algorithm.
The subroutine MinMaxPairs can be computed simply by looking at the $n\times$UB table described in Lemma \ref{bounded string}.
%Using the $n\times$UB table described in Lemma \ref{bounded string}, we compute, for each state $q_i$ of $\mathcal D'$, the pair $c_i=(m_{q_i}, M_{q_i})$ consisting of the minimum and the maximum string, with respect to the co-lexicographic order, in $I_{q_i}$ of length at most $n^2+n$.
Note that Lemma \ref{short} ensures that %the upper bound $n^2+n$ on the length of the strings is not restrictive, that is 
$m_{q_i}$ (respectively, $M_{q_i}$) belongs to the smaller (greatest) Wheeler class contained in $I_{q_i}$.
Subroutines GreatestSmaller and SmallestGreater are thoroughly described in Lemma \ref{bounded string}.

We now prove that Algorithm \ref{alg} always terminates.
Let $\tau_i$ be the set $\tau$ at the end of the $i-th$ iteration of the while cycle,
% Let $\tau_0$ be the set of pairs
% \[
% \tau_0 = \{c_1,...,c_{n}\},
%\]
%where $n$ is the number of states of $\mathcal D_\mt L$, and 
%returned by MinMaxPairs, 
and for any pair of strings $c=(m, M)$ let $w(c)$ denote the number of Wheeler classes $C$ such that $m \prec C \prec M$ and $C\nsubseteq[m]_{\equiv_\mt L}$. 
Given a set of pairs $\tau$, let $w(\tau)$ denote the value
\[
w(\tau) := \sum_{c\in \tau} w(c)\ge0.
\]
We say that two pairs $c=(m, M)$ and $c'=(m', M')$ are \textit{ordered} if either $M \prec m'$ or $M' \prec m$.
In order to prove that $w(\tau_{i+1}) < w(\tau_i)$,
%We want to build a succession of sets $\tau_0, \tau_1, \dots, \tau_p$ such that $w(\tau_{i+1}) < w(\tau_i)$ and such that $w(\tau_p) =0$: as we shall see, from such a $\tau_p$ we will be able to extract a fingerprint.
%During the construction of the succession $(\tau_i)_{0\le i\le p}$, we will maintain the following invariants:
we will maintain the following invariants:
\begin{enumerate}
    \setcounter{enumi}{-1}
    \item if $c = (m, M)$ and $c' = (m', M')$ are two distinct pairs in $\tau_i$, then $\{m, M\}\cap \{m', M'\}=\emptyset$;
    \label{0}
    \item if $c = (m, M) \in \tau_i$, then $m \preceq M$,  $m \equiv_{\mathcal L} M$,  and end$(m)=$ end$(M)$; \label{1}
    \item if $c = (m, M)$ and $c' = (m', M')$ are two distinct pairs in $\tau_i$ such that $m \equiv_{\mathcal L} m'$, then $c$ and $c'$ are ordered. Moreover, if $c$ and $c'$ are also \emph{consecutive}, that is if there is no $c^*=(m^*, M^*)$ in $\tau_i$ such that $m \equiv_{\mathcal L} m^*$ and $m \prec m^* \prec m'$, then there is no Wheeler class $C \subseteq [m]_{\equiv_{\mathcal L}}$ such that $m \prec C \prec m'$; \label{2} 
    \item let $x$ be the first or second component of any pair $c$ in $\tau_i$  and $y$ be the first or second component of any pair $c'$ in $\tau_i$. If $x \equiv^c_{\mathcal L} y$, then $c = c'$. \label{3}
    %\item the Wheeler classes of the first components of the pairs in $\tau_i$ are pairwise disjoint, and the same holds for the second components. \label{3}
    %\item for each $\equiv_{\mathcal L}$-class $C$ and for each $a\in \Sigma$ such that $C\cap (\Sigma^*\cdot a) \ne \emptyset$, there exists at least one pair $c = (m, M)$ with $m \in C\cap (\Sigma^*\cdot a)$. \label{3}
\end{enumerate}
Note that invariant \ref 0 implies that  every time we have two distinct, not ordered pairs $c=(m,M)$ and $c'=(m',M')$,   the   strict inequalities $m\prec M'$ and $m'\prec M$ hold. 
By construction, these invariants hold for $\tau_0$, which is the set returned by MinMaxPairs. 
For instance, invariant \ref{2}, \ref 3 hold since  in $\tau_0$  distinct pairs have components belonging to different $\equiv_\mathcal L$-classes.  

Invariants \ref{0}-\ref{2} can be easily proved by induction on $i$ just by looking at how new pairs are created.
We prove by induction invariant \ref{3}: suppose that it holds for $\tau_i$. 
Let $c = (m, M),c' = (m', M')$ be the pairs that meet the while condition on Line 2, and let $c_1 = (m, M_1), c_2 = (m_2, M)$ be the two pairs that replace $c$ on Line 5-6.
%and let $c=(m, M)\in \tau_i$ be the pair that will be replaced by $c_1 = (m, M_1)$ and $c_2 = (m_2, M)$, with $m \preceq M_1 \prec m' \prec m_2 \preceq M$ for some $m'$ first component of the pair $c'$ such that $c,c'$ are not ordered and  $m \not\equiv_{\mathcal L} m'$.
Note that $c,c'$ are not ordered, hence invariant 2 implies that $m$ and  $m'$ belong to different $\equiv_\mt L$-classes.
%Note that this implies that $m$ and  $m_2$ belong to different Wheeler classes.
Suppose by contradiction that the invariant does not hold for $\tau_{i+1}$, that is, there exist two distinct pairs $d, d' \in \tau_{i+1}$ such that a component $x$ of $d$ belongs to the same Wheeler class of a component $y$ of $d'$. 
By induction, it can not be the case that $d,d'\in\tau_i$. 
Therefore, at least one among $d$ and $d'$ belongs to $\{c_1, c_2\}$. Moreover, $d$ and $d'$ can not both belong to $\{c_1, c_2\}$: we have by construction that the (possibly identical) Wheeler classes of $m$ and $M_1$ are different from the Wheeler classes of $m_2$ and $M$. 
We assume, w.l.o.g., that $d$ belongs to $\{c_1, c_2\}$ whereas $d'$ doesn't; in particular $d'\in \tau_i$ and since $d'\in \tau_{i+1}$ we also have $d'\neq c$.
There are two possibilities. If $d=c_1$, it can not be the case that $x=m$, otherwise the pairs $c, d' \in \tau_i$ would violate the inductive hypothesis. 
Thus $x=M_1$.
%Since, by induction hypothesis, the invariant holds for $\tau_{i}$, and since $m \not\equiv^c_{\mathcal L} m_2$, it must be the case that at least one between $x$ and $y$ belongs to $\{M_1, m_2\}$ \notaG{spiegare meglio perché}.
%It can not be the case that $\{x,y\} = \{M_1, m_2\}$, because $M_1$ and $m_2$ belong to different Wheeler classes.
%We can assume w.l.o.g. that $x \in \{M_1 , m  _2\}$, and consider the case $x = M_1$ (hence $d = c_1$ and $d' \ne c_2$ \notaG{ because $y\neq M$ since $M_1\not \equiv_{\mathcal L}^c M$ }).
%Since $d'$ is different from both $c_1$ and $c_2$, it follows that $d' \in \tau_i$. 
From $y \equiv^c_{\mathcal L} M_1 \equiv_{\mathcal L} m$ it follows that $y \equiv_{\mathcal L} m$  and invariants \ref{1}, \ref{2} applied to $\tau_i$ imply  that $c$ and $d'$ are ordered, that is, either $y \prec m$ or $M \prec y$ holds.
If $y \prec m$ we get $y \prec m \preceq M_1$, thus $m$ belongs to the same Wheeler class of $y$.
If $M \prec y$ we get $M_1 \preceq M \prec y$, thus $M$ belongs to the same Wheeler class of $y$.
In both cases, considering $c, d'\in\tau_i$, we reach a contradiction with our inductive hypothesis.
\\\noindent If instead $d=c_2$, we use a similar argument to show that $x=m_2$ and that either $y \prec m \preceq m_2$ or $m_2 \preceq M \prec y$ hold. Since both inequalities lead to a contradiction, we can conclude that invariant \ref{3} holds.
Hence we proved that invariant \ref 3 holds for all $\tau_i$.

We can now prove that $w(\tau_{i+1}) < w(\tau_i)$. Let $(m,M_1),(m_2,M)$ be the pairs added to $\tau_i$ on Line 6 of the Algorithm \ref{alg}.
Note that if $C$ is a Wheeler class such that $m \prec C \prec M$ and $C\nsubseteq [m]_{\equiv_\mt L}$, it can not be the case that both $m \prec C \prec M_1$ and $m_2 \prec C \prec M$ occurs, since $M_1 \prec m_2$. 
Moreover, let $C'$ be the Wheeler class containing $m'$.
From invariant 2 it follows that $C'\nsubseteq [m]_{\equiv_\mt L}$, and from $M_1 \prec m' \prec m_2$ it follows that neither $m \prec C' \prec M_1$ nor $m_2 \prec C' \prec M$ holds.
Therefore we have $w(c_1)+w(c_2)\le w(c)-1$, and $w(\tau_{i+1})<w(\tau_i)$ follows.
%We have already proved that if there exist two pairs $c=(m, M)$ and $c'=(m', M')$ in $\tau_i$ such that $c$ and $c'$ are not ordered and such that $m \not\equiv_{\mathcal L} m'$, then $w(\tau_i) > 0$ and we can build a new set $\tau_{i+1}$ with $w(\tau_{i+1}) < w(\tau_i)$.

We want to prove that if $w(\tau_i) > 0$ then there exist two pairs $c=(m, M)$ and $c'=(m', M')$ in $\tau_i$ such that $c$ and $c'$ are not ordered, thus proving that $w(\tau_i)=0$ holds when we exit the while cycle.
If $w(\tau_i) > 0$, then there exists a pair $c =(m, M)$ in $\tau_i$ such that $w(c)>0$, that is, there exists a Wheeler class $C$ with $m\prec C\prec M$ and $C\nsubseteq[m]_{\equiv_\mt L}$; in particular we have $m \not\equiv^c_{\mathcal L} M$. 
By Lemma \ref{short}, there exists $\alpha \in C$  with $|\alpha|\leq n^2+n$. 
We want to prove, by induction on $j$, that for each $0 \le j \le i$ there exists a pair $c_j =(m_j, M_j) \in \tau_j$ such that $c$ and $c_j$ are not ordered and $m_j\equiv_\mathcal L \alpha$; note that $c$ may not belong to $\tau_j$ for $j < i$.
If $j=0$, let $q_k$ be the state of $\mathcal D_\mt L$ such that $\alpha \in I_{q_k}$.
The pairs $c$ and $c_k = (m_k, M_k)\in \tau_0$ are not ordered, since we have both $m \prec \alpha \prec M$ and $m_k \preceq \alpha \preceq M_k$, therefore we set $c_0 := c_k$.
\\If $0 < j < i$, suppose the thesis holds for $\tau_j$, that is,  there exists a pair $c_j = (m_j, M_j)$ in $\tau_j$ such that $c$ and $c_j$ are not ordered and $m_j \equiv_\mathcal L \alpha $.
If $c_j \in \tau_{i+1}$, we set $c_{j+1}:= c_j$.
Otherwise, $c_j$ had been split into two pairs $c_{j1}=(m_j, M_{j1})$ and $c_{j2}=(m_{j2}, M_{j})$ with $m_j \preceq M_{j1} \prec m_{j2} \preceq M_j$. By construction it holds $m_{j1}\equiv_\mathcal L m_{j2}\equiv_\mathcal L m_j\equiv_\mathcal L\alpha$.
Since $c$ and $c_j$ are not ordered, we have $m \prec M_j$ and $m_j \prec M$.
If $m \prec M_{j1}$ we have both $m \prec M_{j1}$ and $m_j \prec M$, hence $c$ and $c_{j1}$ are not ordered and we set $c_{j+1}:= c_{j1}$; similarly, if $m_{j2} \prec M$ we set $c_{j+1}:= c_{j2}$.
Otherwise we have $M_{j1}\preceq m$ and $M \preceq m_{j2}$ and strings have the following order:
\begin{equation}
\label{imp}
m_j \preceq M_{j1} \preceq m \prec \alpha \prec M \preceq m_{j2} \preceq M_j.
\end{equation}

Let $m'$ be the string used to split $c_j$: by construction it holds $M_{j1} \prec m' \prec m_{j2}$, and it can not be $m'=\alpha$ since $m_j \equiv_{\mathcal L} \alpha$.
By construction, the string $m_{j2}$ is the smallest string $\gamma$ of length at most $n^2+n$ such that $m' \prec \gamma$ and $\gamma \equiv_{\mathcal L} m_j$. Thus, if $m'\prec \alpha$, the string $\alpha$ would have all this properties, hence it would follow that $m_{j2} \preceq \alpha$, which contradicts \eqref{imp}.
Similarly, if $\alpha \prec m'$ it would follow $\alpha \preceq M_{j1}$, a contradiction. 
Therefore the condition depicted in \eqref{imp} can not occur, ending the proof of the inductive step. Hence, if $c\in \tau_i$ and  $w(c)>0$ then there exists $c'\in \tau_i$ such that $c,c'$ are not ordered and we iterate the while cycle. %This in turns means that $w(\tau_p)=0$ holds when $\tau_p$ is the last element  of the sequence. 

%To conclude the proof, we need to show that we can extract a fingerprint from the last set $\tau_p$. 
Let $\tau_p$ be the last set built before exiting the while cycle.
We need to prove that the collection of the first components of the pairs in Expand$(\tau_p)$ is a fingerprint of $\mt L$.
First we prove that all pairs in $\tau_p$ are ordered. 
%We have just proved, by contraposition, that from $w(\tau_p)= 0$ it follows that every couple of pairs whose components are in different $\equiv_{\mathcal L}$-classes, is ordered.
Let $c_i=(m_i, M_i)$ and $c_j=(m_j, M_j)$ be two distinct pairs in $\tau_p$.
If $m_i \equiv_\mathcal L m_j$, then $c_i$ and $c_j$ are ordered by invariant \ref{2}.
If instead $m_i \not\equiv_\mathcal L m_j$, suppose $c_i$ and $c_j$ are not ordered. 
Then either $m_i\prec m_j\prec M_i$ or $m_j\prec m_i\prec M_j$. In the first case,    if $C$ is the Wheeler class containing $m_j$,  we have $m_i\prec C \prec M_j$ and $w(c_i)>0$. Similarly,   the second case implies    $w(c_j)>0$ and  in both cases we reach a contradiction with  $w(\tau_p)=0$.

%From $w(\tau_p)=0$ it immediately follows that $w(c)=0$ for each $c = (m,M) \in \tau_p$, and from $w(c)=0$ and invariant \ref{1} it follows that $m \equiv^c_{\mt L} M$ \notaG{perché?}.
Second, we prove that if $C$ is a Wheeler class that is not represented by $\tau_p$, i.e. $C$ does not contain any components of any pair in $\tau_p$, then there exists a pair $(m, M)\in \tau_p$ such that $m\prec C\prec M$. 
Then the proof is complete, since the subroutine Expand extracts a representative $\alpha_c$ of $C$ and adds to $\tau_p$ the pair $(\alpha_c, \alpha_c)$. 
Let $C$ be a Wheeler class not represented by $\tau_p$, and consider the state $q_i$ in $\mathcal D_\mt L$ such that $C\subseteq I_{q_i}$.
% Let $F=[m_1]_c\prec\dots\prec [m_k]_c$ be the ordered Wheeler classes of the first components of the pairs in $\tau_p$.
% From invariant \ref{3} it follows that $i\ne j$ implies $[m_i]_c\ne [m_j]_c$.
% Moreover, we prove that every Wheeler class appears in $F$.
% Suppose by contradiction that a Wheeler class $C$ is not in $F$. Consider the state $q_i$ in $\mathcal D'$ such that $C\subseteq I_{q_i}$. 
By construction, the pair $(m_{q_i}, M_{q_i})\in \tau_0$ is such that $m_{q_i}$ (respectively, $M_{q_i}$) belongs to the smaller (greatest) Wheeler class contained in $I_{q_i}$.
Since when we build $\tau_{i+1}$ from $\tau_{i}$ we only add, and never delete, representatives of Wheeler classes, both $m_{q_i}$ and $M_{q_i}$ must appear as a component of some pair in $\tau_p$.
Therefore, it is well defined the smallest representative $m$ in $\tau_p$ such that $m\prec C$, as well as the greatest representative $M$ in $\tau_p$ such that $C\prec M$. 
The second part of invariant \ref 2 implies that $m$ and $M$ belong to the same pair, which completes the proof.
%it must be the case that $m_{q_i}\prec C\prec M_{q_i})$.
%; moreover, note that when we go from $\tau_i$ to $\tau_{i+1}$ we only add, and never delete, representatives of Wheeler classes.
% Therefore, since $C$ is not in $F$, there exist  $[m_i]_c, [m_j]_c\in F$, subsets of $I_{q_i}$, such that $[m_i]_c \prec C \prec [m_j]_c$.
% Hence, there must exist two \emph{consecutive} pairs $c=(m, M)$ and $c'=(m', M')$ such that $[m]_c \prec C \prec [m']_c$, which contradicts invariant \ref 2.
%Invariant \ref{2} implies that, given a $\equiv_{\mathcal L}$-class $D$, every Wheeler class $C\subseteq D$ must be represented by some pair in $\tau_p$, since in $\tau_0$ we introduced a representative of both the smaller an the greatest Wheeler class in $D$\notaG{spiegare meglio}; note that when we go from $\tau_i$ to $\tau_{i+1}$ we only add, and never delete, representatives. 
% Therefore, $\tau_p$ must contain exactly one pair for each Wheeler class of $\mathcal L$, and we can extract a fingerprint by taking, from each pair, either the first or the second component.
\end{proof}

\end{document}